\renewcommand{\log}{\lg}
\newcommand{\defn}[1]{\emph{\textbf{{#1}}}}
\newcommand{\poly}{\operatorname{poly}}
\newcommand{\polylog}{\operatorname{polylog}}
\newcommand{\E}{\mathbb{E}}
\newcommand{\OR}{\vee}
\newcommand{\AND}{\wedge}
\renewcommand{\paragraph}[1]{\vspace{.5 cm} \noindent \textbf{#1} }
\newtheoremstyle{slanted}
{3pt}
{3pt}
{\slshape}
{}
{\bfseries}
{.}
{.5em}
{}
\theoremstyle{slanted}
\newtheorem{theorem}{Theorem}
\newtheorem{lemma}[theorem]{Lemma}
\newtheorem{proposition}[theorem]{Proposition}
\newtheorem{corollary}[theorem]{Corollary}
\newtheorem*{remark}{Remark}
\title{Stochastic and Worst-Case Generalized Sorting Revisited}
\author{William Kuszmaul, Shyam Narayanan}
\date{MIT}
\begin{document}
\maketitle

\begin{abstract}
    The \emph{generalized sorting problem} is a restricted version of standard comparison sorting where we wish to sort $n$ elements but only a subset of pairs are allowed to be compared. Formally, there is some known graph $G = (V, E)$ on the $n$ elements $v_1, \dots, v_n$, and the goal is to determine the true order of the elements using as few comparisons as possible, where all comparisons $(v_i, v_j)$ must be edges in $E$. We are promised that if the true ordering is $x_1 < x_2 < \cdots < x_n$ for $\{x_i\}$ an unknown permutation of the vertices $\{v_i\}$, then $(x_i, x_{i+1}) \in E$ for all $i$: this Hamiltonian path ensures that sorting is actually possible.
    
    In this work, we improve the bounds for generalized sorting on both random graphs and worst-case graphs. For Erd\H{o}s-Renyi random graphs $G(n, p)$ (with the promised Hamiltonian path added to ensure sorting is possible), we provide an algorithm for generalized sorting with an expected $O(n \log (np))$ comparisons, which we prove to be optimal for query complexity. This strongly improves over the best known algorithm of Huang, Kannan, and Khanna (FOCS 2011), which uses $\tilde{O}(\min(n \sqrt{np}, n/p^2))$ comparisons. For arbitrary graphs $G$ with $n$ vertices and $m$ edges (again with the promised Hamiltonian path), we provide an algorithm for generalized sorting with $\tilde{O}(\sqrt{mn})$ comparisons. This improves over the best known algorithm of Huang et al., which uses $\min(m, \tilde{O}(n^{3/2}))$ comparisons.
\end{abstract}

\section{Introduction} \label{sec:intro}

Whereas the standard comparison-based sorting problem was solved more than half a century ago,
several variations on the problem have only recently begun to be understood.
One of the most natural of these is the so-called generalized sorting problem.

\paragraph{Generalized sorting: sorting with forbidden comparisons.}
In the generalized sorting problem, we wish to sort the vertices of a graph, 
but only certain pairs of vertices are permitted to be compared. The input to the problem is a graph $G = (V, E)$,
where the vertices represent the elements to be sorted, and the edges indicate the pairs of vertices
that can be compared; we will use $x_1, x_2, \ldots, x_n$ to denote the true ordering of the vertices $V$,
and we say that $x_i \prec x_j$ if $i < j$. The goal is to sort the vertices (i.e., discover the true ordering)
with as few comparisons as possible.
In this setting, comparisons are also referred to as \defn{edge queries},
meaning that the goal is to achieve
the minimum possible query complexity.

In order for an input to the generalized sorting problem to be valid, it is required that the true ordering $x_1 \prec x_2 \prec \cdots\prec x_n$ of the vertices $V$ appears as a Hamiltonian path in $G$. This ensures that, should an algorithm query every edge,
the algorithm will be able to deduce the true order of the vertices.

A classic special case of the generalized sorting problem is the nuts-and-bolts problem,
which considers the case where $G$ is a complete bipartite graph. The vertices on one side of the graph represent ``nuts'' and vertices on the other side of the graph represent
``bolts''. The goal is to compare nuts to bolts in order to match up each nut with its correspondingly sized bolt.
Nuts cannot be compared to other nuts and bolts cannot be compared to other bolts, which is why the graph has a complete bipartite structure.
The problem was first introduced in 1994 by Alon et al. \cite{alon1994matching}, who gave a {deterministic} $O(n \log^4 n)$-time algorithm. Subsequent work 
\cite{alon1996matching, komlos1998matching, bradford1995matching} improved the running time to $O(n \log n)$, which is asymptotically optimal.

The first nontrivial bounds for the full generalized sorting problem were given by Huang et al. \cite{huang2011algorithms}, who studied two versions of the problem:
\begin{itemize} 
\item \textbf{Worst-case generalized sorting: } In the worst-case version of the problem, $G$ is permitted to have an arbitrary structure.
A priori, it is unclear whether it is even possible to achieve $o(n^2)$ comparisons in this setting. The paper \cite{huang2011algorithms} showed that this is always possible,
{with a randomized algorithm}
achieving query complexity $\tilde{O}(n^{1.5})$ for any graph $G$, with high probability. 
\item \textbf{Stochastic generalized sorting: } In the stochastic version of the problem, every vertex pair $(u, v)$ (that is not part of the Hamiltonian path for the true order)
is included independently and randomly with some probability $p$. The paper \cite{huang2011algorithms} gave a randomized algorithm with $\tilde{O}(\min\{n / p^2, n^{3/2} \sqrt{p}\})$ query complexity,
which evaluates to $\tilde{O}(n^{1.4})$ for the worst-case choice of $p$. 
\end{itemize}

Although there have been several subsequent papers on the topic \cite{banerjee2016sorting, lu2021generalized, biswas2017improved} (which we will discuss further shortly), 
these upper bounds of $\tilde{O}(n^{1.5})$ and $\tilde{O}(n^{1.4})$ have remained the state of the art for the worst-case and stochastic generalized sorting problems, respectively.
Moreover, no nontrivial lower bounds are known. Thus it is an open question how these problems compare to standard comparison-based sorting.

\paragraph{This paper. }
The main result of this paper is an algorithm for stochastic generalized sorting with query complexity $O(n \log (pn))$.
This means that, in the worst case, the query complexity of stochastic generalized sorting is asymptotically the same as that of classical sorting.
Perhaps even more remarkably, when $p$ is small, stochastic generalized sorting is actually \emph{easier} than its classical counterpart; for example,
when $p = \frac{\polylog(n)}{n}$, the query complexity becomes $O(n \log \log n)$.

When $p = (\log n +\omega (1))/n $, which is the parameter regime in which the Erd\"os-Renyi random graph $G(n, p)$ contains at least one Hamiltonian cycle with overwhelming probability,
we prove a matching lower bound of $\Omega (n\log (pn)) $ for the query complexity of any stochastic-generalized-sorting algorithm. Thus our algorithm is optimal.


Given that the optimal running time for stochastic generalized sorting is faster for sparser graphs, a natural question is whether the running time for worst-case generalized sorting can also be improved in the sparse case. It is known that
in the \emph{very dense} case, where there are $\binom{ n } { 2 } - q $ edges, there is an algorithm with query complexity $(n + q)\log n $ \cite{banerjee2016sorting}. However, in the case where there are
$m\ll \binom{ n } { 2 }$ edges, the state-of-the-art remains the $\min(m, \tilde{O} (n ^ { 1. 5 }))$ bound of \cite{huang2011algorithms}, where one can always get an $O(m)$-query bound simply by querying all edges.

Our second result is a new algorithm for worst-case generalized sorting with query complexity $\tilde{O} (\sqrt { nm }) $, where $n $ is the number of vertices and $m $ is the number of edges.
The algorithm is obtained by combining the convex-geometric approach of \cite{huang2011algorithms} with the predictions-based approach of \cite{banerjee2016sorting}.

Interestingly, if we instantiate $m = p \binom{n}{2} $ for some $p\in (0, 1] $, then the running time becomes $\tilde{O} (n^{1.5} \sqrt{p})$, which is precisely what the previous state of the art was
was for the \emph{stochastic} generalized sorting problem (in the sparse case of $p < 1 / n^{1/5}$) \cite{huang2011algorithms}.

\paragraph {Other related work. }
The original bounds by Huang et al.\cite{huang2011algorithms} for both the worst-case and the stochastic generalized sorting problems
remained unimproved until now. The difficulty of these problems has led researchers to consider alternative formulations
that are more tractable. Banerjee and Richards \cite{banerjee2016sorting} 
considered the worst-case generalized sorting problem in the setting where $G$ is very dense, containing $\binom{n}{2} - q$ edges for some parameter $q$, 
and gave an algorithm that performs $O((q + n) \log n)$ comparisons; whether this is optimal remains open, and the best known lower bound is $\Omega(q + n \log n)$ \cite{biswas2017improved}. Banerjee and Richards \cite{banerjee2016sorting} also gave on alternative algorithm for the stochastic generalized sorting problem, achieving 
$\tilde{O}(\min\{n^{3/2}, pn^2\})$ comparisons, but this bound is never better than that of \cite{huang2011algorithms} for any $p$.
Work by Lu et al. \cite{lu2021generalized} considered a variation on the worst-case generalized sorting problem in which we are also given
predicted outcomes for all possible comparisons, and all but $w$ of the predictions are guaranteed to be true. The authors \cite{lu2021generalized} show that,
in this case, it is possible to sort with only $O(n \log n + w)$ comparisons. 

The generalized sorting problem is also closely related to the problem of sorting with comparison costs.
In this problem, we are given as input $\binom{n}{2}$ nonnegative costs 
$\{c_{u, v} \mid u, v \in V\}$, where $V$ is the set of elements that must be sorted and $c_{u, v}$ 
is the cost of comparing $u$ to $v$. The goal is to determine the true order $x_1 \prec x_2 \prec \cdots \prec x_n$ 
of $V$ at a cost that is as close as possible to the optimal cost 
$$\text{OPT} = c_{x_1, x_2} + c_{x_2, x_3} + \cdots + c_{x_{n - 1}, x_n}.$$

If all comparisons have equal costs, the problem reduces to traditional sorting and the optimal competitive ratio is $O (\log n) $. 
On the opposite side of the spectrum, if the costs are allowed to be arbitrary, then the best achievable competitive ratio is known to be $\Omega(n)$ \cite{gupta2001sorting}.  
Nonetheless, there are many specific families of cost functions for which better bounds are achievable.
Gupta and Kumar \cite{gupta2001sorting} consider the setting in which each of the elements $v \in V$ represents a database entry of some size $s_v$,
and the costs $c_{u, v}$ are a function of the sizes $s_v$ and $s_u$; two especially natural cases are when 
$c_{u, v} = s_v + s_u$ or $c_{u, v} = s_u s_v$, and in both cases, a competitive ratio of $O(\log n)$
can be achieved \cite{gupta2001sorting}.\footnote{ The authors \cite{gupta2001sorting} also describe an algorithm for arbitrary monotone cost functions,
but the analysis of the algorithm has been observed by subsequent work to be incorrect in certain cases \cite{kannan2003selection}.}
Several other families of cost functions have also been considered. Gupta and Kumar \cite{gupta2005s} consider the setting in which the costs induce a metric space over the elements $V $, and give a $O (\log ^ 2n) $-competitive
algorithm. In another direction, Angelov et al. \cite{angelov2008sorting} consider costs that are drawn independently and uniformly from certain probability distributions
(namely, the uniform distribution on $[0, 1] $, and the Bernoulli distribution with $\Pr [X = 1] = p $), and give a $O (\log n) $-competitive algorithm in each case.
The unifying theme across these works is that, by focusing on specific families of cost functions, polylogarithmic competitive ratios can be achieved.

The generalized sorting problem can also be interpreted in this cost-based framework, with queryable edges having cost 1 and un-queryable edges having cost $\infty$. 
The presence of very large (infinite) costs significantly changes the nature of the problem, however.
The fact that some edges simply cannot be queried makes it easy for an algorithm to get ``stuck'',
unable to query the edge that it needs to query in order to continue. 
Whereas sorting with well-behaved (and bounded) cost functions is, at this point, a relatively well understood problem, 
the problem of sorting with infinite costs on some edges (i.e., generalized sorting) has remained much more open.


\subsection{Roadmap}

In Section \ref{sec:Technical}, we provide a technical overview of our stochastic generalized sorting algorithm and its analysis. This result is our most interesting and technically involved result. In Section \ref{sec:Random}, we provide a full description and analysis of the stochastic generalized sorting algorithm, which uses $O(n \log (np))$ queries on a random graph $G(n, p)$. In Section \ref{sec:Lowerbound}, we prove that this algorithm is essentially optimal on random graphs. In Section \ref{sec:Worstcase}, we provide an improved worst-case generalized sorting algorithm that uses $O(\sqrt{mn} \log n)$ queries on an arbitrary graph with $n$ vertices and $m$ edges.

Finally, in Appendix \ref{sec:Pseudocode}, we provide pseudocode for the algorithms in Section \ref{sec:Random} and Section \ref{sec:Worstcase}.

\section{Technical Overview} \label{sec:Technical}
In this section, we give a technical overview of the main technical result of the paper, which is our
algorithm for stochastic generalizing sorting that runs in $ O (n\log (np)) $ queries.

\paragraph{The setup.}
Suppose that we are given an instance $ G = (V, E) $ of the stochastic generalized sorting problem,
where $|V| = n$, and where $p$ is the edge sampling probability. Let $ x_1, x_2,\ldots, x_n $
denote the true order of the vertices (and recall that we use $ x_i\prec x_j $ to denote that $ i <j $). Note that the edges
$(x_i, x_{i +1})$ are all included deterministically in $E$, and that each other edge is included
independently and randomly with probability $ p $. Further note that the edges in $E$ are undirected, so the edge $(u, v)$ is the same as the edge $(v, u)$.

\paragraph{The basic algorithm design.}
The algorithm starts by constructing sets of edges $ E_1, E_2,\ldots, E_q $, where $ q =\log (np) $ and where each
$ E_i $ consists of roughly a $1 / 2 ^ i $ fraction of all the edges $ E $. These sets of edges
are constructed once at the beginning of the algorithm, and are never modified.

The algorithm then places the vertices into levels $ L_1 \subseteq L_2 \subseteq \cdots \subseteq L_{q + c}$, 
where $ c $ is some large positive constant. The top levels $L_{q + 1}, \ldots, L_{q + c}$ automatically contain
all of the vertices. To construct the lower levels, we use the following \defn{promotion rule} to determine whether
a given vertex $ v $ in a level $ L_{ i +1 } $ should also be placed into level $ L_i $: 
if there is some vertex $u \prec v$ such that $u \in L_{i + c}$ and $(u, v) \in E_i$, 
then we consider $ v $ to be \defn{blocked} by $u$; otherwise, if there is no such $ u $,
then $ v $ gets promoted to level $ L_i $.

As we shall describe shortly, our algorithm discovers the true order of the vertices $ x_1, x_2, \ldots$ one vertex at a time.
At any given moment, we will use $ x_\ell$ to denote the most recently discovered vertex in the true order,
and we will say that each remaining vertex $ x_i $, $ i > \ell$, has \defn{rank} $ r (x_i) = i - \ell$. 

Before we describe how to discover the true order of the vertices, let us first describe
how the levels are updated as we go. Intuitively, the goal of this construction is to make it so that,
at any given moment, each level $ L_i $ has size roughly
$ s_i = 2 ^ i/p $, and so that the majority of the elements in $ L_i $ have rank $O(s_i)$. We refer to
$s_i = 2^i / p$ as the \defn{target size} for level $L_i$. 

Whenever we discover a new vertex $ x_\ell$ (i.e., we discern that some vertex $q$ is the $\ell$-th vertex in the true order), we perform incremental updates to the levels as follows: we remove
$x_\ell$ from all of the levels, and then for each vertex $ v $ that was previously blocked from promotion by
$ x_\ell $, we apply the promotion rule as many times as necessary to determine what new level $v$ should be promoted
to. Note that, when we apply the promotion rule to a vertex $ v $ in a level $L_{i + 1}$, we apply it based on the current set $ L_{ i + c } $,
and even if the set $L_{i + c}$ changes in the future, we do not consequently ``unpromote'' the vertex.

In addition to the incremental updates, we also periodically rebuild entire levels from scratch. In particular, we rebuild
 the levels $L_i, L_{i - 1}, \ldots, L_1$ whenever the number $\ell$ of vertices that we have discovered is a multiple of
$s_i / 32$. (Recall that $s_i = 2^i / p$ is our target size for each level $L_i$.) To rebuild a level $ L_i $, we take the vertices currently in $ L_{ i +1 } $ and apply the promotion rule to each of them (based on the current values of the sets $L_{i+1}$ and $L_{i + c}$). 

We remark that, whereas incremental updates can only promote vertices (and cannot unpromote them), the process of rebuilding the levels $L_i, L_{i - 1}, \ldots, L_1$, one after another, can end up moving vertices in both directions. Moreover, there are interesting chains of cause and effect. For example, if a vertex $v$ gets unpromoted from level $L_i$ (so now it no longer appears in $L_i$), then that may cause some vertex $v' \in L_{i - c + 1}$ to no longer be blocked, meaning that $v'$ now gets \emph{promoted} to $L_{i - c}$; the addition of $v'$ to $L_{i - c}$ may block some $v'' \in L_{i - 2c + 1}$ from being able to reside in level $L_{i - 2c}$, causing $v''$ to get \emph{unpromoted} from $L_{i - 2c}$, etc.

Having described how we maintain the levels $ L_1, L_2,\ldots$, let us describe how we use the levels to discover
the next vertex $ x_{\ell +1 } $. We first construct a \defn{candidate set} $ C $ consisting of the vertices
$ v\in L_1$ such that $(x_\ell, v) \in E$. Then, for each $v \in C$, we determine whether $v = x_{\ell + 1}$
by going through the levels $L_i$ in order of $i = 1, 2, \ldots, q + c$, and querying \emph{all} of the edges between $v $ and the level
$L_i $; we can remove $v $ from the candidate set $C$ if we ever find an edge $ (u, v) $, where $u$ is in some $L_i$ and $u \prec v$. In particular, all vertices $u \prec x_{\ell+1}$ have already been discovered and removed from the levels $L_i$, so the existence of such a vertex $u \prec v$ implies that $v \neq x_{\ell + 1}$.

In the worst case, the process for eliminating a vertex $v$ from the candidate set $ C $ might have to look at all of the edges
incident to $v$. One of the remarkable features of our algorithm is that, because of how the levels are structured, 
the expected number of queries needed to identify $x_{\ell + 1}$ ends up being only $O(\log (pn))$. 

\paragraph{The structure of the analysis.}
In the rest of the section, we discuss how to bound the expected number of edge queries made by the algorithm.
For now, we will ignore issues surrounding whether events are independent, and we will think of events that are
 ``morally independent'' as being fully independent. One of the main challenges in the full analysis
is to define the events being analyzed in such a way that these interdependence issues can be overcome. 

Our analysis proceeds in three parts. First we analyze for each vertex $ v $ and each level $ L_i $ 
the probability that $ v\in L_i $ at any given moment. 
Next, we use this to bound the total number of queries spent maintaining the levels $L_i$.
Finally, we bound the number of queries spent eliminating candidates from candidate sets
 (i.e., actually identifying each $x_\ell$). 

\paragraph{Understanding which vertices are in which levels.}
Consider a vertex $v$ with rank $r(v) \approx s_j$. We will argue that $ v $
is likely to appear in the levels $ L_j, L_{ j +1 },\ldots$ and is unlikely to appear
in the levels $L_{j - c}, L_{j - c - 1}, \ldots, L_1$. 

Begin by considering $\Pr[v \in L_{j + i}]$ for some $i \ge 0$. If $v \not\in L_{j + i}$,
then there must be some vertex $u$ with rank $1 \le r(u) \le r(v) \approx s_j$ such that
the edge $(u, v)$ is in one of $E_{j + i}, E_{j + i + 1}, \ldots$. However, there are only
$O(s_j) = O(2^j / p)$ \emph{total} vertices $u$ with ranks $1 \le r(u) \le r(v)$, and only a $p$-fraction of them
have an edge to $ v $. Moreover, only a $O(1/ 2^{i + j})$ fraction of those edges
are expected to appear in one of $E_{j + i}, E_{j + i + 1}, \ldots$. So the probability
of any such edge appearing is that most
$$O((2^j / p) \cdot p / 2^{i + j}) = O(1 / 2^i).$$
In other words, if $v$ has rank $r(v) \approx s_j$, then 
\begin{equation}
\Pr[v \in L_{j + i}] = 1 - O(1 / 2^i)
\label{eq:appears}
\end{equation}
for each level $L_{j + i}$. 

An important consequence of \eqref{eq:appears} is that, for any given $ L_i $,
 the vertices $v$ with ranks less than $O(s_i)$ (say, less than $s_i / 16$) 
each have probability at least $\Omega(1)$ of being in $L_i$. We refer to the set of such vertices
in $ L_i $ as the \defn{anchor set} $A_i$. At any given moment, 
we have with reasonably high probability that $|A_i| \ge s_i / 128$. 


The anchor sets $ A_i $ play a critical role in ensuring that vertices with large ranks
\emph{do not} make it into low levels. Consider again a vertex $v$ with rank $r(v) \approx s_j$
and let us bound the probability that $v \in L_{j - c - i}$ for some $i \ge 0$. Assume for the moment that
$v \in L_{j - c - i + 1}$. The only way that $ v $ can be promoted to $L_{j - c - i}$ 
is if there are no edges in $E_{j - c - i}$ that connect $v$ to an element of the anchor set
$A_{j - i}$ (in particular, if there is such an edge, then $ v $ will be blocked from promotion).
However, the anchor set $A_{j - i}$ has size at least $s_{j - i} / 128 = \frac{2^{j - i}}{128 p}$,
and thus the expected number of edges (in all of $E$) from $v$ to $A_{j - i}$ is roughly $2^{j - i} / 128$.
A $1 / 2^{j - c - i}$ fraction of these edges are expected to be in $E_{j - c - i}$. Thus,
the expected number of edges that block $v$ from promotion is roughly
$$\frac{2^{j - i} / 128}{2^{j - c - i}} = 2^c / 128.$$
Since these edge-blockages are (mostly) independent, the probability that no such edges block $ v $
from promotion ends up being at most
$$e^{-2^c / 128}.$$
This analysis considers just the promotion of $v$ from level $L_{j - c - i + 1}$ to level
$ L_{j - c - i}$. Applying the same analysis repeatedly, we get that for any vertex $v$ with rank $r(v) \approx s_j$, and any $i \ge 0$,
\begin{equation}
\Pr[v \in L_{j - c - i}] \le e^{-2^c i / 128}.
\label{eq:filter0}
\end{equation}
Or, to put it another way, for each level $ L_i $, and for each vertex $ v $ satisfying $r(v) \approx s_{i + k + c}$ for some $k \ge 0$, we have that
\begin{equation}
\Pr[v \in L_{i}] \le e^{-2^c k / 128}.
\label{eq:filter}
\end{equation}

The derivation of \eqref{eq:filter} reveals an important insight in the algorithm's design.
One of the interesting features of the promotion rule is that, when deciding whether 
to promote a vertex $v$ from a given level $L_{i + 1}$ into $L_{i}$, we use the edges 
in $E_{i}$ to compare $v$ not to its companions in $L_{i + 1}$ (as might be tempting to do) but instead to the 
elements of the larger level $L_{i + c}$.
For vertices $v$ with small ranks,
this distinction has almost no effect on whether $ v $ makes it into level $L_i$ (note, in particular, 
that the proof of \eqref{eq:appears} is unaffected by the choice of $c$!); but for vertices $v$ with large ranks
(i.e., ranks at least $s_{i + c}$),
this distinction makes the filtration process between levels much more effective
(i.e., the larger that $c$ is, the stronger that \eqref{eq:filter} becomes).
It is important that $ c $ not be too large, however, since otherwise
the application of \eqref{eq:filter} to a given $L_i$ would only tell us information about
vertices $v$ with very large ranks. By setting $c$ to be a large positive constant, we get the best of all worlds.

The derivation of \eqref{eq:filter} also reveals the reason why we must perform periodic rebuilds of levels. We rely on the anchor set $A_{i + c}$ to ensure that vertices $v$ with large ranks stay out low levels $L_i$, but the anchor set $A_{i + c}$ changes dynamically. Since $A_{i + c}$ takes many different values over time, the result is that a given high-rank vertex $v$ might at some point get lucky and encounter a state of $A_{i + c}$ that allows for $v$ to get promoted to level $L_i$ (despite $v$'s large rank!). The main purpose of performing regular rebuilds is to ensure that this doesn't happen, and in particular, that the vertices in level $L_i$ were all promoted into $L_i$ based on a relatively recent version of the anchor set $A_{i + c}$.

Combined, \eqref{eq:appears} and \eqref{eq:filter} give us a clear picture of which levels we should expect each vertex
to be in. Now we turn our attention to analyzing the query complexity of the algorithm.

\paragraph{Bounding the cost of level updates.}
The next step of the analysis is to bound the number of queries spent
rebuilding levels and performing incremental updates to them.

We can deduce from \eqref{eq:filter} that, at any given moment, $\E[|L_i|] = O(s_{i + c}) = O(s_i)$. 
For our discussion here we will simply assume that $|L_i| = O(s_i)$ always holds.

Now let us consider the number of queries needed to perform a promotion test (i.e., to apply the promotion rule) on a given vertex $ v $
in a given level $ L_i $. The promotion rule requires us to examine every edge in $ E_i $ that goes from $ v $
to any vertex in $ L_{ i + c } $. The expected number of such edges is roughly
\begin{equation}
\frac{p}{2^i} |L_{i + c}| = O\left(\frac{p}{2^i} s_i\right) = O(1).
\label{eq:promotion}
\end{equation}
Thus we can treat each promotion test as taking $O(1)$ queries.

We can now easily bound the total work spent rebuilding levels from scratch.
Each level $ L_i $ is rebuilt $O(n / s_i)$ times, and each rebuild
requires us to apply the promotion rule to $O(s_i)$ different vertices.
It follows that each rebuild takes $O(s_i)$ queries, and that the
 total number of queries spent performing rebuilds on $L_i$ is $O(n)$.
Summing over the $ O (\log (pn)) $ levels results in a total of
$O(n \log (pn))$ queries. 

What about the query complexity of the incremental updates to the levels? 
Recall that whenever we discover a new $ x_\ell $, we must revisit every vertex $ v $
that was formerly blocked by $x_\ell$. For each such $v$,
we must check whether $ v $ can be promoted, and if so then 
we must repeatedly promote $ v $ until it reaches a level where it is again blocked. 

Recall that each promotion test takes $O(1)$ expected queries. To bound the total number of
promotion tests due to incremental updates, we break the promotion tests into
two categories: the \defn{failed promotion tests} (i.e., the promotion tests that
do not result in a promotion) and the \defn{successful promotion tests} (i.e.,
the promotion tests that do result in a promotion).

The number of failed promotion tests is easy to bound. It is straightforward to show that
the expected number of distinct vertices $v$ that we perform a promotion test on
(each time that a new $x_\ell$ is discovered) is $O(\log (pn))$ 
(roughly speaking, there will be one such vertex per level). Each such $ v $
can contribute at most one failed promotion test. So summing over the $x_\ell$'s,
we find that the total number of failed promotion tests is $O(n \log (pn))$.

To bound the number of successful promotion tests, on the other hand,
we can simply bound the number of promotions that occur (due to incremental updates).
Each time that some vertex $ v $ is promoted from
a level $L_{i + 1}$ to a level $ L_i $, the size of $L_i$ increases by one
(and the size of $L_{i + 1}$ is unchanged). We know that the size of each $L_i$
stays below  $O(s_i)$ at all times, however, so there can be at most $O(s_i)$ 
such promotions between every two consecutive rebuilds of the level. 
Since $L_i$ is rebuilt  $O(n / s_i)$ times, the total number of 
promotions into $L_i$ is $O(n)$.
Summing over the levels, we get $O(n \log(pn))$, as desired.

\paragraph{Bounding the cost of eliminating candidates from candidate sets.}
The final and most interesting step in the analysis is to bound the total number of queries spent
eliminating candidates from candidate sets. 
Suppose we have already discovered vertices $x_1, x_2, \ldots, x_\ell$,
and we wish to discover $x_{\ell + 1}$. How many queries does it take
to identify which element of $C$ is $x_{\ell + 1}$?

For simplicity, we will assume here that all of the candidates $v \in C$ (besides $x_{\ell + 1}$)
have ranks at least $r(v) > 1/p$ (this assumption can easily be removed with a bit of extra casework).

We can start by bounding the probability that a given $ v $ is in $ C $.
In order for $v$ to be in $C$ we need both that $v \in L_1$ and that
$(x_\ell, v)$ is an edge.
With a bit of manipulation, one can deduce from 
\eqref{eq:filter} that if a vertex $v$ has rank $r(v) \ge 1/ p$, then 
$$\Pr[v \in L_1] \le O\left(\frac{1}{(r(v) \cdot p)^{10}}\right).$$
If $v \in L_1$ (but $v \neq x_{\ell + 1}$), then $\Pr[(x_\ell, v) \text{ is an edge}] = p$.
Thus
$$\Pr[v \in C] \le O\left(\frac{p}{(r(v) \cdot p)^{10}}\right).$$

If $v \in C$, then we must look at all of the edges from $v$ 
to the levels $L_1, L_2, \ldots$ until we find an edge $(u, v)$ with $u \prec v$. 
Since each level $L_i$ has size $O(s_i)$, the expected number of edges from
$v$ to a given $L_i$ is $O(ps_i) = O(2^i)$.  If $L_J$ is the highest level that we look at
while eliminating $v$ from $C$, then the total number of queries incurred will be roughly
$O(2^J).$

So how many levels must we look at in order to eliminate $ v $ from $ C $?
Let $ u $ be the predecessor to $ v $ in the true order, and let $L_K$
be the lowest level that contains $ u $. Since $u \prec v$ and since
$(u, v)$ is deterministically an edge, we are guaranteed to stop at a level
$J \le K$. We can bound $K$ (and thus $J$) by applying \eqref{eq:appears}; 
this gives us the identity
\begin{equation}\Pr[J \ge \log (p \cdot r(v)) + i] \le O\left(1 / 2^i\right) \label{eq:Jbound}\end{equation}
for every $i \ge 0$. 

To summarize, if $v \in C$ (and $v \neq x_{\ell + 1}$), then the expected number of queries to remove $v$ from $C$ is roughly $\E[2^J]$; and for any $i \ge 0$, the probability that $2^J \ge p r(v) 2^i$ is $O(1 / 2^i)$.
There are $O(\log pn)$ possible values for $i$, each of which contributes $O(p\cdot r(v))$ to $\E[2^J]$. Thus the expected number of queries to remove $v$ from $C$ is $O(p\cdot r(v) \log (pn))$.

To bound the cost of eliminating \emph{all} candidates from $ C $, we must sum over the ranks $r \ge p^{-1}$ to get
\[{\sum_{v: r(v) \ge p^{-1}} \Pr[v \in C] \cdot \mathbb{E}[2^J]} = O\left(\sum_{r \ge p^{-1}} \frac{p}{(r p)^{9}} \log (pn)\right) = O\left(\log (pn)\right).\]
This represents the cost to identify a given $x_{\ell + 1}$. Summing over all $\ell$,
the total contribution of these costs to the query complexity of the algorithm of is
$O(n \log (pn))$.

\section{Stochastic Generalized Sorting} \label{sec:Random}

\subsection{Algorithm Design} \label{subsec:Random_Algo}


\paragraph{Conventions and Notation.}
Let $G = (V, E)$ be the input graph and let $x_1 \prec x_2 \prec \cdots \prec x_n$ denote the true order of the vertices. 

There are two types of edges in the graph, those that were included randomly 
with probability $p$, and those that are part of the true path connecting together 
$x_1, x_2, \ldots, x_n$. We say that a vertex pair $(u, v)$ is \defn{stochastic} if it is 
not an edge in the true-ordering path and otherwise the vertex pair is \defn{deterministic}. 
Note that a stochastic vertex pair does not need to have an edge connecting it, but 
a deterministic vertex pair does.

Our algorithm will find $x_1, x_2, \dots$ one at a time. Once we have found $x_i$, we say that $x_i$ has been \defn{discovered}.
As a convention, we will use $x_\ell$ to denote the most recently discovered vertex.
When we refer to the \defn{rank} of a vertex, we mean its rank out of the not yet identified vertices $\{x_{\ell + 1}, \ldots x_n\}$. For a vertex $v$, we let $r(v)$ be the rank of $v$, so $v = x_{\ell+r(v)}$.

To simplify our discussion in this section,  we will consider only the task of discovering the vertices $x_1,\ldots,x_{n/2}$.  This allows for us to assume that the number of remaining vertices is always $\Theta(n)$.  Moreover, by a symmetric argument, we can recover $x_n,x_{n-1},\ldots,x_{n/2+1}$ and therefore recover the complete order, so this assumption does not affect the correctness of the overall algorithm.

\paragraph{Constructing edge sets \boldmath $E_1, E_2, \ldots$.}
The first step of our algorithm is to use the following proposition to construct $q = O(\log (pn))$ sets of edges $E_1, E_2, \ldots, E_q$ (these sets are built once and then never modified). When discussing these sets of edges, we will use $E_i(u, v)$ to denote the indicator random variable for whether $(u, v) \in E_i$, and we will use $E(u, v)$ to denote the tuple $\langle E_1(u, v), E_2(u, v), \ldots, E_q(u,v)\rangle$. The proof of Proposition \ref{prop:Ei} is deferred to Subsection \ref{subsec:Random_Analysis}.

\begin{proposition} \label{prop:Ei}
Suppose we are given $E$ but we are not told which vertex pairs are stochastic/deterministic. Suppose, on the other hand, that Alice is told which vertex pairs are stochastic/deterministic, but is not given $E$. Based on $E$ alone, we can construct sets of edges $E_1, E_2, \ldots, E_q$ such that $E = \bigcup_{i = 1}^q E_i$, and such that, from Alice's perspective, the following properties hold: 
\begin{itemize}
\item For each stochastic vertex pair $(u, v)$, 
$\Pr[E_i(u, v)] = \frac{\alpha \cdot p}{2^{i}}$, where $1 \le \alpha \le 2$ is some constant that depends only on $p$ and $n$.
\item For each deterministic vertex pair $(u, v)$,
 $\Pr[E_i(u, v)] = \frac{\alpha}{2^i}$.
\item The random variables $\{E_i(u, v) \mid i \in [q], (u, v) \text{ is stochastic}\}$ are mutually independent.
\item The random variables $\{E(u, v)\}_{u, v \in V}$ are mutually independent.
\end{itemize}
\end{proposition}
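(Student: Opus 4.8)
The plan is to give an explicit randomized construction of $E_1,\dots,E_q$ out of $E$ that processes each vertex pair independently using a single rule that does \emph{not} need to know whether the pair is stochastic or deterministic, and then to verify the four claims by a short computation that all comes down to choosing one constant $\alpha$ correctly. First I would fix $\alpha$ to be a root of the equation $\prod_{i=1}^q\bigl(1-\tfrac{\alpha p}{2^i}\bigr)=1-p$. Viewed as a function of $\alpha$, the left-hand side is continuous, strictly decreasing, and has all factors positive on $[1,2]$ for $p<1$; by the Weierstrass bound $\prod(1-x_i)\ge 1-\sum x_i$ together with $\sum_{i=1}^q p/2^i<p$ it is strictly larger than $1-p$ at $\alpha=1$, and it is at most $1-p$ at $\alpha=2$ because the $i=1$ factor alone contributes $1-p$. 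Hence by the intermediate value theorem a unique such $\alpha\in[1,2]$ exists, and it depends only on $p$ and $q$, hence only on $p$ and $n$.

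Next I would describe the construction. Let $\mathcal D$ be the law of the vector $(B_1,\dots,B_q)$ of independent bits $B_i\sim\mathrm{Bern}(\alpha p/2^i)$ conditioned on the event $(B_1,\dots,B_q)\ne 0^q$ (the all-zeros vector). For each vertex pair $(u,v)$, using fresh independent randomness: if $(u,v)\notin E$, place $(u,v)$ in none of the $E_i$; if $(u,v)\in E$, draw $W\sim\mathcal D$ and put $(u,v)\in E_i$ exactly when $W_i=1$. Since $\mathcal D$ is supported away from $0^q$, we have $(u,v)\in E\iff (u,v)\in\bigcup_i E_i$, so $E=\bigcup_{i=1}^q E_i$.

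Then I would verify the four bullets; the only real computation is for a single pair. For a \emph{stochastic} pair $(u,v)$ we have $\Pr[(u,v)\in E]=p$ from Alice's perspective, and the defining equation for $\alpha$ says exactly $\Pr[(B_1,\dots,B_q)\ne 0^q]=1-\prod_i(1-\alpha p/2^i)=p$; combining these, one checks that the unconditional law of $E(u,v)$ is precisely that of $(B_1,\dots,B_q)$, i.e.\ independent $\mathrm{Bern}(\alpha p/2^i)$ bits — this gives the first bullet, and the third follows at once because distinct stochastic pairs use independent randomness and have independent membership indicators. For a \emph{deterministic} pair $(u,v)\in E$ holds surely, so $E(u,v)\sim\mathcal D$, and because $B_i=1$ already forces $(B_1,\dots,B_q)\ne 0^q$, the conditional marginal is $\Pr[B_i=1]/\Pr[(B_1,\dots,B_q)\ne 0^q]=(\alpha p/2^i)/p=\alpha/2^i$, which is the second bullet. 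The fourth bullet is immediate: each $E(u,v)$ is a function of the indicator $\mathbf{1}[(u,v)\in E]$ and of per-pair fresh randomness, and these indicators are mutually independent (the deterministic ones being constant $1$).

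The step I expect to be the crux — though it is short once seen — is recognizing that a single constant $\alpha$, pinned down by the one equation $\Pr[(B_1,\dots,B_q)\ne 0^q]=p$, is simultaneously what forces the stochastic marginals to equal $\alpha p/2^i$ and the deterministic conditional marginals to equal $\alpha/2^i$. Everything else is bookkeeping, once one commits to applying a single type-oblivious rule to each pair and lets the prior $\Pr[(u,v)\in E]$ do the work of separating the two cases.
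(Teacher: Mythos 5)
Your proposal is correct and takes essentially the same route as the paper: you pin down the same $\alpha$ via $\prod_{i=1}^q(1-\alpha p/2^i)=1-p$, use the same per-pair rule of sampling the indicator vector from the Bernoulli product law conditioned on being nonzero for edges and zero for non-edges, and verify the four properties by the same marginal computations. The only cosmetic difference is that you build the conditioning into your definition of $\mathcal D$ whereas the paper defines $\mathcal D$ as the unconditional product law and conditions at sampling time.
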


We remark that the sets $E_i$ do not partition $E$, as the sets $E_i$ are not necessarily disjoint. The sets $E_i$ are constructed independently across edges, but for each edge $(u, v)$ in $E$ we anti-correlate the events $\{E_i(u, v)\}_{i=1}^{q}$ so that the events are independent if we do not condition on the set $E$. We formally show how to construct the sets $E_i$ in the proof of the above proposition, in Subsection \ref{subsec:Random_Analysis}.

One thing that is subtle about the above proposition is that the randomization is coming from two sources, the random generation of $E$ and the random construction of $E_1, E_2, \ldots, E_q$ based on $E$. The probabilities in the bullet points depend on both sources of randomness simultaneously, and they \emph{do not} condition on the final edge set $E$ (hence, the use of Alice in the proposition statement).

\paragraph{Assigning the vertices to levels.}
A central part of the algorithm design is to use the edge sets $E_1, E_2, \ldots, E_q$ in order to dynamically assign the vertices to \defn{levels} $L_1 \subseteq L_2 \subseteq \cdots \subseteq L_{q + c}$, where $q = O(\log (pn))$ is as defined above and $c$ is some large but fixed constant. Before we can describe how the algorithm discovers the true order of the vertices, we must first describe
how the levels $\{L_i\}$ are constructed and maintained over time. In particular, the algorithm will make use of the
structure of the levels in order to efficiently discover new vertices in the true order.

Intuitively, the goal of how the levels are maintained is to make it so that,
at any given moment, each level $ L_i $ has size roughly
$ s_i = 2 ^ i/p $, and so that the majority of the elements in $ L_i $ have ranks $O(s_i)$. We refer to
$s_i = 2^i / p$ as the \defn{target size} for level $L_i$. 

\paragraph{Initial construction of levels.}
We perform the initial construction of the levels  as follows. The final levels $L_{q + 1}, \ldots, L_{q + c}$ all automatically contain all of the vertices. For each other level $L_i$,
we construct the level as
\begin{equation}
L_{i + 1} \setminus \{v \in L_{i + 1} \mid (u, v) \in E_{i} \text{ for some } u \prec v \text{ such that }u \in L_{i + c}\}.
\label{eq:filterr}
\end{equation}
The vertices $v$ from $L_{i + 1}$ that are not included in $L_i$ are said to be \defn{blocked by the edge} $(u, v)$ defined in \eqref{eq:filterr}.

One point worth highlighting is that, when we are deciding whether a vertex $v$ should move from $L_{i + 1}$ to $L_i$,
we query not just the edges in $E_i$ that connect $v$ to other vertices in $L_{i + 1}$, but also the edges that connect $v$ to other vertices in the larger set $L_{i + c}$.
This may at first seem like a minor distinction, but as we shall later see, it is critical to ensuring that vertices with large ranks do not make to levels $L_i$ for small $i$. 

\paragraph{Performing incremental updates to levels.}
When we discover a given $x_\ell$, we perform \defn{incremental updates} to the levels as follows.
First, we remove $x_\ell$ from all of the levels. 
Next we consider vertices $v$ that were formerly blocked by an edge of the form $(x_\ell, v)$ in some level $L_i$:
since $x_\ell$ is no longer in the levels, $(x_\ell, v)$ no longer blocks $v$.
Since $x_\ell$ has been removed from the levels, we must give $v$ the opportunity to advance to lower levels.
That is, if there is no edge $(u, v) \in E_{i - 1} \cap L_{i + c - 1}$ such that $u \prec v$, then we advance $v$ to level $L_{i - 1}$; 
if, additionally, there is no edge $(u, v) \in E_{i - 2} \cap L_{i + c - 2}$, then we advance $v$ to level $L_{i - 2}$, and so on. 
The vertex continues to advance until it is either in the bottom level $L_1$ or it is again blocked by an edge.

\paragraph{Rebuilding levels periodically.}
In addition to the incremental updates, we also periodically reconstruct each level in its entirety. Recall that we use $s_i = 2^i/p$ to denote the target size for each level $L_i$. For each $i$, every time that the index $\ell$ of the most recently discovered $x_\ell$ is a multiple of $s_i / 32$, we rebuild the levels $L_i, L_{i - 1}, \ldots, L_1$, one after another, from scratch according to \eqref{eq:filterr}. That is, we rebuild level $L_i$ based on the current values of $L_{i + 1}$ and $L_{i + c}$, we then rebuild $L_{i - 1}$ based on the new value of $L_i$ and the current value of $L_{i + c - 1}$, etc. 

Note that, when rebuilding a level $L_j$ from scratch, we also redetermine from scratch which of the vertices in $L_{j + 1}$ are blocked from entering $L_j$. In particular, a vertex $v$ may have previously been blocked from entering $L_j$ by an edge $(u, v)$ for some $u \in L_{j + c}$; but if $L_{j + c}$ has since been rebuilt, then $u$ might no longer be in $L_{j + c}$, and so $v$ may no longer be blocked from entering level $L_j$.

\paragraph{Finding the first vertex.}
We are now ready to describe how the algorithm discovers the true order $x_1, x_2, \ldots$ of the vertices.
We discover the first vertex $x_1$ in a different way from how we discover the other vertices (and, in particular, we do not make use of either the edge sets $E_i$ or the levels $L_i$). 

The algorithm for finding $x_1$ works as follows.
We always keep track of a single vertex $v_0$ which will be the earliest vertex found so far (in the true ordering) and a set $S = \{v_1, \dots, v_r\}$ of vertices that we know come after $v_0$ in the true ordering.
We begin by picking an arbitrary edge $(u, v)$ and querying the edge to find which vertex precedes the other. 
If $u$ precedes $v$, we set $v_0 = u$ and $v_1 = v$; else, we set $v_0 = v$ and $v_1 = u$.
For each subsequent step, if we currently have $v_0$ and $S = \{v_1, \dots, v_r\},$ then we do the following. If there exists any edge connecting $v_0$ and some $u \not\in \{v_1, \dots, v_r\},$ we query the edge $(u, v_0).$ If $v_0$ precedes $u$, then we add $v_{r+1} := u$ to the set $S$. Otherwise, if $u$ precedes $v$, we add $v_{r+1} = v_0$ to the set $S$, and then replace $v_0$ with $u$. Finally, if there is no edge connecting $v_0$ with some $u \not\in \{v_1, \dots, v_r\},$ we obtain that $x_1 = v_0$.

\paragraph{Finding subsequent vertices.}
Now suppose we have most recently discovered some vertex $x_\ell$ and we wish to discover $x_{\ell + 1}$. This is done using the levels $\{L_i\}$ and the edge sets $\{E_i\}$ as follows.

We start by constructing a candidate set $C$ consisting of all of the vertices in $L_1$ that have an edge to $x_\ell$.
Next, we perform the following potentially laborious process to remove vertices from the candidate set until we have gotten down to just one candidate.
We go through the levels $L_1, L_2, L_3, \ldots$, and for each level $L_i$ we query \emph{all} of the edges between vertices in $L_i$ and the
remaining vertices in $C$.
We remove a vertex $v$ from the candidate set $C$ if we discover an edge $(u, v)$ with $u \prec v$. Once we have narrowed down the candidate set to a single vertex, we conclude that the vertex is $x_{\ell + 1}$.

By repeating this process over and over, we can discover all of the vertices $x_1, x_2, \ldots, x_{n / 2}$.


\subsection{Algorithm Correctness} \label{subsec:alg_correct}

In this subsection, we prove \Call{StochasticSort}{} always outputs the correct order (with probability $1$). 
First, we show that the algorithm finds $x_1$ correctly. 
\begin{proposition}
    The algorithm for finding $x_1$ always succeeds. Moreover, it deterministically uses at most $n$ queries.
\end{proposition}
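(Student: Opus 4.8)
The plan is to analyze the simple incremental procedure for finding $x_1$ and argue two things: (1) the invariant that $v_0 \prec v$ for every $v \in S$, together with the fact that $S$ only grows, is maintained throughout, so that when the process halts with no queryable edge from $v_0$ to a vertex outside $S$, we must have $v_0 = x_1$; and (2) each query either adds a brand-new vertex to $S$ or replaces $v_0$ with a strictly smaller vertex (pushing the old $v_0$ into $S$), and in either case the set $S \cup \{v_0\}$ gains a vertex, so there can be at most $n-1$ queries.

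For correctness, I would first establish the loop invariant: at every step, $v_0$ precedes all of $v_1,\dots,v_r$ in the true order. This holds after the initialization step by construction, and is preserved at each iteration — if we query $(u,v_0)$ with $v_0 \prec u$, then $u$ is added to $S$ and the invariant still holds; if instead $u \prec v_0$, then the new $v_0$ is $u$, which precedes the old $v_0$, and the old $v_0$ is added to $S$, and since $u$ precedes the old $v_0$ which precedes all of $S$, $u$ precedes all of the new $S$. Next, I would observe that once a vertex $w$ lies in $S$ it stays in $S$, and moreover $w$ is never again taken as a query partner (we only query $(u,v_0)$ for $u \notin S$), so the situation is consistent. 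Then, when the process terminates, there is no queryable edge from $v_0$ to any vertex outside $S$; combined with the invariant that $v_0 \prec v$ for all $v \in S$, this means every neighbor of $v_0$ in $G$ is either in $S$ (hence after $v_0$) or is $v_0$ itself. If $v_0 \neq x_1$, then $x_1 \prec v_0$, so $x_1$ is a neighbor of $v_0$ on the Hamiltonian path (wait — $x_1$ need not be adjacent to $v_0$); instead I use that the true predecessor of $v_0$, call it $x_{j-1}$ where $v_0 = x_j$, satisfies $(x_{j-1}, v_0) \in E$ by the Hamiltonian-path promise, and $x_{j-1} \prec v_0$, contradicting that all neighbors of $v_0$ are in $S$ (and hence succeed $v_0$). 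Therefore $v_0 = x_1$.

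For the query bound, I would track the potential $\Phi = |S| = r$. After the first query, $\Phi = 1$. Every subsequent query increments $\Phi$ by exactly one: in the case $v_0 \prec u$ we add the new vertex $u$ (which was not in $S$ since we chose $u \notin \{v_1,\dots,v_r\}$); in the case $u \prec v_0$ we add the old $v_0$ to $S$, and the old $v_0$ was not previously in $S$ by the invariant ($v_0 \prec v$ for all $v \in S$ forbids $v_0 \in S$). Since $\Phi$ is bounded by $n-1$ (it counts distinct vertices other than the current $v_0$), the total number of queries is at most $n-1 \le n$, and this bound is deterministic.

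The main obstacle — really the only subtle point — is the termination/correctness argument at the halting step: one must invoke the Hamiltonian-path promise correctly to exhibit a queryable edge from some earlier vertex into $v_0$, contradicting the stopping condition. Everything else is a routine invariant-maintenance and potential-function argument, and I do not anticipate difficulty there.
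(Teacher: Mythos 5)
Your proposal is correct and takes essentially the same approach as the paper: establish the invariant $v_0 \prec v$ for all $v \in S$, use the Hamiltonian-path promise to derive a contradiction at termination (via the immediate predecessor of $v_0$), and bound queries by observing that each query grows $S$. The only (inconsequential) difference is that you track $|S|$ a bit more carefully to get the slightly sharper bound of $n-1$ rather than $n$.
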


\begin{proof}
    First, we note that at any step, if we have vertex $v_0$ and set $S = \{v_1, \dots, v_r\}$, then $v_0 \prec v_i$ for all $v_i \in S$. This is obviously true at the first step after we compare the first two vertices $(u, v)$, and it continues to be true later on by the following inductive argument. If we find some edge $(u, v_0)$ with $u \not\in S$, then if $u \prec v_0$, then $v \prec v_0 \prec v_i$ for all $v_i \in S$, so adding $v_0$ to $S$ and replacing $v_0$ with $u$ means the new $v_0$ still precedes all vertices in $S$. On the other hand, if $v_0 \prec u,$ then since we just add $u$ to $S$, we also still have that $v_0$ precedes everything in $S$. Thus we always have $v_0 \prec v_i$ for all $v_i \in S$.
    
    Next we show that, whenever our algorithm for finding $x_1$ terminates, it is guaranteed to have successfully found $x_1$. That is, we show that if $v_0$ has no edges $(u, v)$ such that $u \not\in S$, then $v_0$ must equal $x_1$. This is because if $v_0 \neq x_1,$ then $v_0$ is connected to its immediate predecessor in the true ordering (which we can call $v'$), which would mean $v' \in S$. This, however, contradicts the fact that $v_0 \prec v_i$ for all $v_i \in S$. Therefore, our algorithm is correct assuming it terminates.
    
    Finally, observe that the algorithm must terminate after $n$ queries, since each query increases the size of $S$ by $1$ (we either add $v_0$ or $u$ to $S$, neither of which was in $S$ before). This concludes the proof.
\end{proof}

Next, we give a condition that guarantees that a given vertex $v$ will be in a given level $L_i$.


\begin{proposition} \label{prop:correct1}
    Fix a vertex $v$ that has not been discovered yet, and suppose that
    at some point,
    for all $u \prec v$ such that $u$ has not been discovered, $(u, v) \not\in \bigcup_{j \ge i} E_j$. Then, $v$ is in level $L_i$.
\end{proposition}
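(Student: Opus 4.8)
The plan is to isolate a structural invariant maintained by the level data structure and then peel levels off from the top. Define the following \emph{blocking invariant}, to hold whenever the algorithm is between update phases (i.e., it has finished removing the most recently discovered vertex, performing the incremental updates this triggers, and performing any periodic rebuilds this triggers): for every $j \in \{1,\dots,q\}$ and every not-yet-discovered vertex $w$, if $w \in L_{j+1}$ but $w \notin L_j$, then there is a not-yet-discovered $u \prec w$ with $(u,w) \in E_j$ (indeed, one with $u \in L_{j+c}$). The proposition is invoked precisely at such between-phase moments. Granting the invariant, the proposition is immediate: since $L_{q+1},\dots,L_{q+c}$ always contain every not-yet-discovered vertex we have $v \in L_{q+1}$; if $i > q$ we are done, and otherwise we induct downward, for $j = q, q-1, \dots, i$, using that $v \in L_{j+1}$ together with the hypothesis (``no not-yet-discovered $u \prec v$ has $(u,v) \in E_j$'') and the contrapositive of the invariant to conclude $v \in L_j$.

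It remains to prove the blocking invariant, which I would do by induction over the update phases. The base case is the state right after the initial construction \eqref{eq:filterr}: there $w \in L_{j+1}\setminus L_j$ holds exactly when $w$ is blocked by an edge $(u,w) \in E_j$ with $u \prec w$ and $u \in L_{j+c}$, and any vertex lying in a level is not yet discovered, so we are done. For the inductive step, fix an update phase triggered by the discovery of some $x_\ell$: the phase removes $x_\ell$ from all levels, then advances (downward only) the vertices formerly blocked by an edge $(x_\ell,\cdot)$, then — if $\ell$ is a multiple of the appropriate threshold — rebuilds $L_k, L_{k-1}, \dots, L_1$ for the largest triggered index $k$. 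Fix $j$ and a not-yet-discovered $w$ with $w \in L_{j+1}\setminus L_j$ at the end of the phase, and split on whether $j \le k$. If $j \le k$, then $L_j$ was rebuilt and, because rebuilds go from higher to lower indices, neither $L_{j+1}$ nor $L_{j+c}$ changes after $L_j$ is rebuilt; hence $w \in L_{j+1}\setminus L_j$ is witnessed by a blocking edge into the current $L_{j+c}$, exactly as in the base case. If $j > k$, then $L_j, L_{j+1}, L_{j+c}$ are not rebuilt this phase, so (using that incremental updates only add vertices and the only vertex removed is $x_\ell \ne w$) either (a) $w$ was already in $L_{j+1}$ at the end of the previous phase — in which case $w$ cannot have been in $L_j$ then (else it would still be, since only $x_\ell$ was removed from $L_j$), so the inductive hypothesis supplies a blocker $u$; if $u = x_\ell$ then $w$ is among the vertices the incremental update re-examines and its failure to advance into $L_j$ exhibits a fresh blocker $u' \ne x_\ell$ in the current $L_{j+c}$, while if $u \ne x_\ell$ then $u$ is still undiscovered and still in $L_{j+c}$ — or (b) $w$ was added to $L_{j+1}$ by an incremental advancement during this phase, and its failure to continue into $L_j$ again exhibits a blocker in the current $L_{j+c}$ distinct from $x_\ell$. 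In all cases we have exhibited the required not-yet-discovered $u \prec w$ with $(u,w) \in E_j$, closing the induction.

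The main obstacle is the case analysis of the inductive step: within one phase the levels change by three different mechanisms (deleting the discovered vertex, incremental promotions, periodic rebuilds), and one must verify that none of them can strand a vertex in $L_{j+1}\setminus L_j$ without a genuine current blocker. The three facts that keep this manageable are: rebuilds proceed from higher to lower indices, so once $L_j$ is rebuilt the sets $L_{j+1}$ and $L_{j+c}$ are frozen for the remainder of the phase; incremental updates never remove a vertex from a level, and a vertex that halts at $L_{j+1}$ does so because of an explicit blocking edge into the then-current $L_{j+c}$ (so that blocker remains in $L_{j+c}$ to the end of the phase); and the only vertex ever removed from a level during a phase is the freshly discovered $x_\ell$. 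With these in hand the remaining verifications are routine.
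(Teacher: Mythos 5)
Your proposal is correct and follows essentially the same approach as the paper: both arguments rest on the invariant that a vertex $w$ stuck in $L_{j+1}\setminus L_j$ always has a currently-undiscovered blocker $u\prec w$ with $(u,w)\in E_j$. The paper extracts this invariant directly from the algorithm's bookkeeping (the ``elim'' array recording each vertex's current blocker) and argues by contradiction that the recorded blocker must be undiscovered, whereas you state the invariant explicitly and verify it by induction over the update phases; your version is longer but makes visible the case analysis (deletion, incremental promotion, rebuild) that the paper's concise proof is implicitly relying on.
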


\begin{proof}
    Suppose that $v \not\in L_i$. Then, since $v$ is not in the first level, the algorithm currently states that $v$ is blocked by some edge $(u, v)$ with $u \prec v$.
    
    Since the algorithm currently states that $v$ is blocked by $(u, v)$, the vertex $u$ cannot have been discovered yet. Indeed, once $u$ is discovered, we remove $u$ from all levels and perform incremental updates to all of the vertices that edges incident to $u$ blocked, including $v$. This incremental update would either push $v$ all the way to level $L_1$ or would result in $v$ being blocked by a new edge (different form the edge $(u, v$)), contradicting the fact that $v$ is currently blocked by $(u, v)$. So, $u \prec v$ and $u$ is not discovered. Finally, since $v \not\in L_i$, there exists some $j \ge i$ such that $v \in L_{j+1} \backslash L_j$, which means that since the algorithm decided that $(u, v)$ blocks $v$ from level $L_j$, we have that $(u, v) \in E_j$.
\end{proof}

As a direct corollary, we have the following:

\begin{corollary} \label{cor:correct2}
    After $x_1, \dots, x_{\ell}$ are discovered and all level updates are performed, $x_{\ell+1} \in L_1$.
\end{corollary}

\begin{proof}
    Indeed, there are no vertices $u \prec x_{\ell+1}$ such that $u$ has not been discovered. So, by setting $i = 1$ and $v = x_{\ell+1}$ in Proposition \ref{prop:correct1}, the corollary is proven.
\end{proof}

To finish the proof of algorithm correctness, suppose we have discovered $x_\ell$ and we are now in the process of discovering $x_{\ell + 1}$. Since $x_{\ell+1} \in L_1$ (by Corollary \ref{cor:correct2}), the set $C$ of vertices in $L_1$ that are connected to $x_\ell$ contains $x_{\ell+1}$. For each $v \in C$ such that $v \neq x_{\ell+1}$, the immediate predecessor of $v$ is undiscovered, so it is in $L_{q+1}$. Moreover, the immediate predecessor of $v$ has an edge connecting it to $v$. Therefore, the other vertices $v \in C$ will be eliminated eventually. However, $x_{\ell+1}$ can never be eliminated, since $x_1, \dots, x_{\ell}$ are the only vertices that precede $x_{\ell+1}$, and they have been removed from all levels. Therefore, eventually we will narrow down the candidate set to precisely $x_{\ell+1}$, meaning that we will successfully discover the correct value for $x_{\ell + 1}$. The proof of correctness follows by induction.

\subsection{Analyzing the Query Complexity} \label{subsec:Random_Analysis}

In this section we bound the total number of queries made by the \Call{StochasticSort}{} algorithm. First, as promised in Subsection \ref{subsec:Random_Algo}, we prove Proposition \ref{prop:Ei}.

\begin{proof}[Proof of Proposition \ref{prop:Ei}]
    Recall that $q$ (i.e., the number of edge sets $E_1, E_2, \ldots$) is defined solely as a function of $p$ and $n$. Choose $\alpha$ so that
\begin{equation}
    \prod_{i = 1}^{q} \left(1 - \frac{\alpha \cdot p}{2^i}\right) = 1-p.
    \label{eq:alpha}
\end{equation}
    Note that $g(\alpha) := \prod_{i = 1}^{q} \left(1 - \frac{\alpha \cdot p}{2^i}\right)$ is a continuous and strictly decreasing function over $\alpha > 0$. Moreover, $g(\alpha) \ge 1 - \sum_{i = 1}^{q} \frac{\alpha \cdot p}{2^i} \ge 1 - \alpha \cdot p$ and $g(\alpha) \le 1 - \frac{\alpha \cdot p}{2}.$ Thus, there is a unique solution $\alpha > 0$ to $g(\alpha) = 1-p$, and the solution $\alpha$ must be in the range $[1, 2]$.
    
    Let $\mathcal{D}$ be the probability distribution over tuples of the form $\langle X_1, X_2, \ldots, X_q\rangle$, where each $X_i$ is an indicator random variable that is independently set to $1$ with probability $\alpha p / 2^i$. 
    To construct the $E_i$s, for each edge $e \in E,$ we sample $E(u, v)$ at random from the distribution $\mathcal{D}$ conditioned on at least one of the $X_i$s being $1$; and for each edge $e \not\in E$, we set $E(u, v)$ to be the zero tuple.  Note that, by design, $\bigcup_i E_i = E$.
    
    Now let us analyze the sets $E_i$ from Alice's perspective (i.e., conditioning on which edges are stochastic/deterministic but not on which edges are in $E$). For each stochastic vertex pair $(u, v)$, the pair $(u, v)$ is included in $E$ with probability $p$. This is exactly the probability that a tuple sampled from $\mathcal{D}$ is non-zero. Thus, for each stochastic vertex pair $(u, v)$, we independently have $E(u, v) \sim \mathcal{D}$. (This implies the first property.) On the other hand, for each deterministic vertex pair $(u, v)$, we independently have $E(u, v) \sim \mathcal{D} \mid \left(\OR_i X_i = 1\right)$.
    Observe that for $\langle X_1, X_2, \ldots, X_q \rangle \sim D$,
    \begin{align*}
    \Pr\left[X_i \mid \OR_i X_i = 1\right] & = \frac{\Pr[X_i]}{\Pr[\OR_i X_i = 1]} \\
    & = \frac{\Pr[X_i]}{p}  \\
    & = \frac{\alpha}{2^i},\end{align*}
    where the first equality uses the fact that $X_i$ can only hold if $\OR_i X_i$ holds, and the second equality uses \eqref{eq:alpha}.
    This establishes that for any deterministic vertex pair $(u, v)$, $\Pr[E_i(u, v)] =  \frac{\alpha}{2^i}$, hence the second property. Finally, the aforementioned independencies imply the third and fourth properties, with the third property (independence if we do not condition on $E$) also using the fact that we chose $\alpha$ to satisfy Equation \eqref{eq:alpha}.
\end{proof}

Throughout the rest of the algorithm analysis, whenever we discuss the events $\{E_i(u, v)\}$, we will be taking the perspective of Alice from Proposition \ref{prop:Ei}. That is, our analysis will be considering both the randomness that is involved in selecting stochastic edges, and the randomness that our algorithm introduces when constructing the $E_i$s.

For now, we shall assume that $1/p$ is at least a sufficiently large constant multiple of $\log n$; this assumption means that the target size $s_i$ for each level $L_i$ satisfies $s_i = \Omega(\log n)$ for all $i$, and will allow for us to use high probability Chernoff bounds in several places.
We will remove the assumption at the end of the section and extend the analysis to consider arbitrary values of $p$.

Assume that the first $\ell$ vertices $x_1, \dots, x_{\ell}$ have been discovered, and we are in the process of identifying $x_{\ell+1}$.
Recall that, for each vertex pair $(u, v)$, and for each level $L_i$, we use $E_i(u, v)$ to denote the indicator random variable for the event that $(u, v) \in E_i$. 
Let $X_{i, v}$ be the indicator random variable
$$\OR_{j \ge i} \OR_{x_\ell \prec u \prec v} E_j(u, v)$$
for the event that any vertex pair $(u, v)$ with $x_\ell \prec u \prec v$ is contained in any of $L_i, L_{i + 1}, \ldots$. Note that there is no restriction on $u$ as to which levels contain it (all $x_\ell \prec u \prec v$ are considered). For a given vertex $v$ satisfying $x_\ell \prec v$, if $X_{i, v} = 0$ then $v$ is guaranteed to be in $L_i$, but the reverse does not hold ($v$ could be in $L_{i}$ despite $X_{i, v}$ being $1$).

The random variables $X_{i, v}$ are independent across vertices (but not across levels), since even for deterministic edge pairs we still have independence of $E(u, v) = \{E_{i}(u, v)\}_{i = 1}^{q}$ across vertex pairs (see the fourth property of Proposition \ref{prop:Ei}). Moreover, each $X_{i, v}$ satisfies the following useful inequality:
\begin{lemma} \label{lem:X}
Suppose $v$ has rank $r$. Then
$$\Pr[X_{i, v}] \le \frac{\alpha \cdot (1 + rp)}{2^{i - 1}},$$
where $\alpha$ is the positive constant defined in the construction of the $E_i$s.
\end{lemma}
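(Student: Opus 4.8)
The plan is to prove this by a direct union bound over the relevant vertex pairs and edge sets, using the probabilities supplied by Proposition \ref{prop:Ei}. First I would unpack the definition: $X_{i,v}$ is the disjunction $\OR_{j \ge i}\OR_{x_\ell \prec u \prec v} E_j(u,v)$, so by the union bound
\[
\Pr[X_{i,v}] \;\le\; \sum_{j = i}^{q}\ \sum_{x_\ell \prec u \prec v} \Pr[E_j(u,v)].
\]
The key combinatorial observation is that, since $x_1,\dots,x_\ell$ are exactly the discovered vertices, the vertex $v$ of rank $r$ is $x_{\ell+r}$, and hence the set of undiscovered vertices $u$ with $x_\ell \prec u \prec v$ is precisely $\{x_{\ell+1},\dots,x_{\ell+r-1}\}$, which has size $r-1$. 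If $r = 1$ this set is empty, so $X_{i,v}\equiv 0$ and the bound holds trivially; so assume $r \ge 2$.

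Next I would classify these $r-1$ vertex pairs $(u,v)$ as stochastic or deterministic. Exactly one of them, namely $(x_{\ell+r-1}, v) = (x_{\ell+r-1}, x_{\ell+r})$, is an edge of the true Hamiltonian path and is therefore \emph{deterministic}; the remaining $r-2$ pairs are \emph{stochastic}. Plugging the probabilities from Proposition \ref{prop:Ei} ($\Pr[E_j(u,v)] = \alpha p/2^j$ for a stochastic pair, $\Pr[E_j(u,v)] = \alpha/2^j$ for the deterministic one) into the union bound gives
\[
\Pr[X_{i,v}] \;\le\; \sum_{j=i}^{q}\left( (r-2)\cdot\frac{\alpha p}{2^{j}} + \frac{\alpha}{2^{j}} \right)
\;=\; \bigl((r-2)\,\alpha p + \alpha\bigr)\sum_{j=i}^{q}\frac{1}{2^{j}}.
\]

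Finally, I would bound the geometric sum by $\sum_{j=i}^{q} 2^{-j} \le \sum_{j\ge i} 2^{-j} = 2^{-(i-1)}$ (the truncation at $q$ only helps), and use $(r-2)\,\alpha p + \alpha \le \alpha(rp + 1)$ since $r - 2 \le r$ and $p>0$, to obtain
\[
\Pr[X_{i,v}] \;\le\; \frac{(r-2)\,\alpha p + \alpha}{2^{i-1}} \;\le\; \frac{\alpha\,(1 + rp)}{2^{i-1}},
\]
as desired. There is no real obstacle here beyond the bookkeeping: the only point requiring care is correctly counting that there are exactly $r-1$ undiscovered predecessors of $v$, of which precisely one pair is deterministic and the rest stochastic, so that the deterministic term contributes only an additive $\alpha/2^{i-1}$ rather than a multiplicative blow-up. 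The independence structure from Proposition \ref{prop:Ei} is not needed for this lemma (a union bound suffices), though it is what makes the $X_{i,v}$ independent across vertices for later use.
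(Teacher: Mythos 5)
Your proposal is correct and follows essentially the same argument as the paper: a union bound over $j \ge i$ and over the $r-1$ undiscovered predecessors of $v$, plugging in the probabilities from Proposition \ref{prop:Ei} and bounding the geometric tail by $2^{-(i-1)}$. The only (immaterial) difference is that you count exactly $r-2$ stochastic pairs and one deterministic pair, whereas the paper slightly overcounts with $r-1$ stochastic pairs; both yield the stated bound.
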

\begin{proof}
By the union bound,
\begin{align*}
\Pr[X_{i, v}] & \le \sum_{j \ge i} \phantom{f} \sum_{x_{\ell} \prec u \prec v} \Pr[E_j(u, v)]. \\
\end{align*}
Recall from Proposition \ref{prop:Ei} that each stochastic vertex pair has probability at most $\alpha \cdot p / 2^j$ of being included in $E_j$,
and that each deterministic vertex pair has probability at most $\alpha / 2^j$ of being included in $E_j$.
Thus 
\begin{align*}
\Pr[X_{i, v}] & \le \sum_{j \ge i} \left(\frac{\alpha}{2^j} + (r - 1) \frac{\alpha \cdot p}{2^j}\right) \\
              & \le \frac{\alpha \cdot (1 + rp)}{2^{i - 1}}. \qedhere
\end{align*}
\end{proof}

Whenever a level $L_i$ is rebuilt, we define the \defn{anchor vertices} $A_i$ to be the vertices in $L_i$ with ranks in the range 
\begin{equation*}
    [s_i / 16, 3s_i / 32). 
\end{equation*}
Note that, even as $L_i$ changes incrementally over time, the anchor set $A_i$ does not change until the next time that $L_i$ is rebuilt from scratch.
Moreover, because $L_i$ is rebuilt relatively frequently (once for every $s_i / 32$ vertices $x_\ell$ that are discovered),
no anchor vertices are ever removed from $L_i$ (until after the level is next rebuilt). 
Moreover, the rank of any anchor vertex $v \in A_i$ is always between $s_i/32$ and $3s_i/32$, since the rank is initially at least $s_i / 16$ and decreases by a total of $s_i / 32$ between rebuilds.

Our next lemma establishes that, with high probability, there are a reasonably large number of anchor vertices in each level at any given moment.
Recall that $r(v)$ is used to denote the rank of a given vertex $v$.

\begin{lemma} \label{lem:anchor}
Let $i > 3$. Then, with probability at least $1-n^{-10}$,
$$|A_i| \ge s_i / 128.$$
\end{lemma}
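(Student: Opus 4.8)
The plan is to bound $|A_i|$ from below by a sum of independent indicator variables and finish with a Chernoff bound. I would work at the instant that $L_i$ gets rebuilt, so that the index $\ell$ of the most recently discovered vertex $x_\ell$ is a multiple of $s_i/32$. Let $B$ be the set of vertices $v$ with rank $r(v)\in[s_i/16,\,3s_i/32)$. Since $i>3$ and $p\le 1$ give $s_i/16 = 2^i/(16p)\ge 1$, every $v\in B$ has $r(v)\ge 1$ and is therefore undiscovered; moreover $B$ depends only on the (fixed) true order and on $\ell$, not on the stochastic edges or on the algorithm's coins, and $|B|\ge s_i/32-1\ge s_i/33$ because the range has width $s_i/32$ and $s_i=\Omega(\log n)$ is large (one also checks $3s_i/32\le n/2\le n-\ell$, so there really are that many undiscovered vertices). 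By definition $A_i=B\cap L_i$.

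The structural input I would use is the fact noted when $X_{i,v}$ was defined (and following from Proposition~\ref{prop:correct1}): $X_{i,v}=0$ forces $v\in L_i$. Hence $|A_i|\ge Y$, where $Y$ is the number of $v\in B$ with $X_{i,v}=0$. To lower bound $\E[Y]$, I would bound $\Pr[X_{i,v}]$ for each $v\in B$: from $r(v)<3s_i/32=3\cdot 2^i/(32p)$ we get $r(v)\,p<3\cdot 2^i/32$, so Lemma~\ref{lem:X} gives $\Pr[X_{i,v}]\le \alpha(1+r(v)p)/2^{i-1}<\alpha/2^{i-1}+3\alpha/16$. Since $i>3$ gives $2^{i-1}\ge 8$ and $\alpha\le 2$, this is at most $\alpha/8+3\alpha/16 = 5\alpha/16\le 5/8$, i.e. $\Pr[X_{i,v}=0]\ge 3/8$ for every $v\in B$.

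Finally I would invoke concentration. The variables $\{X_{i,v}\}_{v\in B}$ are mutually independent by the fourth property of Proposition~\ref{prop:Ei}: each $X_{i,v}$ is a function of the tuples $E(u,v)$ with $u\prec v$, and for distinct $v,v'\in B$ these collections of vertex pairs are disjoint. Consequently $Y$ stochastically dominates $\Bin(|B|,3/8)$, so $\E[Y]\ge (3/8)\,|B|\ge (3/8)(s_i/33)>s_i/88$. Since the target $s_i/128$ lies below $\E[Y]$ by a fixed constant factor, a multiplicative Chernoff bound yields $\Pr[|A_i|<s_i/128]\le\Pr[Y<s_i/128]\le e^{-\Omega(s_i)}$, and because $s_i=\Omega(\log n)$ with a sufficiently large implied constant (recall $1/p$ is assumed to be a large multiple of $\log n$, and $i>3$ gives $s_i\ge 16/p$), this is at most $n^{-10}$.

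The hard part here is not conceptual: the independence of the $X_{i,v}$ across vertices has already been engineered by Proposition~\ref{prop:Ei}, so what remains is the bookkeeping, namely verifying that $\E[Y]$ genuinely exceeds the target $s_i/128$ by a constant factor. This is exactly where $i>3$ (to kill the $\alpha/2^{i-1}$ term), $\alpha\le 2$, and the slack between $s_i/16$ and $3s_i/32$ in the definition of the anchor range all get spent, leaving enough room for the Chernoff bound to beat $n^{-10}$.
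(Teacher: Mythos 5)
Your proposal is correct and follows essentially the same route as the paper's proof: lower-bound $|A_i|$ by the number of vertices $v$ with rank in $[s_i/16,\,3s_i/32)$ for which $X_{i,v}=0$, bound $\Pr[X_{i,v}]$ via Lemma~\ref{lem:X} using $r(v)p<3\cdot 2^i/32$, $\alpha\le 2$, and $i>3$ to get a constant success probability, then exploit the independence of $\{X_{i,v}\}_v$ (fourth property of Proposition~\ref{prop:Ei}) to apply a Chernoff bound with $s_i=\Omega(\log n)$. The minor differences (tracking $|B|\ge s_i/33$ explicitly, phrasing the concentration step via stochastic domination by a binomial) are just bookkeeping and do not change the argument.
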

\begin{proof}
For any vertex $v$, if $X_{i, v} = 0$, then $v$ must be in $L_i$, as there is nothing that can block it from levels $i$ or above. Therefore, 
$$|A_i| \ge \sum_{v \;: \; r(v) \in [s_i/16, 3s_i/32)} (1 - X_{i, v}).$$

By Lemma \ref{lem:X},
\begin{align*}
\E\left[\sum_{v \; : \; r(v) \in [s_i/16, 3s_i/32)} (1-X_{i, v}) \right] & \ge \frac{s_i}{32} \left(1 - \frac{\alpha\left(1 + \frac{3s_i}{32}p\right)}{2^{i - 1}}\right) \\
          &  \ge  \frac{s_i}{32} \left(1 - \frac{2\left(1 + \frac{3}{32} \cdot 2^i\right)}{2^{i - 1}}\right) \\
          & \ge \frac{3s_i}{256}, 
\end{align*}
since $\alpha \le 2$ and $i \ge 4.$ Moreover, since the variables $X_{i, v}$ are independent across vertices $v$ (by the fourth property of Proposition \ref{prop:Ei}), we can apply a Chernoff bound to deduce that $|A_i| \ge s_i / 128$ with probability at least $1-n^{-10}$, as we are assuming $s_i$ is at least a sufficiently large constant multiple of $\log n$.
\end{proof}

So far, dependencies between random variables have not posed much of an issue. In the subsequent lemmas, however, 
we will need to be careful about the interaction between which vertices are in each level, which vertices are in each anchor set, and which random variables $E_j(u_1, u_2)$ hold. 
For this, we will need the following two propositions. 

\begin{proposition} \label{prop:Ai_dependencies}
    At any time in the algorithm, for each vertex $v$ and level $i$, the event that $v \in L_i$ only depends on $E_j(u_1, u_2)$ over triples $(j, u_1, u_2)$ where $j \ge i$ and $u_1, u_2 \preceq v$ (including if $u_1$ or $u_2$ is already discovered).
\end{proposition}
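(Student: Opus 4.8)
The plan is to prove the statement by induction on the execution of the algorithm — more precisely, by induction over the sequence of ``level-modification events'' (the initial construction, each incremental update triggered by discovering some $x_\ell$, and each rebuild of a block of levels $L_i, L_{i-1}, \ldots, L_1$). The induction hypothesis is the statement itself: after each such event, for every vertex $v$ and every level $i$, membership $v \in L_i$ is a (deterministic) function only of the random variables $E_j(u_1, u_2)$ with $j \ge i$ and $u_1, u_2 \preceq v$. Since the algorithm's level-update rules never inspect edge orientations beyond what is already encoded in the ordering $\prec$ (which is what the algorithm is discovering and is here treated as fixed ground truth), the only randomness consulted is the membership of pairs in the sets $E_j$; so it suffices to track which such variables can be consulted.

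First I would handle the base case: in the initial construction \eqref{eq:filterr}, $v$ is removed from $L_i$ iff there is $u \prec v$ with $u \in L_{i+c}$ and $(u,v) \in E_i$. By the induction hypothesis applied to the already-constructed level $L_{i+c}$ (constructed before $L_i$ in the top-down order, so this is really a nested induction on $i$ decreasing), membership $u \in L_{i+c}$ depends only on $E_{j'}(u_1,u_2)$ with $j' \ge i+c \ge i$ and $u_1, u_2 \preceq u \prec v$, hence $u_1, u_2 \preceq v$; and the extra consulted variable $E_i(u,v)$ has $j = i$ and both endpoints $\preceq v$. So the conditions are met. For the inductive step on events, there are two cases. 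For an incremental update after discovering $x_\ell$: a vertex $v$ is advanced from $L_{k+1}$ down through levels $L_k, L_{k-1}, \ldots$, stopping at the first level $L_{k'}$ where some $u \prec v$ with $u \in L_{k'+c}$ has $(u,v) \in E_{k'}$; again each consulted variable is $E_{k'}(u,v)$ with $k' \ge i$ (once we ask about level $L_i$) and endpoints $\preceq v$, and each consulted membership $u \in L_{k'+c}$ is, by the hypothesis on the current state, a function of permissible variables (noting $k'+c \ge i$ and $u \prec v$). Crucially, $v$ itself is only ever advanced, and its final resting level is determined by consulting levels strictly above it, so no circularity arises. For a rebuild of $L_i, L_{i-1}, \ldots, L_1$: this is literally a re-application of \eqref{eq:filterr} in top-down order, so the base-case argument applies verbatim, using the induction hypothesis for the (unchanged) levels $L_{i+c}, \ldots, L_{i+1}$ above the rebuilt block and, inductively within the rebuild, for the freshly-rebuilt levels.

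The main obstacle I anticipate is bookkeeping the ``cause-and-effect chains'' described in the technical overview — where unpromoting a vertex from $L_i$ can cause a promotion at $L_{i-c}$, which causes an unpromotion at $L_{i-2c}$, and so on. The key observation that tames this is the containment order $L_1 \subseteq L_2 \subseteq \cdots$ together with the fact that every consultation for level $L_i$ reaches only \emph{upward} to level $L_{i+c}$ (which has $i+c > i$, so any variable indexed by a level $\ge i+c$ is automatically indexed by a level $\ge i$) and only to vertices $u \preceq v$ (which transitively stay $\preceq v$). Thus the dependency graph respects the partial order ``$(i, v) \succeq (i', v')$ when $i \ge i'$ and $v \preceq v'$'' is not quite the right framing — rather, one shows by induction that the set of consulted variables for $(v, L_i)$ stays inside $\{E_j(u_1,u_2) : j \ge i,\ u_1,u_2 \preceq v\}$, and this set is closed under all the operations the algorithm performs, since reaching to $L_{i+c}$ only enlarges $j$ and reaching to a vertex $u \prec v$ only shrinks the ``$\preceq v$'' bound. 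Once this closure property is isolated, the chains become harmless: each link moves either to a strictly higher level index or to a strictly earlier vertex, both of which keep us inside the allowed set, and there is no way for a chain to loop back and force consultation of some $E_j(u_1,u_2)$ with $j < i$ or with an endpoint $\succ v$.
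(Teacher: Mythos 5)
Your plan is correct in spirit, and the decisive observation you isolate — the \emph{closure property} (every consultation the algorithm makes when deciding $v \in L_i$ reaches only to a strictly higher level index $\ge i+c$ and only to vertices $u \prec v$, so the allowed set $\{E_j(u_1,u_2) : j \ge i,\ u_1,u_2 \preceq v\}$ is invariant) — is exactly the content of the paper's argument. Where you differ is the \emph{decomposition}: you run an outer induction over the sequence of level-modification events (time), with a nested inner induction on $i$ within each event; the paper instead runs a single downward induction on $i$, with the ``at any time'' quantifier baked directly into the inductive hypothesis. Concretely, the paper's IH for a given $i$ asserts that, for \emph{all} $j' > i$, all $u$, and all times simultaneously, membership $u \in L_{j'}$ is a function of the allowed variables; fixing those variables therefore fixes the entire time-trajectory of $L_{i+1} \cap \{u : u \preceq v\}, \ldots, L_{i+c} \cap \{u : u \preceq v\}$, after which $v$'s membership in $L_i$ at every time depends only on $E_i(u,v)$ for $u \preceq v$, finishing the step in one stroke. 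This sidesteps precisely the bookkeeping you flag as the ``main obstacle'': since the IH covers all times, one never needs to reason about partially-updated states in the middle of an incremental update (where some blocked vertices have already been advanced and others not), which is where your phrase ``by the hypothesis on the current state'' glosses over a real subtlety — the state consulted is not always the pre-event state your outer IH describes. Your approach can be made rigorous (the closure property propagates through each sub-step of an event, and a further inner induction on the sub-steps handles the mid-event modifications), but the paper's reordering of the quantifiers makes the argument shorter and avoids that extra layer. One small omission in your base case: the initial construction of $L_i$ consults $L_{i+1}$ (to know which vertices are eligible) as well as $L_{i+c}$ (to find blocking vertices), and you mention only the latter; this is harmless since $i+1 \ge i+1 > i$ so the IH applies to it as well.
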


\begin{proof}
    We prove this by induction on $i$. For $i > q,$ the proposition is trivial since every vertex $v$ (that is not yet discovered) is in level $L_i$. Assume the claim is true for levels $i+1, i+2, \dots$. 
    
    Define the sets $L'_j = L_j \cap \{u \mid u \preceq v\}$. By the inductive hypothesis, the sets $L'_{i + 1}, L'_{i + 2}, \ldots$ depend only on $E_j(u_1, u_2)$ over triples $(j, u_1, u_2)$ where $j \ge i + 1$ and $u_1, u_2 \preceq v$. If we fix the outcomes of those $E_j(u_1, u_2)$s, thereby fixing the outcomes of $L'_{i + 1}, L'_{i + 2}, \ldots$, then whether or not $v \in L_i$ depends only on $E_j(u, v)$ where $i = j$ and $u \preceq v$. Thus whether or not $v \in L_i$ depends only on the allowed $E_j(u_1, u_2)$ variables.
\end{proof}


\begin{proposition} \label{prop:independence_2}
    Conditioned on the sets $A_i$ over all $i$, the random variables $E_i(u, v)$ are jointly independent over all triples $(i, u, v)$ with $u \in A_{i+c}$ and with $v$ satisfying $r(v) \ge s_{i+c}/8$. Moreover, for each such triple $(i, u, v)$, the probability $\Pr[E_i(u, v) = 1]$ remains $\alpha \cdot p/2^i$, even after conditioning on the sets $A_i$.
\end{proposition}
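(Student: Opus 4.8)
The plan is to reduce the proposition to two structural facts and then invoke Proposition~\ref{prop:Ei}. Fix a moment in the execution and let $\ell_0$ be the number of vertices discovered so far; I will read ``$u\in A_{i+c}$'' and ``$r(v)\ge s_{i+c}/8$'' as referring to this moment, and ``the sets $A_i$'' as the anchor sets currently in force (one per level). The two facts I want to establish are: (i) the family $\{A_i\}_i$ of current anchor sets is a deterministic function of some collection $\mathcal S$ of edge indicators $E_j(u_1,u_2)$, and no indicator $E_i(u,v)$ coming from a triple $(i,u,v)$ as in the statement belongs to $\mathcal S$; and (ii) every such pair $(u,v)$ is stochastic. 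Granting (i) and (ii), the third and fourth properties of Proposition~\ref{prop:Ei} make all stochastic indicators mutually independent and jointly independent of the deterministic tuples $E(a,b)$, so the family $\{E_i(u,v)\}$ ranging over the triples in the statement consists of mutually independent stochastic indicators which are, moreover, jointly independent of everything in $\mathcal S$ and hence of $\{A_i\}_i$. Conditioning on $\{A_i\}_i$ therefore leaves them mutually independent, each with $\Pr[E_i(u,v)=1]=\alpha p/2^i$ by the first property of Proposition~\ref{prop:Ei}, which is exactly the assertion.

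To prove (i), the first step is to note that the rebuild schedule and the discovered set are deterministic given $\ell_0$: by the correctness argument the algorithm discovers $x_1,x_2,\dots$ in order, so the discovered vertices are exactly $\{x_1,\dots,x_{\ell_0}\}$, and the most recent rebuild of any level $L_{i'}$ happened at a time $\ell'_{i'}\le\ell_0$ that is a fixed multiple of $s_{i'}/32$. Hence the candidate anchor vertices of $A_{i'}$, namely those of rank in $[s_{i'}/16,3s_{i'}/32)$ at time $\ell'_{i'}$, form a deterministic set, each member $w$ of which has index (true-order position) below $\ell'_{i'}+3s_{i'}/32\le\ell_0+3s_{i'}/32$. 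Which candidates actually lie in $A_{i'}$ is decided by whether $w\in L_{i'}$, and by Proposition~\ref{prop:Ai_dependencies} this depends only on indicators $E_j(u_1,u_2)$ with $j\ge i'$ and $u_1,u_2\preceq w$, hence of index below $\ell_0+3s_{i'}/32$. Taking $\mathcal S$ to be the union over all $i'$ of these indicator sets realizes the first half of (i); the second half is the content of the next step.

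The remaining verifications are arithmetic. For the ``$E_i(u,v)\notin\mathcal S$'' part: if $u\in A_{i+c}$ and $r(v)\ge s_{i+c}/8$ then $v$ has index at least $\ell_0+s_{i+c}/8$, whereas $E_i(u,v)\in\mathcal S$ would force $v$ to have index below $\ell_0+3s_{i'}/32$ for some $i'\le i$, hence below $\ell_0+3s_i/32$; since $s_{i+c}=2^cs_i$ with $c\ge 0$ we have $s_{i+c}/8\ge s_i/8>3s_i/32$, a contradiction. For (ii): $u\in A_{i+c}$ gives $r(u)<3s_{i+c}/32$, so $r(v)-r(u)>s_{i+c}/32\ge s_1/32=1/(16p)>1$ since $1/p$ is a large multiple of $\log n$; thus $u$ and $v$, both undiscovered, occupy true-order positions more than one apart, i.e.\ $(u,v)$ is a stochastic vertex pair.

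I expect the second paragraph --- identifying $\mathcal S$ --- to be the main obstacle. The subtle points there are that the rebuild times and the discovered set have to be certified deterministic (so that ``candidate anchor vertex'' is not itself a random event), that Proposition~\ref{prop:Ai_dependencies} must be applied to the frozen candidate vertices at precisely the right level, and that the index bound $\ell_0+3s_{i'}/32$ it yields is comfortably below the index lower bound $\ell_0+s_{i+c}/8$ that the rank hypothesis on $v$ provides. Everything after $\mathcal S$ is correctly pinned down is a mechanical invocation of the independence structure of Proposition~\ref{prop:Ei}.
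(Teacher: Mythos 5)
Your proposal is correct and follows essentially the same route as the paper's proof: identify the set of indicators $E_j(u_1,u_2)$ that determine all the anchor sets (your $\mathcal S$, the paper's $T_1$), check that this set is disjoint from the statement's triples $T_2$, verify the $T_2$ pairs are stochastic, and then invoke the independence properties of Proposition~\ref{prop:Ei}. The only difference is that you are somewhat more explicit about the determinism of the rebuild schedule and the timing of anchor-set formation, which the paper glosses over but implicitly relies on.
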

\begin{proof}
    Note that $A_i$ only depends on which $v$ of ranks between $s_i/16$ and $3s_i/32$ are in level $L_i$. Therefore, by Proposition \ref{prop:Ai_dependencies}, $\{A_i\}$ over all $i$ is strictly a function of $E_{j}(u, v)$ over all choices $(i, j, u, v)$ with $j \ge i$ and $u, v \preceq x_{\ell+(3s_i/32)} \preceq x_{\ell + (3s_j/32)}$ (possibly including already discovered vertices $u, v$).  Simplifying, we get that $\{A_i\}$ over all $i$ is strictly a function of $E_{j}(u, v)$ over all choices of $(j, u, v)$ satisfying $u, v \preceq x_{\ell + (3s_j/32)}$. Let $T_1$ denote the set of such triples $(j, u, v)$.

    
    We wish to prove the independence of $\{E_i(u, v)\}$ over the set $T_2$ of triples $(i, u, v)$ with $u \in A_{i+c}$ and $r(v) \ge s_{i+c}/8$. 
    Note that if a triple $(i, u, v)$ is in $T_2$, then we must have that $x_{\ell + (3s_i/32)} \prec u, v$, so $(i, u, v) \not\in X$. Moreover, $r(v) \ge s_{i+c}/8 \ge 3s_{i+c}/32 + 2 \ge r(u)+2,$ so $(u, v)$ is a stochastic vertex pair. Thus, the triples $(i, u, v) \in T_2$ do not include any deterministic vertex pairs $(u, v)$, and are disjoint from the triples $T_1$ on which the anchor sets $A_i$ depend. The conclusion follows from the first and third properties of from Proposition \ref{prop:Ei}.
\end{proof}

Now, for each vertex $v$ satisfying $x_\ell\prec v$ and for each level $L_i$ such that $r(v) \ge s_{i + c} / 8$,
define the indicator random variable
$$Y_{i, v} = \OR_{u \in A_{i + c}} E_i(u, v).$$
In order for $v$ to be in any of the levels $L_1, \ldots, L_i$, we must have that $Y_{i, v} = 0$.

The next lemma bounds the probability that $Y_{i, v} = 0$ for a given vertex $v$.

\begin{lemma}
Consider a vertex $v$ satisfying $x_\ell\prec v$, and suppose that we condition on the $A_i$s such that each $A_i$ has size at least $s_i/128$. Then, for each level $L_i$ we have
$$\Pr[Y_{i, v}] \ge 1 - e^{-2^c / 128}.$$
Moreover, conditioned on the $A_i$s, the $Y_{i,v}$s are mutually independent.
\label{lem:Y}
\end{lemma}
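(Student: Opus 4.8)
The plan is to condition on the anchor sets $\{A_i\}$ (assuming each $|A_i| \ge s_i/128$) and then analyze $Y_{i,v} = \OR_{u \in A_{i+c}} E_i(u,v)$ as an OR of independent indicators. First I would invoke Proposition \ref{prop:independence_2}: since $v$ satisfies $r(v) \ge s_{i+c}/8$ and each $u \in A_{i+c}$, the triple $(i,u,v)$ falls in the set $T_2$ on which the $E_i(u,v)$ are jointly independent even after conditioning on the anchor sets, and each still has $\Pr[E_i(u,v)=1] = \alpha p / 2^i$. Hence $1 - \Pr[Y_{i,v}] = \prod_{u \in A_{i+c}} (1 - \alpha p / 2^i)$.

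Next I would lower-bound the number of factors and upper-bound each factor. Using $|A_{i+c}| \ge s_{i+c}/128 = 2^{i+c}/(128 p)$ and $1 - x \le e^{-x}$, we get
\[
1 - \Pr[Y_{i,v}] = \left(1 - \frac{\alpha p}{2^i}\right)^{|A_{i+c}|} \le \exp\left(-\frac{\alpha p}{2^i} \cdot \frac{2^{i+c}}{128 p}\right) = \exp\left(-\frac{\alpha \cdot 2^c}{128}\right) \le e^{-2^c/128},
\]
where the last step uses $\alpha \ge 1$ from Proposition \ref{prop:Ei}. Rearranging gives $\Pr[Y_{i,v}] \ge 1 - e^{-2^c/128}$, as desired.

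For the mutual independence claim across levels $i$ (for a fixed $v$), I would again appeal to Proposition \ref{prop:independence_2}: conditioned on all the $A_i$, the entire collection $\{E_i(u,v)\}$ over triples $(i,u,v) \in T_2$ is jointly independent; since $Y_{i,v}$ is a function only of $\{E_i(u,v) : u \in A_{i+c}\}$ — and these variable-sets are disjoint across distinct $i$ (the index $i$ of $E_i$ differs) — the $Y_{i,v}$ are mutually independent. One subtlety to state carefully is that the set $A_{i+c}$ is determined by the conditioning, so once we condition, "$u \in A_{i+c}$" is a fixed (deterministic) index set, which is exactly what lets us treat $Y_{i,v}$ as a fixed Boolean combination of the independent $E_i(u,v)$'s.

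The main obstacle is not the calculation but making sure the conditioning is handled correctly: we need that conditioning on $\{A_i\}$ does not distort the distribution of the relevant $E_i(u,v)$'s, and that the anchor sets depend only on variables disjoint from those defining $Y_{i,v}$. Proposition \ref{prop:independence_2} is precisely engineered to supply both facts — the disjointness of $T_1$ (triples the $A_i$ depend on) from $T_2$, and the preservation of the marginal probability $\alpha p / 2^i$ — so the real work is just citing it with the right parameters and checking that the hypothesis $r(v) \ge s_{i+c}/8$ in the lemma statement matches the hypothesis in Proposition \ref{prop:independence_2}.
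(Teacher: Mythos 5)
Your proposal is correct and follows essentially the same route as the paper: both condition on the anchor sets, invoke Proposition \ref{prop:independence_2} to get that the $E_i(u,v)$ for $u \in A_{i+c}$ are independent Bernoulli$(\alpha p/2^i)$ variables, compute $1 - \Pr[Y_{i,v}]$ as a product, apply $1-x \le e^{-x}$ together with $|A_{i+c}| \ge s_{i+c}/128 = 2^{i+c}/(128p)$, and use $\alpha \ge 1$ to absorb the constant. The paper drops $\alpha$ one step earlier and you drop it at the end, but this is a cosmetic difference; your handling of the independence claim (disjointness of the underlying $E_i(u,v)$ index sets across distinct $i$, plus the conditioning turning $A_{i+c}$ into a fixed index set) is the same appeal to Proposition \ref{prop:independence_2} that the paper makes.
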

\begin{proof}
By Proposition \ref{prop:independence_2}, we can use the independence of the $E_i(u, v)$s conditioned on the $A_i$s to conclude that
\begin{align*}
\Pr[Y_{i, v}] & \ge 1 - \left(1 - \frac{\alpha \cdot p}{2^i}\right)^{|A_{i + c}|} \\
         & \ge 1 - \left(1 - \frac{p}{2^i}\right)^{s_{i + c} / 128} \\
         & = 1 - \left(1 - \frac{p}{2^i}\right)^{2^{i + c} / (128 p)} \\
         & \ge 1 - e^{-2^c / 128}.
\end{align*}
Moreover, conditioned on the $A_i$s, Proposition \ref{prop:independence_2} implies that the random variables $Y_{i, v}$ are all mutually independent.
\end{proof}

Using these observations, we can now bound to the size of each $L_i$.
\begin{lemma}\label{lem:size}
Each $L_i$ has size at most $O(s_i)$ with probability at least $1 - 2n^{-10}$.
\end{lemma}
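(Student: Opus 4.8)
The plan is to bound $|L_i|$ by partitioning the undiscovered vertices by their current rank. Since $c$ is a fixed constant, $s_{i+c} = 2^c s_i = \Theta(s_i)$, so the at most $s_{i+c}/8$ undiscovered vertices of rank below $s_{i+c}/8$ contribute only $O(s_i)$ even if all of them happen to lie in $L_i$; I will allow this unconditionally. It therefore suffices to show that, with probability at least $1-2n^{-10}$, the number $N$ of vertices $v\in L_i$ with $r(v)\ge s_{i+c}/8$ is $O(s_i)$. For this I will first bound $\E[N]$ (after a suitable conditioning) and then upgrade to a high-probability bound by exhibiting $N$ as dominated by a sum of independent indicators, exactly as the technical-overview heuristic \eqref{eq:filter} suggests.

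For a high-rank vertex $v$, place it in the band $k\ge 0$ with $2^k s_{i+c}/8\le r(v) < 2^{k+1}s_{i+c}/8$, so that $r(v)\ge s_{(i+k)+c}/8$, and hence $r(v)\ge s_{j+c}/8$ for all $i\le j\le i+k$. Because $L_i\subseteq L_{i+1}\subseteq\cdots$, having $v\in L_i$ forces $v\in L_j$ for every $j\ge i$, and therefore (by the discussion preceding Lemma~\ref{lem:Y}, applicable at level $j$ precisely because $r(v)\ge s_{j+c}/8$) forces $Y_{i,v}=Y_{i+1,v}=\cdots=Y_{i+k,v}=0$. Now condition on the anchor sets $\{A_j\}$, in particular on the event $\mathcal E$ that $|A_{j+c}|\ge s_{j+c}/128$ for every relevant index $j$; there are $O(\log(pn))$ such indices, so Lemma~\ref{lem:anchor} and a union bound give $\Pr[\overline{\mathcal E}]\le n^{-10}$ (the constant $10$ in Lemma~\ref{lem:anchor} being taken large enough to absorb the union bound). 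Conditioned on $\mathcal E$, Lemma~\ref{lem:Y} gives $\Pr[Y_{j,v}]\ge 1-e^{-2^c/128}$ and that the $Y_{j,v}$ are mutually independent, so
\[
\Pr[v\in L_i\mid\mathcal E]\;\le\;\bigl(e^{-2^c/128}\bigr)^{k+1}.
\]
Band $k$ contains at most $2^{k+1}s_{i+c}/8 = O(2^k s_i)$ undiscovered vertices, so summing this geometric-type series over $k\ge 0$ — convergent once $c$ is a large enough constant that $2e^{-2^c/128}<1$ — yields $\E[N\mid\mathcal E]=O(s_i)$. To boost this to a high-probability statement, note that $\mathbf 1[v\in L_i]\le Z_v:=\prod_{j=i}^{i+k}(1-Y_{j,v})$, which is a $\{0,1\}$-valued function of the variables $E_j(u,v)$ with $u\in A_{j+c}$ and $i\le j\le i+k$; for all of these $r(v)\ge s_{j+c}/8 > r(u)$, so $v$ is always the higher-ranked endpoint of the pair, and hence the variable sets for distinct high-rank $v$ are disjoint. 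By Proposition~\ref{prop:independence_2}, conditioned on $\mathcal E$ the $Z_v$ are therefore mutually independent, so $N\le\sum_v Z_v$ is a sum of independent indicators of mean $O(s_i)$; using the standing assumption that $1/p$, and hence every $s_i$, is a sufficiently large multiple of $\log n$, a Chernoff bound gives $\Pr[\sum_v Z_v\ge C s_i\mid\mathcal E]\le n^{-10}$ for an appropriate constant $C$. Combining the $n^{-10}$ failure probability of $\mathcal E$, the $n^{-10}$ failure probability of the Chernoff bound, and the $\le s_{i+c}/8 = O(s_i)$ contribution of the low-rank vertices gives $|L_i|=O(s_i)$ with probability at least $1-2n^{-10}$.

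The main obstacle is not the counting — which is a direct formalization of \eqref{eq:filter} — but the dependency bookkeeping. Two points need care. First, that for a high-rank $v$ the events $\{Y_{j,v}=0\}_{j=i}^{i+k}$ really are implied by $v\in L_i$: this needs that at every moment since the last rebuild of $L_{j+c}$ (whether $v$ entered $L_j$ in a rebuild or an incremental update), all anchor vertices of $A_{j+c}$ strictly precede $v$, which is exactly why $A_{j+c}$ consists of vertices of rank in $[s_{j+c}/16,3s_{j+c}/32)$ that are never removed before a rebuild, and why the high-rank threshold $s_{j+c}/8$ is chosen. Second, that conditioning on the anchor sets neither alters the per-edge probabilities $\alpha p/2^j$ nor destroys the joint independence needed both within a single vertex's chain (across $j$) and across distinct vertices (across the third coordinate) — this is precisely the content of Propositions~\ref{prop:Ai_dependencies} and \ref{prop:independence_2} and of Lemma~\ref{lem:Y}, which is why those were established first. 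I expect carving out the right conditioning $\mathcal E$ and invoking these independence statements with the correct index ranges to be the only delicate part; everything else is routine Chernoff/union-bound arithmetic.
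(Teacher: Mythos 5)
Your proof is correct and follows essentially the same approach as the paper's: you set aside the $O(s_i)$ low-rank vertices, condition on the anchor sets being large via Lemma~\ref{lem:anchor}, observe that a high-rank $v$ can appear in $L_i$ only if the chain of indicators $Y_{j,v}$ all vanish (which is exactly the paper's $Z_v$, up to whether $j=i$ is included), and then apply a Chernoff bound using the per-vertex independence supplied by Proposition~\ref{prop:independence_2} / Lemma~\ref{lem:Y}. The only cosmetic difference is that you organize the high-rank vertices into dyadic bands and sum a geometric series, whereas the paper writes $\Pr[Z_v]\le (s_i/r(v))^2$ directly and sums over ranks; the argument and the dependency bookkeeping are otherwise identical.
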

\begin{proof}
Recall from Lemma \ref{lem:anchor} that, with probability at least $1 - n^{-10}$,  we have $|A_j| \ge s_j / 128$ for every $A_j$. Condition on some fixed choice of the $A_j$s, such that $|A_j| \ge s_j / 128$ for each $A_j$. Fix $i$, and for each vertex $v$ with $r(v) \ge s_{i+c}/8$, let 
$$Z_v = \AND_{j > i \text{ such that } s_{j + c} / 8 \le r(v)} \phantom{f} (Y_{j, v} = 0).$$
We claim that, if $v \in L_i$, then $Z_v$ must occur.
If $Z_v$ does not occur, then there is some $j > i$ such that $s_{j+c}/8 \le r(v)$ and some $u \in A_{j+c}$ such that $E_j(u, v) = 1.$ But then, at the last time level $L_{j+c}$ (and thus all lower levels) was rebuilt, $u$ would block $v$ from coming to level $j$ (and thus from coming to level $i$), and since $u$ has not been removed yet, $v$ must not be in $L_i$.

For a given $v$ with $r(v) \ge s_{i+c}/8$, since $Z_v$ depends on $\log (r(v) / s_i) - O(1)$ different $Y_{j, v}$s, we have by Lemma \ref{lem:Y} that
$$\Pr[Z_v] \le (e^{-2^c / 128})^{\log (r(v) / s_i) - O(1)} \le 4^{-\log (r(v) / s_i)} \le (s_i/r(v))^2,$$
assuming the constant $c$ is sufficiently large. The expected number of $v$ satisfying $r(v) \ge s_{i+c}/8$ for which $Z_v$ holds is therefore at most 
$$\sum_{r = s_{i+c}}^n (s_i / r(v))^2 \le O(s_i).$$
By a Chernoff bound (which can be used since the $Z_v$'s are independent by Lemma \ref{lem:Y}), the total number of vertices $v$ for which $r(v) \ge s_{i+c}/8$ and $Z_v$ holds is $O(s_i)$ with failure probability at most $n^{-10}$.
This, in turn, means that
$$|\{v\mid r(v) \ge s_{i+c}/8\}\cap L_i| \le O(s_i).$$

On the other hand, $L_i$ can contain at most $s_{i+c}/8$ vertices with $r(v) < s_{i+c}/8$.
Thus, $|L_i| \le O(s_i)$.
\end{proof}

Having established the basic properties of the levels, we now bound the total number of comparisons of the various components of the algorithm.

\begin{lemma}
The total expected number of comparisons spent rebuilding levels (from scratch) is $O(n \log (pn))$.
\end{lemma}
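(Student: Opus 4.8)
The plan is to bound the expected cost of a single rebuild of a fixed level $L_i$ by $O(s_i)$ and then sum over all rebuilds. Recall that only the levels $L_1,\dots,L_q$ are ever rebuilt (the top levels $L_{q+1},\dots,L_{q+c}$ are always equal to the set of undiscovered vertices and never get rebuilt, so they contribute nothing), that a rebuild of $L_i$ is triggered exactly when the number $\ell$ of discovered vertices hits a multiple of $s_i/32$ and hence occurs $O(n/s_i)$ times while discovering $x_1,\dots,x_{n/2}$ (counting the initial construction as a rebuild at $\ell=0$; here we use $s_i=O(n)$ for $i\le q$), and that one rebuild of $L_i$ examines, for each $v$ in the current $L_{i+1}$, the edges of $E_i$ joining $v$ to the current $L_{i+c}$ (it may stop early after finding a blocker, which only helps). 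Thus the number of comparisons in one rebuild of $L_i$ is at most $\sum_{v\in L_{i+1}}|\{u\in L_{i+c}:(u,v)\in E_i\}|$, where $L_{i+1},L_{i+c}$ are their states at the rebuild moment. Once we show the expectation of this is $O(s_i)$, linearity gives a total of $O(n/s_i)\cdot O(s_i)=O(n)$ for rebuilding $L_i$, and summing over the $q=O(\log(pn))$ rebuilt levels yields $O(n\log(pn))$.

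To bound the expected cost of one rebuild, I would split the examined edges into deterministic and stochastic vertex pairs. A vertex $v$ belongs to at most two deterministic pairs (one with its predecessor, one with its successor), so these contribute at most $2|L_{i+1}|$ comparisons, and by Lemma \ref{lem:size} applied to $L_{i+1}$ at the rebuild moment (together with the trivial bound that a level has at most $n$ vertices) $\E[|L_{i+1}|]=O(s_{i+1})=O(s_i)$. For the stochastic pairs, linearity of expectation gives
\[
\E\Big[\sum_{v\in L_{i+1}}|\{u\in L_{i+c}:(u,v)\in E_i,\ (u,v)\text{ stochastic}\}|\Big]=\sum_{(u,v)\text{ stochastic}}\Pr\big[v\in L_{i+1},\,u\in L_{i+c},\,(u,v)\in E_i\big].
\]
The crucial observation is that each summand factors. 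By Proposition \ref{prop:Ai_dependencies}, the events $\{v\in L_{i+1}\}$ and $\{u\in L_{i+c}\}$ at the rebuild moment are measurable with respect to the variables $\{E_j(u_1,u_2):j\ge i+1\}$; and for a stochastic pair $(u,v)$, properties 3 and 4 of Proposition \ref{prop:Ei} together imply that $E_i(u,v)$ is independent of that entire collection (property 4 handles all pairs other than $(u,v)$, and property 3 — mutual independence of $E_1(u,v),\dots,E_q(u,v)$ for a stochastic pair — handles $(u,v)$ itself, since $i\notin\{i+1,i+2,\dots\}$). Hence each summand equals $\Pr[v\in L_{i+1},\,u\in L_{i+c}]\cdot\Pr[(u,v)\in E_i]\le \Pr[v\in L_{i+1},\,u\in L_{i+c}]\cdot 2p/2^i$, so the stochastic contribution is at most $(2p/2^i)\,\E[|L_{i+1}|\cdot|L_{i+c}|]$.

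Finally, $\E[|L_{i+1}|\cdot|L_{i+c}|]=O(s_i^2)$: by Lemma \ref{lem:size} (union bound over the two levels) with probability $1-O(n^{-10})$ both $|L_{i+1}|=O(s_{i+1})$ and $|L_{i+c}|=O(s_{i+c})$, which since $c$ is a constant are both $O(s_i)$; on the complementary $O(n^{-10})$-probability event we bound the product crudely by $n^2$, contributing only $O(n^{-8})=o(1)$. Thus the stochastic contribution is $(2p/2^i)\cdot O(s_i^2)=(2p/2^i)\cdot O(4^i/p^2)=O(2^i/p)=O(s_i)$, matching the deterministic bound, so one rebuild of $L_i$ costs $O(s_i)$ in expectation and the overall bound follows as above.

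The step I expect to be the main obstacle is the dependency bookkeeping in the second paragraph: the level-membership indicators and the edge-presence indicator are all functions of overlapping collections of the randomized edge sets $E_j$, so the factorization $\Pr[v\in L_{i+1},u\in L_{i+c},(u,v)\in E_i]=\Pr[v\in L_{i+1},u\in L_{i+c}]\cdot\Pr[(u,v)\in E_i]$ is not automatic (in particular, the anti-correlation built into a tuple $E(u,v)$ for a \emph{deterministic} pair would break it). The resolution is to isolate the stochastic pairs — for which property 3 of Proposition \ref{prop:Ei} restores the needed independence — from the deterministic pairs, which are too few (at most two per vertex) to matter, and to invoke Proposition \ref{prop:Ai_dependencies} to certify that the membership events only read the variables $E_j$ with $j\ge i+1$, precisely those independent of $E_i(u,v)$ for a stochastic pair. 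The remaining work, converting the high-probability size bound of Lemma \ref{lem:size} into a bound on the expectation of a product of two level sizes, is routine given the crude $n^2$ fallback on the failure event.
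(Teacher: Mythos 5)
Your proof is correct and takes essentially the same route as the paper's: the paper likewise bounds each rebuild of $L_i$ by $O(s_i)$ expected queries using Proposition \ref{prop:Ai_dependencies} to certify that $L_{i+1}$ and $L_{i+c}$ depend only on $E_{i+1},\dots,E_q$, invokes Proposition \ref{prop:Ei} for independence of $E_i$ on stochastic pairs, splits deterministic from stochastic pairs, and multiplies by $O(n/s_i)$ rebuilds per level and $O(\log(pn))$ levels. The main difference is cosmetic: you make explicit the conversion from the high-probability size bound of Lemma \ref{lem:size} to a bound on $\E[|L_{i+1}|\cdot|L_{i+c}|]$ via the crude $n^2$ fallback on the $O(n^{-10})$-probability failure event, and you spell out precisely which two properties of Proposition \ref{prop:Ei} combine to give the factorization — steps the paper compresses into a single clause.
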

\begin{proof}
It suffices to show that, for each level $L_i$, the expected number of comparisons spent performing rebuilds on $L_i$ is $O(n)$.
We deterministically perform $O(n / s_i)$ rebuilds on $L_i$. Each time that we perform a rebuild,
we must query all of the edges in $E_i$ that go from $L_{i + 1}$ to $L_{i + c}$. 
By Lemma \ref{lem:size}, we know that $|L_{i + 1}|$ and $|L_{i + c}|$ are $O(s_i)$ with high probability in $n$.
Moreover, by Proposition \ref{prop:Ai_dependencies}, the levels $L_{i + 1}$ and $L_{i + c}$ only depend on $E_{i+1}, \dots, E_q$, so by Proposition \ref{prop:Ei} they are independent of the random variables $E_i(u, v)$ 
ranging over the stochastic vertex pairs $(u, v)$.
Therefore, if we assume that $|L_i|$ and $|L_{i + c}|$ are $O(s_i)$, then the expected number of edges that we must query for the rebuild is
$$O\left(s_i + s_i^2 \cdot \frac{\alpha \cdot p}{2^i}\right) = O(s_i),$$
since there are at most $|L_{i+1}|+|L_{i+c}| = O(s_i)$ deterministic vertex pairs and $O(s_i^2)$ stochastic vertex pairs that we might have to query, and each of the stochastic pairs is included in $E_i$ with probability $\alpha \cdot p / 2^i$.

In summary, there are $O(n / s_i)$ rebuilds of level $L_i$, each of which requires $O(s_i)$ comparisons in expectation. The total number of comparisons from performing rebuilds on $L_i$ is therefore $O(n)$ in expectation.
\end{proof}

Recall that, each time that we discover a new vertex $x_\ell$ in the true order, we must revisit
each vertex $v$ and each $L_i$ such that $v \in L_i$ and $(x_\ell, v) \in E_{i - 1}$. Because the vertex
$x_\ell$ has been discovered, it is now removed from all of the levels, and thus
the edge $(x_\ell, v)$ no longer blocks $v$ from advancing to level $L_{i - 1}$.
If there is another edge $(u, v) \in E_{i - 1}$ such that $u \prec v$ and $u \in L_{i + c - 1}$,
then the vertex $v$ remains blocked from advancing to level $L_{i - 1}$. 
If, however, $v$ is no longer blocked from advancing, and $v$ will advance some number $k \ge 1$
of levels. In this case, we say that $v$ performs $k$ \defn{incremental advancements}.

\begin{lemma}
The total expected number of queries performed for incremental advancements is $O(n \log (pn))$.
\end{lemma}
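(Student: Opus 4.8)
The plan is to reduce the lemma to two facts: (a) every promotion test performed during an incremental update costs $O(1)$ queries in expectation, and (b) the total expected number of promotion tests performed during incremental updates is $O(n\log(pn))$. For (a), recall that a promotion test deciding whether a vertex $v$ currently in $L_{i+1}$ should advance to $L_i$ queries exactly the edges of $E_i$ joining $v$ to $L_{i+c}$ (ignoring the just-removed blocker). By Lemma~\ref{lem:size} we may assume $|L_{i+c}| = O(s_{i+c})$, and by Proposition~\ref{prop:Ai_dependencies} the set $L_{i+c}$ is a function of $E_{i+1}, \dots, E_q$ only; hence, by Proposition~\ref{prop:Ei}, it is independent of the variables $E_i(u,v)$ over stochastic pairs $(u,v)$ -- exactly the situation handled in the preceding (rebuild-cost) lemma. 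Thus the expected number of stochastic edges examined is at most $|L_{i+c}|\cdot \alpha p/2^i = O(s_{i+c}\cdot p/2^i) = O(2^c) = O(1)$, to which we add the at most one deterministic edge incident to $v$ (namely the edge to $v$'s true-order predecessor) that might be examined. So each promotion test costs $O(1)$ queries in expectation, and it remains to count them.

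For (b), fix a discovery of $x_\ell$. The incremental update revisits exactly the pairs $(v,L_i)$ with $v\in L_i$ and $(x_\ell,v)\in E_{i-1}$, performing one promotion test on each such $v$ at the transition $L_i\to L_{i-1}$ and then, if $v$ becomes unblocked, one further test for each of the $k\ge 1$ levels that $v$ incrementally advances (plus possibly one concluding failed test). Hence the number of tests triggered by $x_\ell$ is at most $(\#\text{revisited pairs}) + (\#\text{incremental advancements triggered by }x_\ell)$. I would bound the two terms separately. For the revisited pairs: for each level $L_i$, condition on $L_i$, which by Proposition~\ref{prop:Ai_dependencies} depends only on $E_i,\dots,E_q$ and is therefore, on stochastic pairs, independent of the indicators $\{E_{i-1}(x_\ell,v)\}_v$; the expected number of $v\in L_i$ with $(x_\ell,v)\in E_{i-1}$ is then at most $|L_i|\cdot\alpha p/2^{i-1} + O(1) = O(s_i\cdot p/2^{i-1}) + O(1) = O(1)$, where the additive $O(1)$ absorbs the single deterministic pair $(x_\ell,x_{\ell+1})$. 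Summing over the $q = O(\log(pn))$ levels gives $O(\log(pn))$ revisited pairs in expectation per discovery, hence $O(n\log(pn))$ over all $\le n$ discoveries. For the incremental advancements: each advancement into a level $L_i$ increases $|L_i|$ by exactly one, and between consecutive rebuilds of $L_i$ the only vertices removed from $L_i$ are the $\le s_i/32$ discovered ones (recall $L_i$ is rebuilt every $s_i/32$ discoveries, and incremental updates never unpromote); since $|L_i|$ never exceeds $O(s_i)$ (Lemma~\ref{lem:size}), at most $O(s_i) + s_i/32 = O(s_i)$ advancements enter $L_i$ between consecutive rebuilds, and there are $O(n/s_i)$ such intervals, for a total of $O(n)$ advancements into $L_i$; summing over the $q$ levels gives $O(n\log(pn))$ incremental advancements overall. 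Combining (a) and (b), the expected number of queries performed for incremental advancements is $O(n\log(pn))$.

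The part I expect to be the main obstacle is the dependency bookkeeping underlying (a) and (b): each expectation computation above tacitly conditions on the event ``$v$ is currently being tested at the transition $L_{i+1}\to L_i$'', and one must verify this conditioning does not alter the marginals of the $E_i$-indicators being summed. The ingredients are: by Proposition~\ref{prop:Ai_dependencies}, membership of $v$ in any level $L_j$ -- hence the event that $v$ has reached $L_{i+1}$ and is eligible to be tested at $L_{i+1}\to L_i$ -- is determined by $\{E_j : j\ge i+1\}$, which on a stochastic pair is independent of $E_i(u,v)$ because the coordinates $E_1(u,v),\dots,E_q(u,v)$ of $E(u,v)$ are mutually independent (Proposition~\ref{prop:Ei}); and when $v$ is being tested specifically because it was revisited after $x_\ell$'s removal, the one extra piece of conditioning is the single event $E_i(x_\ell,v)=1$, which on a stochastic pair is independent of $\{E_i(u,v): u\ne x_\ell\}$ by the third property of Proposition~\ref{prop:Ei}. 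The lone deterministic pair incident to $v$ that could violate these independences is absorbed into the $O(1)$ slack at every step, and the single vertex $x_{\ell+1}$ (whose pair with $x_\ell$ is deterministic) is handled separately, contributing only $O(q)$ tests per discovery. Finally, Lemma~\ref{lem:size} holds only with high probability; by a union bound over the $O(nq)$ (time, level) pairs, $|L_i| = O(s_i)$ fails somewhere with probability at most $n^{-8}$, and on that event the algorithm still performs at most $\poly(n)$ queries, so its contribution to the expectation is $o(1)$, while on the complementary event all the bounds above apply verbatim.
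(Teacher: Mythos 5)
Your proposal is correct and follows essentially the same approach as the paper: the same decomposition into (a) $O(1)$ expected queries per promotion test (using Propositions~\ref{prop:Ei} and~\ref{prop:Ai_dependencies} to decouple $L_{i+c}$ from $E_i$ on stochastic pairs), and (b) $O(n\log(pn))$ expected tests, counted by splitting into initially-revisited pairs (one per level per discovery in expectation) and successful advancements (charged against the $O(s_i)$ growth of $|L_i|$ between its $O(n/s_i)$ rebuilds). The paper phrases the count of ``revisited pairs'' equivalently as one failed test per revisited vertex, but the bookkeeping is the same; your final paragraph on the low-probability failure of $|L_i| = O(s_i)$ is a slightly more explicit treatment of a point the paper glosses.
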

\begin{proof}
For each level $i$, we only attempt to increment vertices $v \in L_i \backslash L_{i-1}$ if $v$ is blocked by $x_\ell$, meaning that $(x_\ell, v) \in E_{i-1}$. Except for when $v = x_{\ell+1},$ the event that $(x_\ell, v) \in E_{i-1},$ which occurs with probability $\alpha \cdot p/2^{i-1} = O(p \cdot 2^{-i})$, is independent of whether $v \in L_i$, which only depends on $E_i, E_{i+1}, \ldots$ by Proposition \ref{prop:Ai_dependencies}. By Lemma \ref{lem:size}, $|L_i| = O(s_i) = O(2^i/p)$ with probability $1-O(n^{-10})$. It follows that the expected number of vertices $v \in L_i \backslash L_{i-1}$ that we even attempt to increment, in expectation, is $O(p \cdot 2^{-i} \cdot 2^i/p) = O(1)$. Over all levels, we increment an expected $O(\log (pn))$ vertices during each edge discovery (i.e., when trying to find $x_{\ell+1}$).

Whenever we try to incrementally advance a vertex $v$ that is currently at some level $L_{j+1},$ the number of 
queries that we must perform in order to determine whether $v$ should further advance to level $L_{j}$ is equal to the number of vertices $w$ in $L_{j+c}$ such that $(v, w) \in E_{j}$. However, the only information about $v$ that we are conditioning on is that $v$ is in $L_{j+1}$ and that $v$ was blocked by $x_{\ell}$. So, by Propositions \ref{prop:Ei} and \ref{prop:Ai_dependencies}, the events $E_j(v, w)$ ranging over $w$ for which $(v, w)$ is a stochastic edge are independent of the information about $v$ that we have conditioned on; note that the only $w$ for which $(v, w)$ is not stochastic are the vertices $w$ that come immediately before or immediately after $v$ in the true order (which we call $\text{prec}(v)$, $\text{succ}(v)$). Therefore, the expected number of queries we must perform is at most $2 + |L_{j + c}| \cdot \frac{\alpha \cdot p}{2^j}$, since each vertex in $L_{j + c} \setminus \{v, \text{prec}(v), \text{succ}(v)\}$ has probability $\frac{\alpha \cdot p}{2^i}$ of having an edge in $E_i$ to $v$.
 By Lemma \ref{lem:size}, we know that $|L_{i+c}| = O(s_i) = O(2^i / p)$ with probability at least $1 - 2n^{-10}$.
Thus, the expected number of queries that we must perform for each incremental advancement is $O(1)$.

To prove the lemma, it suffices to show that the expected total number of attempted incremental advancements is $O(n \log (pn))$, since each one uses an expected $O(1)$ queries.
Recall that, each time that we try to discover a new vertex $x_{\ell+1},$ we attempt to increment $O(\log (pn))$ vertices in expectation: each vertex $v$ can fail to be incremented only once, because we stop incrementing $v$ after that. So, the expected number of failed incremental advancements across the entire algorithm is $O(n \log (pn))$.
In addition, whenever we perform a successful incremental advancement, we increase the size of some level by $1$.
We know that, with high probability in $n$, each level $L_i$ never has size exceeding $O(s_i)$.
Between rebuilds of $L_i$, $O(s_i)$ total vertices are removed from $L_i$. In order so that $|L_i| = O(s_i)$,
the total number of vertices that are added to $L_i$ between rebuilds must be at most $O(s_i)$.
Thus, between rebuilds, at most $O(s_i)$ vertices are incrementally advanced into $L_i$. 
Since $L_i$ is rebuilt $O(n / s_i)$ times, the total number of incremental advancements into $L_i$ over all time is $O(n)$.
Summing over the levels $L_i$, the total number of incremental advancements is $O(n \log (pn))$.
\end{proof}

After doing the incremental advancements for a given $x_\ell$, and performing any necessary rebuilds of levels, the algorithm searches for the next vertex $x_{\ell+1}$. Our final task is to bound the expected number of queries needed to identify $x_{\ell + 1}$.

\begin{lemma}
The expected number of comparisons needed to identify a given $x_{\ell+1}$ (i.e., to eliminate incorrect candidates from the candidate set) is $O(\log (pn))$.
\end{lemma}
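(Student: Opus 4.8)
The plan is to charge every query to the candidates $v \in C$ with $v \neq x_{\ell+1}$. I would run the elimination level by level, and for each such $v$ let $J_v \le q+c$ be the level at which $v$ is removed from $C$; the procedure halts after level $J^{*} := \max_{v \in C \setminus \{x_{\ell+1}\}} J_v$, once only $x_{\ell+1}$ survives. Since at each level $L_i$ (with $i \le J^{*}$) we query all edges from $L_i$ to the surviving candidates, the total query count equals $\sum_{v \ne x_{\ell+1}} \sum_{i \le J_v}(\text{edges from } v \text{ to } L_i) + \sum_{i \le J^{*}}(\text{edges from } x_{\ell+1} \text{ to } L_i)$. By Lemma \ref{lem:size}, $|L_i| = O(s_i)$ with probability $1 - O(n^{-9})$ (on the failure event I bound the whole cost crudely by $n^2$, contributing $O(1)$ in expectation), and every pair $(u,v)$ with $u \in L_i \setminus \{\text{prec}(v), \text{succ}(v)\}$ is stochastic, so the expected number of edges from a fixed $v$ to $L_i$ is $O(2 + p s_i) = O(2^i)$; summing the geometric series reduces everything to showing $\sum_{v \ne x_{\ell+1}} \E[\mathbf 1[v \in C]\, 2^{J_v}] = O(\log(pn))$ (the term for $x_{\ell+1}$ is $O(2^{J^{*}}) = O(\max_v 2^{J_v})$, dominated by $\sum_v 2^{J_v}$).

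The key observation is that, because $(\text{prec}(v),v)$ is a deterministic edge with $\text{prec}(v) \prec v$, the candidate $v$ is eliminated no later than at the lowest level $K_v$ containing $\text{prec}(v)$, so $J_v \le K_v$. I would treat the generic case $r := r(v) \ge 1/p$ first, conditioning on a realization of the anchor sets $\{A_i\}$ with $|A_i| \ge s_i/128$ for all $i$ (holds with probability $1 - O(n^{-9})$ by Lemma \ref{lem:anchor}; otherwise bound crudely as above). There are then three ingredients to combine: (i) $\{(x_\ell,v) \in E\}$, which has probability $p$ (the pair is stochastic since $v\ne x_{\ell+1}$) and is a function of $E(x_\ell,v)$ only; (ii) $\{v \in L_1\}$, which --- exactly as in the proof of Lemma \ref{lem:size} --- forces $Y_{j,v}=0$ for the $\log(rp)-O(1)$ levels $j$ with $s_{j+c}/8 \le r$, so by Lemma \ref{lem:Y} $\Pr[v \in L_1 \mid \{A_i\}] \le (rp)^{-10}$ once $c$ is a large enough constant, and this is a function of the variables $E_j(u,v)$ with $u \in A_{j+c}$; and (iii) the containment $\{K_v > m\} \subseteq \{X_{m,\text{prec}(v)} = 1\}$ from Proposition \ref{prop:correct1}, which by Lemma \ref{lem:X} (using $rp \ge 1$ and that $\text{prec}(v)$ has rank $r-1$) has probability $O(rp/2^m)$, and is a function of the variables $E_j(u,\text{prec}(v))$ with $u \prec \text{prec}(v)$. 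These three collections of edge variables involve pairwise-distinct vertex pairs, and --- apart from the single deterministic pair that may occur in (iii) --- only stochastic pairs, so Propositions \ref{prop:Ei}, \ref{prop:Ai_dependencies}, and \ref{prop:independence_2} should let me treat (i), (ii), (iii) as independent even after conditioning on $\{A_i\}$. This gives $\E[\mathbf 1[v\in C]\,2^{K_v}] \le \sum_{m \le q+c} 2^m \cdot p (rp)^{-10} \cdot O(rp/2^m) = O(p (rp)^{-9}\log(pn))$.

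Summing this over the vertices of rank $r \ge 1/p$ and using $\sum_{r \ge 1/p} r^{-9} = O(p^8)$ then yields $O(\log(pn))$ in total. Finally I would dispatch the $O(1/p)$ vertices of rank $1 < r(v) < 1/p$: each lies in $C$ with probability at most $p$ (from the edge $(x_\ell,v)$), so an expected $O(1)$ of them appear in $C$, and for each the same estimate with $(r-1)p < 1$ in Lemma \ref{lem:X} gives $\Pr[K_v > m] = O(2^{-m})$ and hence $\E[2^{K_v}] = O(\log(pn))$; these contribute a further $O(\log(pn))$. (Summing over $\ell$ later gives the $O(n\log(pn))$ overall bound, but for this lemma the above suffices.)

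I expect the main obstacle to be the independence bookkeeping in the second paragraph: one must package ``$v$ descends to $L_1$'' via the anchor-edge events $Y_{j,v}$ and ``$\text{prec}(v)$ sits high'' via $X_{m,\text{prec}(v)}$ so that these two, together with the edge $(x_\ell,v)$, become functions of disjoint sets of edge variables, and then carefully invoke Propositions \ref{prop:Ei}, \ref{prop:Ai_dependencies}, and \ref{prop:independence_2} --- conditioning on $\{A_i\}$ is precisely what makes this decoupling go through. Everything else (the geometric-series cost estimates and the summations over ranks) is routine.
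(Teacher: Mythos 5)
Your proposal is essentially the paper's proof: your events (i), (ii), (iii) are exactly the paper's Events 1, 2, 3, the charging of queries to $2^{K_v}$ via the predecessor's lowest level is the same device, the anchor-set conditioning and the appeal to Propositions~\ref{prop:Ei}, \ref{prop:Ai_dependencies}, \ref{prop:independence_2} is the same, and the rank summation is identical. One caution about the ``independence bookkeeping'' you flag as the main obstacle: your claim that (i), (ii), (iii) become jointly independent after conditioning on $\{A_i\}$ is slightly too strong --- the variables $E_j(u_1,\text{prec}(v))$ driving (iii) can themselves influence the anchor sets (since $\text{prec}(v)$ may have small enough rank to lie in the $T_1$-triples of Proposition~\ref{prop:independence_2}), so (iii) is not independent of $\{A_i\}$ in general; the paper sidesteps this by never conditioning $\mathcal{E}_3$ on $\mathcal{D}$, instead writing $\Pr[\mathcal{E}_1,\mathcal{E}_2,\mathcal{E}_3] \le \Pr[\mathcal{E}_1]\Pr[\mathcal{E}_3]\Pr[\mathcal{E}_2\mid\mathcal{E}_1,\mathcal{D}] + \Pr[\lnot\mathcal{D}]$, using only the unconditional independence of $\mathcal{E}_1$ and $\mathcal{E}_3$ together with the fact that the $T_2$-variables are disjoint from everything else. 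With that adjustment (and an $O(1)$ constant absorbed into the $(rp)^{-10}$ bound so it stays valid for $1/p \le r < 2^{O(c)}/p$, where no $Y_{j,v}$ is yet defined), your sketch matches the paper's argument.
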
 
\begin{proof}
Consider the candidate set $C$ for $x_{\ell+1}$.
In addition, consider a vertex $v \succ x_{\ell+1}$. For $v$ to be in the candidate set $C$, the following two events must occur:
\begin{itemize}
\item \textbf{Event 1:} $v$ must have an edge to $x_{\ell}$. 
\item \textbf{Event 2:} For all $i$ such that $r(v) \ge s_{i+c}/8$, we have $Y_{i, v} = 0$.
\end{itemize}

Event $1$ occurs with probability $p$, and only depends on $\{E_j(x_\ell, v)\}$ over all $j$. On the other hand, by Proposition \ref{prop:Ai_dependencies}, the sets $\{A_k\}_{k \ge 1}$ only depend on $\{E_j(u_1, u_2)\}$ over triples $(j, u_1, u_2)$ with $u_1, u_2 \preceq x_{\ell + 3s_j/32}$. Let 
$$T_1 = \{(j, x_\ell, v) \mid j \in [q]\} \cup \{(j, u_1, u_2) \mid u_1, u_2 \preceq x_{\ell + 3s_j/32}, j \in [q]\}$$
be the set of triples $(j, u_1, u_2)$ whose corresponding variables $E_j(u_1, u_2)$ cumulatively determine Event 1 and the anchor sets $\{A_k\}_{k \ge 1}$. 

Event 2 considers $Y_{i, v}$ for $i$ such that $s_{i+c}/8 \le r(v)$ (note that these are the only $i$ for which $Y_{i, v}$ is defined), so assuming the $A_k$s are fixed, it only depends on $E_i(u, v)$ for triples in the set 
$$T_2 = \{(i, u_1, v) \mid s_{i + c} / 8 \le r(v), u_1 \in A_{i + c}\}.$$
Note that $T_2$ is disjoint from $T_1$, since $u_1\in A_{i+c}$ means $x_\ell \prec u$ and since the fact that $s_{i + c} / 8 \le r(v)$ means that $x_{\ell + s_{i + c}/8} \preceq v$, so $(i, u_1, v) \not\in T_1$. Moreover, $T_2$ consists exclusively of stochastic vertex pairs, since for any $u_1 \in A_{i + c}$, we have $r(v) \ge s_{i+c}/8 \ge r(u_1)+2$. Thus, if we fix the outcomes of Event 1 and of the anchor sets $\{A_k\}_{k \ge 1}$, then Proposition \ref{prop:Ei} tells us that the outcomes of the $E_i(u_1, v)$s corresponding to $T_2$ are independent of the random variables that have already been fixed. 

Now let us consider the probability of Event 2 if we condition on Event 1 and if we condition on the sets $A_k$ each having size at least $s_k/128$. There are $\log (r(v) \cdot p) - O(1)$ random variables $Y_{i, v}$ for which $r(v) \ge s_{i + c} / 8$, and by Lemma \ref{lem:Y} each $Y_{i, v}$ independently has probability at most $e^{-2^c / 128}$ of being $0$ (based on the outcomes of the random variables $E_i(u_1, v)$ where $u_1$ is selected so that $(i, u_1, v) \in T_2$). So conditioned on the outcome of Event 1 and on the sets $A_k$ each having size at least $s_k/128$,
Event 2 occurs with probability at most
$$\left(e^{-2^c / 128}\right)^{\log (r(v)p) - O(1)} \le \frac{O(1)}{(r(v) \cdot p)^{10}}.$$

Let $\mathcal{E}_1$ and $\mathcal{E}_2$ be the indicator random variables for Events 1 and 2, respectively, and let $\mathcal{D}$ be the indicator that $|A_k| \ge s_k/128$ for all $k$. To summarize, we have so far shown that $\Pr[\mathcal{E}_1] = p$ and that 
$$\Pr[\mathcal{E}_2 \mid \mathcal{E}_1, \mathcal{D}] \le \frac{O(1)}{(r(v) \cdot p)^{10}}.$$


Let $u$ be the true predecessor of $v$. 
As soon as we query the edge $(u, v)$, we will be able to eliminate $v$ from the candidate set.
Thus, if $v$ is in the candidate set, then we can upper bound the number of queries needed to eliminate $v$ by
the number of query-able edges from $\{x_\ell, v\}$ to each level $L_{i + 1}$ for which $u \not\in L_{i}$.

Now consider a fixed level $L_i$. In order so that $u \not\in L_i$, we must have that:
\begin{itemize}
\item \textbf{Event 3: }$X_{i, u }= 1$.
\end{itemize}
Event 3 depends only on $E_j(u_1, u)$ where $u_1 \prec u$ and $j \ge i$. This means that the events that determine Event 3 concern different pairs of vertices than do the events $\{E_j(x_\ell,v)\}_{j = 1}^q$ that determine Event 1 (since the latter events all involve $u$'s successor $v$); and the events that determine Event 3 concern different pairs of vertices than do the events $\{E_j(u_1, v)\}_{(j, u_1, v) \in T_2}$ that determine Event 2 once the $A_k$s are fixed (since, once again, the latter events all involve $u$'s successor $v$). Importantly, by the fourth property of Proposition \ref{prop:Ei}, this means that Event $1$ and Event $3$ are independent if we do not condition on the $A_k$s, and that conditioning on Events $1$ and $3$ along with the $A_k$s does not affect the probability of Event $2$ in comparison to conditioning only on Event 1 and the $A_k$s (and not on Event 3). 

Let $\mathcal{E}_3$ be the indicator for Event 3. By Lemma \ref{lem:X},
$$\Pr[\mathcal{E}_3] \le \frac{1 + r(u)p}{2^{i - 1}}.$$
Thus we have that
\begin{align*}
    \Pr[\mathcal{E}_1, \mathcal{E}_2, \mathcal{E}_3] &= \Pr[\mathcal{E}_1, \mathcal{E}_3, \mathcal{D}] \cdot \Pr[\mathcal{E}_2 \mid \mathcal{E}_1, \mathcal{E}_3, \mathcal{D}] + \Pr[\mathcal{E}_1, \mathcal{E}_2, \mathcal{E}_3, \lnot \mathcal{D}] \\
    & \le \Pr[\mathcal{E}_1, \mathcal{E}_3] \cdot \Pr[\mathcal{E}_2 \mid \mathcal{E}_1, \mathcal{E}_3, \mathcal{D}] + \Pr[\lnot \mathcal{D}] \\
    &= \Pr[\mathcal{E}_1] \cdot \Pr[\mathcal{E}_3] \cdot \Pr[\mathcal{E}_2 \mid \mathcal{E}_1, \mathcal{D}] + \Pr[\lnot \mathcal{D}] \\
    &\le p  \cdot \frac{1+r(u) \cdot p}{2^{i-1}} \cdot \min\left(1, \frac{O(1)}{(r(v) \cdot p)^{10}}\right) + \frac{1}{n^9}.
\end{align*}
We remark that, although Events 1, 2, 3 are formally defined only for $i > 0$ (since there is no level $L_0$), if we also consider a level $L_0$ to be the empty set (so $u \not\in L_0$ by default), we have that $\Pr[\mathcal{E}_3] \le 1 \le \frac{1+r(u) p}{2^{i-1}}$, so our bound for $\Pr[\mathcal{E}_1, \mathcal{E}_2, \mathcal{E}_3]$ is still true even in the case of $i = 0$.

For any vertex $v \succ x_{\ell+1}$, if $v$'s predecessor $u$ is in $L_{i+1} \backslash L_{i}$ for some $i \ge 0$ (where $L_0 = \emptyset$), then once we have queried edges from $v$ to $L_{i+1}$ we will have found $u \neq x_{\ell + 1}$, eliminating $v$ from the candidate set. Hence, for any level $L_{i+1}$, we will only query edges from $v$ to $L_{i+1}$ if Event 3 occurs for level $i$, and we only query edges from $x_{\ell+1}$ to $L_{i+1}$ if Event $3$ occurs for some $v$ in the original candidate set (since otherwise we will have eliminated all vertices except $x_{\ell+1}$).

We have already bounded the probability of any of Events 1, 2, 3 occurring for a given $v$ and level $L_i$. Therefore, if we wish to bound the number of edges that are queried while discovering $x_{\ell + 1}$, then our final task is the following: for each $v$ and level $L_i$ (including $i = 0$),
we must bound the number $h$ 
of query-able edges $e$ from $\{x_{\ell+1}, v\}$ to $L_{i + 1}$, conditioning on Events 1, 2, 3 occurring for $v$ and $L_i$.
We do not need to count any edge $e$ that has already been queried in the past.
There are at most $O(1)$ deterministic edges incident to $\{x_{\ell+1}, v\}$. 
On the other hand, if we condition on which edges have been queried in the past, and we also condition on Events 1, 2, and 3, then each remaining stochastic vertex pair (that is not yet been queried) has conditional probability at most $p$ of being a query-able edge. 
In particular, for each stochastic vertex pair that has not been queried, the only information that our conditions can reveal about it is that there is some set of $E_i$s in which the vertex pair is known not to appear\footnote{Note, in particular, that every time that the algorithm checks whether some edge $(u, v)$ is in $E$ or is in some $E_i$, if the algorithm finds that the edge is, then it immediately queries the edge (unless $u = x_{\ell}$, in which case we already know the direction of the edge $(u, v)$ since $v \succ x_{\ell}$, so we will never need to query it). Thus, the only information that the algorithm ever learns about not-yet-queried/not-yet-solved edges is that those edges are \emph{not} in some subset of the $E_i$s. (This can also easily be verified by the pseudocode in Appendix \ref{sec:Pseudocode}.)}; and this can only decrease the probability that the vertex pair is a query-able edge.
Since $|L_{i + 1}| \le O(s_i)$ with high probability in $n$,
the expected number of edges that we must query from $\{x_{\ell+1}, v\}$ to $L_{i + 1}$ 
(and if we condition on Events 1, 2, 3) is $O(p s_i)$ (unless the combined probability 
of Events 1, 2, 3 is already $ \le 1 / \poly(n)$, in which case we can use the trivial bound of $O(n^2)$ on the number of edges that we query).\footnote{Note that $ps_i \ge \Omega(1)$, and thus the bound of $O(ps_i)$ also counts the $O(1)$ deterministic edges that we might have to query.}

Combining the pieces of the analysis, and setting $r = r(v)$, the expected number of comparisons 
that we incur removing vertices from $x_\ell$'s candidate set is at most 
\begin{align*}
&\hspace{0.5cm} \frac{1}{\poly n} + \sum_v \sum_{i = 1}^{\log (pn)} \Pr[\text{Events 1, 2, 3 for }v \text{ and }L_i] \cdot O(p s_i) \\
& = \frac{1}{\poly n} + \sum_{r = 1}^n \sum_{i = 1}^{\log (pn)} O\left(p \cdot \min\left(1, \frac{1}{(r \cdot p)^{10}}\right) \cdot \frac{1 + rp}{2^{i  - 1}} \cdot p s_i\right)  \\
& = \frac{1}{\poly n} + \sum_{r = 1}^n \sum_{i = 1}^{\log (pn)} O\left(p \cdot \min\left(1, \frac{1}{(r \cdot p)^{10}}\right) \cdot \frac{1 + rp}{2^{i  - 1}} \cdot 2^i\right) \\
& = \frac{1}{\poly n} + \sum_{r = 1}^n \sum_{i = 1}^{\log (pn)} O\left(p \cdot \min\left(1, \frac{1}{(r \cdot p)^{10}}\right) \cdot (1 + rp) \right) \\
& = \frac{1}{\poly n} + O(\log (pn)) \sum_{r = 1}^n O\left(p \cdot \min\left(1, \frac{1}{(r \cdot p)^{10}}\right) \cdot (1 + rp) \right).\\
\end{align*}
    Furthermore, we can write the sum 
\begin{align*}
    \sum_{r = 1}^n \left(p \cdot \min\left(1, \frac{1}{(r \cdot p)^{10}}\right) \cdot (1 + rp) \right) &= \sum_{r = 1}^{1/p} p \cdot (1 + rp) + \sum_{r = 1/p+1}^{n} \frac{p \cdot (1+rp)}{(r \cdot p)^{10}} \\
    &\le \frac{1}{p} \cdot p \cdot 2 + 2p \cdot \int_{r = 1/p}^{\infty} \frac{1}{(r \cdot p)^{9}} dr \\
    &\le 2 + 2 \cdot \int_{r' = 1}^{\infty} \frac{1}{(r')^9} \cdot dr' = O(1),
\end{align*}
    substituting $r' = r \cdot p$. Therefore, in total, the expected number of comparisons  that we incur while removing vertices from $x_\ell$'s candidate set is $O(\log (pn)).$
\end{proof}


The preceding lemmas, along with subsection \ref{subsec:alg_correct}, combine to give us the following theorem:
\begin{theorem}
The expected number of comparisons made by the \Call{StochasticSort}{} algorithm is $O(n \log (pn))$. Moreover, the algorithm always returns the correct ordering.
\label{thm:main}
\end{theorem}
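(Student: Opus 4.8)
The plan is to obtain Theorem~\ref{thm:main} by assembling the pieces already established. Correctness requires no new work: Subsection~\ref{subsec:alg_correct} shows, with probability~$1$ and regardless of the internal randomness, that the $x_1$-finding routine succeeds, that $x_{\ell+1}\in L_1$ once $x_1,\dots,x_\ell$ have been discovered (Corollary~\ref{cor:correct2}), and that the candidate-elimination loop necessarily terminates having narrowed the candidate set down to exactly $x_{\ell+1}$ (the true predecessor edge of every wrong candidate is queryable and lies in some level, while $x_{\ell+1}$ has no remaining predecessor). An induction on~$\ell$ then yields that the full output is the true order, and since the random choices affect only running time and not the output, the ``always returns the correct ordering'' claim follows.

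For the expected query count, first keep the standing assumption $1/p=\Omega(\log n)$. I would partition all queries into four classes and bound each separately: (i) the $x_1$-finding phase, deterministically at most $n$ queries; (ii) rebuilding levels from scratch, which is $O(n\log(pn))$ in expectation by the rebuild lemma; (iii) incremental advancements, which is $O(n\log(pn))$ in expectation by the incremental-advancement lemma; and (iv) eliminating candidates, which by the candidate-elimination lemma costs $O(\log(pn))$ in expectation for each discovered $x_{\ell+1}$, hence $O(n\log(pn))$ in total by linearity of expectation over the at most $n/2$ vertices discovered in this phase. Summing, the phase costs $O(n\log(pn))$, and the symmetric procedure run ``from the top'' discovers $x_n,x_{n-1},\dots,x_{n/2+1}$ at the same asymptotic cost, so recovering the complete order costs $O(n\log(pn))$ in expectation.

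The only genuinely remaining step is to discharge the assumption $1/p=\Omega(\log n)$, which was used only to justify the Chernoff bounds of Lemmas~\ref{lem:anchor} and~\ref{lem:size} (and the size bounds invoked in the rebuild and incremental-advancement lemmas) at the lowest levels, where $s_i$ may be sub-logarithmic. In the complementary regime $p>1/(C\log n)$ I would reduce to the already-analyzed case by sub-sampling: run the algorithm but construct the edge sets $E_1,\dots,E_q$ of Proposition~\ref{prop:Ei} using the smaller parameter $p^\star=\Theta(1/\log n)\le p$, independently ``discarding'' each edge of~$E$ (only for the purpose of the $E_i$ construction --- the true edges of~$G$ are still used when finding $x_1$ and when eliminating candidates) with probability $1-p^\star/p$. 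From Alice's perspective each stochastic pair's tuple $E(u,v)$ is then distributed exactly as in Proposition~\ref{prop:Ei} with parameter $p^\star$; deterministic pairs now lie in no $E_i$ with probability $1-p^\star/p$, but this can only enlarge levels in a way that does not touch the stochastic-pair-based size bounds of Lemma~\ref{lem:size}, and it cannot hurt correctness (Proposition~\ref{prop:correct1} and Corollary~\ref{cor:correct2} only become easier). Thus the already-proved analysis applies with $p^\star$ in place of $p$ and yields $O(n\log(p^\star n))$ queries; since $p\ge 1/(C\log n)$ forces $\log(pn)=\Theta(\log n)=\Theta(\log(p^\star n))$, this is $O(n\log(pn))$.

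The main obstacle is making this last reduction airtight: one must verify that removing deterministic edges from the $E_i$ construction affects neither correctness nor any of the three query-complexity lemmas, and that the marginal and independence guarantees of Proposition~\ref{prop:Ei} survive the extra discarding coins (in particular that each remaining stochastic pair still has conditional probability at most $p^\star$ of being a queryable edge given the algorithm's observations). An alternative to sub-sampling --- re-proving the size bounds for the $O(\log\log n)$ levels with $s_i=O(\log n)$ by a crude union bound over the $O(\log n)$ low-rank vertices rather than a Chernoff bound --- would also work but is messier; either way, this is the one place where real care is needed, since the rest of the proof is bookkeeping.
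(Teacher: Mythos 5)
Your correctness argument and the four-way decomposition of the query count (the $x_1$-finding phase, rebuilds, incremental advancements, candidate elimination) match the paper's proof exactly. The one place you genuinely diverge is in removing the standing assumption $1/p = \Omega(\log n)$, and the sub-sampling route you propose has a real gap.

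The problem is that the candidate-elimination cost is governed by the \emph{true} edge density $p$, not by the artificial $p^{\star}$, precisely because (as you yourself stipulate) you keep using the full edge set $E$ when building the candidate set $C$ and when querying edges from candidates to levels. Concretely, after the rescaling one has $|L_i| = \Theta(s_i) = \Theta(2^i/p^{\star})$, so $\mathbb{E}|C| = \Theta(p\,|L_1|) = \Theta(p/p^{\star})$, and for each candidate the expected number of queryable edges into $L_i$ is $\Theta(p\,|L_i|) = \Theta(p\cdot 2^i/p^{\star})$; both are inflated by a factor $p/p^{\star}$ relative to the matched-parameter analysis. Carrying these through the final sum in the candidate-elimination lemma gives a per-vertex cost of $\Theta\bigl((p/p^{\star})^{2}\log(p^{\star}n)\bigr)$. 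Taking $p = \Theta(1)$ and $p^{\star} = \Theta(1/\log n)$ yields $\Theta(\log^{3} n)$ per discovered vertex, i.e., $\Theta(n\log^{3} n)$ overall, which is not the claimed $O(n\log(pn)) = O(n\log n)$. The parenthetical at the end of your proposal --- that each remaining stochastic pair has conditional probability at most $p^{\star}$ of being a queryable edge --- is where the intuition goes wrong: the algorithm's observations about a pair $(u,v)$ are only of the form ``$(u,v)\notin E_i$ for certain $i$,'' and (since the discarding coin is independent of whether $(u,v)\in E$) this barely depresses the posterior $\Pr[(u,v)\in E]$ below $p$; it does not bring it down to $p^{\star}$.

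The paper discharges the assumption much more cheaply and without any rescaling. Fix the lowest level $L_{i_0}$ with $s_{i_0}$ at least a large constant times $\log n$; analyze levels $L_j$ with $j\geq i_0$ exactly as in the main argument (the Chernoff bounds are valid there), and for the levels strictly below $L_{i_0}$ do not try to reprove anything: simply charge, per epoch between consecutive rebuilds of $L_{i_0}$, all $O(|L_{i_0}|^2) = O(\log^2 n)$ possible queries among the $O(\log n)$ distinct vertices that ever reside in $L_{i_0}$ during that epoch. With $O(n/\log n)$ rebuilds of $L_{i_0}$ this adds $O(n\log n)$ total, which is $O(n\log(pn))$ since $p = \Omega(1/\log n)$ in this regime. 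Your secondary ``crude union bound over the low-rank vertices'' alternative points in this general direction, but is still framed as re-proving a size bound for the small levels; the paper's trick is that no such size bound is needed --- one charges the worst case outright and it is already cheap enough.
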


We conclude the section by revisiting the requirement that $1/p$ is at least a sufficiently large constant multiple of $\log n$.
Recall that this requirement was so that all of the level sizes would be large enough that we could apply Chernoff bounds to them.
We now remove this requirement, thereby completing the proof that Theorem \ref{thm:main} holds for all $p$. 
Consider an index $i$ for which $s_i$ is a large constant multiple of $\log n$. We can analyze all of the levels $L_j$, $j \ge i$,
as before, but we must analyze the levels below level $i$ differently.
One way to upper bound the number of edge queries
in the levels below $L_i$ is to simply assume that every pair of vertices in $L_i$ is queried as an edge. 
Even in this worst case, this would only add $O(\log^2 n)$ queries between every pair of consecutive rebuilds of $L_i$
(since the number of distinct vertices that reside in $L_i$ at any point between a given pair of rebuilds is $O(\log n)$).
There are $O(n / \log n)$ total rebuilds of $L_i$, and thus the total number of distinct queries that occur at lower levels
over the entire course of the algorithm is upper bound by $O(n \log n)$. Thus Theorem \ref{thm:main} holds even when $1/p = O(\log n)$.

\section{Lower Bound for Stochastic Generalized Sorting} \label{sec:Lowerbound}

In this section, we prove that our stochastic generalized sorting bound of $O(n\log(pn))$ is tight for $p \ge \frac{\ln n + \ln \ln n + \omega(1)}{n}$. In other words, for $p$ even slightly greater than $(\ln n)/n$, generalized sorting requires at least $\Omega(n \lg (pn))$ queries (to succeed with probability at least $2/3$) on an Erd\H{o}s-Renyi random graph $G(n, p)$ (with a planted Hamiltonian path to ensure sorting is possible). Formally, we prove the following theorem:

\begin{theorem} \label{thm:lower_bound}
    Let $n$ and $p \ge \frac{\ln n + \ln \ln n + \omega(1)}{n}$ be known. Suppose that we are given a random graph $G$ created by starting with the Erd\H{o}s-Renyi random graph $G(n, p)$, deciding the true order of the vertices uniformly at random, and then adding the corresponding Hamiltonian path. Then, any algorithm that determines the true order of $G$ with probability at least $2/3$ over the randomness of the algorithm and the randomness of $G$ must use $\Omega(n \log (pn))$ queries.
\end{theorem}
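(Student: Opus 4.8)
The plan is to prove the bound by a counting/information-theoretic argument that reduces everything to showing that the generated graph $G$ has many Hamiltonian paths. The first ingredient is a posterior-uniformity observation. Condition on the realized graph $G$, and write $m=|E(G)|$. Since the true order $\pi$ is drawn uniformly and then each non-path vertex pair is included independently with probability $p$, for any ordering $\sigma$ whose Hamiltonian path is a subgraph of $G$ we have $\Pr[G\mid \pi=\sigma]=p^{m-n+1}(1-p)^{\binom n2-m}$, which does not depend on $\sigma$; for every other $\sigma$ it is $0$. Hence, conditioned on $G$, the true order is distributed uniformly over the set $\mathcal H(G)$ of orderings whose path lies in $G$.

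Next I would use a decision-tree bound. By Yao's principle it suffices to lower-bound deterministic algorithms against this input distribution, so fix $G$ and a deterministic algorithm that, after seeing $G$, makes at most $T$ adaptive edge queries --- each returning one bit, namely the direction of that edge --- and then outputs an ordering. This is a binary decision tree of depth at most $T$, hence has at most $2^T$ leaves and produces at most $2^T$ distinct outputs; since the true order is uniform over $\mathcal H(G)$ and each leaf's output equals at most one order, the algorithm succeeds with conditional probability at most $2^T/|\mathcal H(G)|$ given $G$. Averaging over $G$ gives $\Pr[\text{success}]\le\E_G[\min(1,\,2^T/|\mathcal H(G)|)]$.

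The heart of the argument is a high-probability lower bound on $|\mathcal H(G)|$. Here I would use that $G$ contains an independent copy of the Erd\H{o}s--R\'enyi graph $G(n,p)$ as a subgraph, with the planted Hamiltonian path added only on top (which can only increase $|\mathcal H(G)|$). Since $p\ge\frac{\ln n+\ln\ln n+\omega(1)}{n}$ is exactly the Hamiltonicity threshold, known results on counting Hamilton cycles in sparse random graphs (Glebov--Krivelevich; Janson for constant $p$) give that with high probability $G(n,p)$ has at least $n!\,p^n(1-o(1))^n$ Hamilton cycles, each of which yields a distinct Hamiltonian path of $G$. Taking base-$2$ logarithms and using $\log n!=n\log n-O(n)$ together with $np\to\infty$ (so the additive $O(n)$ and the $(1-o(1))^n$ factor are swallowed), this shows that with high probability $\log|\mathcal H(G)|\ge(1-o(1))\,n\log(np)$.

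To finish, set $M=(1-o(1))\,n\log(np)$ and let $\mathcal G$ be the probability-$(1-o(1))$ event that $|\mathcal H(G)|\ge 2^M$. Then $\Pr[\text{success}]\le\Pr[\mathcal G^c]+2^{T-M}\le o(1)+2^{T-M}$, so a success probability of at least $2/3$ forces $2^{T-M}>1/2$ for large $n$, i.e.\ $T\ge M-1=\Omega(n\log(pn))$ --- in fact $(1-o(1))\,n\log(pn)$, matching the upper bound up to lower-order terms. The step I expect to be the main obstacle is the counting lemma: at the very Hamiltonicity threshold a constant fraction of vertices have only $O(1)$ excess degree, so a naive ``$\log(np)$ independent choices at each vertex'' argument does not work, and one must invoke --- or reprove --- the sharp estimates on the number of Hamilton cycles in $G(n,p)$ in this regime. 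Everything else (posterior uniformity, the decision-tree bound, and the concluding arithmetic) is routine.
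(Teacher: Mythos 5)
Your argument is correct and rests on the same two pillars as the paper's: the Glebov--Krivelevich lower bound on the number of Hamilton cycles in $G(n,p)$, and posterior uniformity of the true ordering over the Hamiltonian paths of the realized graph. Where you differ from the paper is in the final counting step. The paper conditions on the evolving history of queried directions (which is why Proposition~\ref{prop:uniform} is stated to condition on both $G$ and the queried set $E'$), tracks the number $K_t$ of orderings still consistent after $t$ queries, shows via a binary-entropy bound that $\E[\lg K_{t+1}\mid K_t]\ge \lg K_t - 1$, and finishes with Markov's inequality applied to $\lg K_0 - \lg K_T$. Your decision-tree argument --- a depth-$T$ binary tree has at most $2^T$ leaves, each of which outputs a correct answer for at most one of the $|\mathcal{H}(G)|$ equally likely orderings, so $\Pr[\text{success}\mid G]\le 2^T/|\mathcal{H}(G)|$ --- is shorter, needs only uniformity conditioned on $G$ alone, and yields the sharper bound $T\ge(1-o(1))\,n\lg(np)$, whereas the paper's Markov step loses a constant factor. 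Both routes are valid and both reduce to the same Hamilton-cycle counting estimate; yours is arguably the cleaner and tighter of the two.
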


The main technical tool we use is the following bound on the number of Hamiltonian cycles in a random graph, due to Glebov and Krivelevich \cite{glebov2013number}:

\begin{theorem} \label{thm:num_hamilton_cycles}
    For $p \ge \frac{\ln n + \ln \ln n + \omega(1)}{n}$, the number of (undirected) Hamiltonian cycles in the random graph $G(n, p)$ is at least $n! \cdot p^n \cdot (1-o(1))^n$, with probability $1-o(1)$ as $n$ goes to $\infty$.
\end{theorem}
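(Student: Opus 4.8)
The expected number of Hamilton cycles pins down the target: $K_n$ has $\tfrac{(n-1)!}{2}$ Hamilton cycles, each survives in $G(n,p)$ with probability $p^n$, so if $X$ denotes the number of Hamilton cycles then $\E[X] = \tfrac{(n-1)!}{2}p^n = n!\,p^n\,(1-o(1))^n$, the polynomial factor $1/(2n)$ being swallowed by $(1-o(1))^n$. Hence the theorem asks only that $X$ not fall more than a $(1-o(1))^n$ factor below its mean, with probability $1-o(1)$. This is much weaker than concentration, which is essential: near the Hamiltonicity threshold the number of Hamilton cycles is \emph{not} concentrated (its logarithm fluctuates on a scale that a plain second moment cannot control), so Chebyshev alone does not suffice, and we must instead obtain a robust \emph{one-sided} bound.

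The plan is to split $G(n,p)$ into a pseudorandom ``core'' plus a small set of atypical vertices, count Hamilton cycles in the core by an entropy argument, and reattach the atypical vertices. Concretely: (1) Let $W$ collect the vertices of unusually small degree; record the whp facts that $|W| = o(n)$, that every vertex still has degree $\ge 2$, and — crucially — that the $O(\log n)$ vertices of \emph{constant} degree (which must exist at $p = \tfrac{\ln n+\ln\ln n+\omega(1)}{n}$, that being essentially the definition of the threshold) are pairwise far apart, so that smoothing/contracting them produces a graph $G'$ on $n-o(n)$ vertices that is still near-regular with degree $d = (1+o(1))np$ and still a good expander. (2) Apply the entropy lower bound on the number of Hamilton cycles in pseudorandom graphs — Krivelevich's sparse analogue of the Cuckler--Kahn count for Dirac graphs — to conclude that $G'$ has at least $(n')!\,(d/n')^{n'}(1-o(1))^{n'} = (d/e)^{n'}e^{-o(n')}$ Hamilton cycles; since $|W| = o(n)$ and $d = (1+o(1))np$, this is $(np/e)^n(1-o(1))^n = n!\,p^n(1-o(1))^n$. (3) Show that each Hamilton cycle of $G'$ lifts to at least one Hamilton cycle of $G$ — reinserting each $w\in W$ into the core-cycle edge that it gave rise to under smoothing/contraction — and that distinct core cycles lift to distinct cycles of $G$. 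Combining the three steps gives $X \ge n!\,p^n(1-o(1))^n$ whp. (For the easier range $p = \omega(\log n/n)$ one skips (1) and (3) entirely, since $G(n,p)$ is already an $(n,np,O(\sqrt{np}))$-graph.)

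Two steps carry the real weight. Step (2) is the analytic core: one fixes an auxiliary probability space on the pseudorandom graph — e.g.\ a uniformly random ``fractional'' Hamilton cycle, or a uniform perfect matching in an associated bipartite double cover, whose abundance follows from Egorychev--Falikman / Schrijver-type permanent lower bounds once the graph is near-regular — and uses subadditivity of entropy (Shearer-type) to bound the entropy of a uniformly random Hamilton cycle from below by $n\log(d/e) - o(n)$, whence the count is at least $\exp$ of that. Step (2) is known technology; the genuine obstacle specific to this theorem is the near-threshold bookkeeping in (1) and (3): choosing the right notion of ``atypical vertex'' (the $O(1)$-degree vertices need explicit by-hand reinsertion, but there are the $n^{1-\Omega(1)}$ vertices of merely ``moderately small'' degree $\le np/100$ to worry about, which are too numerous to be pairwise far apart and so must be absorbed into a suitably robust version of the entropy argument rather than contracted away), and then verifying that the reattachment in (3) never fails — that for every core Hamilton cycle there is always a valid host edge for each $w\in W$, with distinct $w$'s assigned distinct hosts — and does not collapse distinct core cycles. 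This multi-scale handling of the low-degree vertices, rather than the core count, is where the proof near the threshold becomes delicate.
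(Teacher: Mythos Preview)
The paper does not prove this statement at all: Theorem~\ref{thm:num_hamilton_cycles} is quoted as a black-box result of Glebov and Krivelevich \cite{glebov2013number}, and the only use made of it is the corollary that $G$ has $(np)^{\Omega(n)}$ Hamiltonian paths with probability $1-o(1)$, which feeds directly into the entropy-counting argument of Theorem~\ref{thm:lower_bound}. There is consequently no ``paper's own proof'' against which to compare your proposal.

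That said, your sketch is a fair high-level summary of the Glebov--Krivelevich strategy: isolate the vertices of atypically small degree, verify that the remaining graph is a good near-regular expander, count Hamilton cycles there via permanent/entropy lower bounds in the spirit of Cuckler--Kahn and Krivelevich, and then reinsert the atypical vertices. You have also correctly identified where the genuine work lies --- not in the core count but in the multi-scale treatment of low-degree vertices near the threshold and in making the reattachment injective. What you have written is a plan rather than a proof, however: step~(2) hides a substantial argument (the precise pseudorandomness hypotheses under which the entropy bound goes through, and the verification that the core satisfies them whp), and step~(3) requires a concrete reinsertion scheme together with a proof that it is well-defined and injective. If your intent was to supply a self-contained proof for this paper, that would be both unnecessary (the citation suffices) and, in its present form, incomplete; if your intent was to outline why the cited theorem is plausible, you have done so accurately.
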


A consequence of this is that, for the graph $G$ in Theorem \ref{thm:lower_bound}, the total number of undirected Hamiltonian \emph{paths} is, with probability $1-o(1)$, also at least $(np \cdot (1-o(1))/e)^{n} = (np)^{\Omega(n)}$.

We will think of the graph $G$ in Theorem \ref{thm:lower_bound} as being constructed through the following process. First, Alice chooses a random permutation $\pi_1, \dots, \pi_n$ of the vertices $V = [n]$ and makes the true ordering $\pi_1 \prec \pi_2 \prec \cdots \prec \pi_n$. Then, she adds the edges $(\pi_1, \pi_2), \ldots, (\pi_{n-1}, \pi_n)$. Finally, for each pair of vertices $i \neq j$ that are not consecutive in the true order, she adds $(i, j)$ to $E$ with probability $p$. We use $G$ to denote the undirected graph that Alice has constructed, and we use $(G, \pi)$ to denote the directed graph. 

Now, suppose an adversary sees the undirected graph $G = (V, E)$ with the planted Hamiltonian path. In addition, suppose the adversary knows some set of directed edges $E'$ (where, as undirected edges, $E' \subset E$). Now, consider a permutation $\sigma$ such that $(\sigma_1, \sigma_2), (\sigma_2, \sigma_3), \cdots, (\sigma_{n-1}, \sigma_n)$ are all undirected edges in $E$, and such that the known orders in $E'$ do not violate $\sigma$. We say that such permutations $\sigma$ are \defn{consistent} with $G$ and $E'$.

The next proposition tells us that the adversary cannot distinguish the true order $\pi$ from any other permutation $\pi'$ that is consistent with $G$ and $E'$. In other words, the adversary must treat all possible Hamiltonian paths as equally viable.

\begin{proposition} \label{prop:uniform}
    Suppose we know $G$ as well as a subset $E'$ of directed edges. Then, the posterior probability of the true ordering of the vertices is uniform over all $\sigma$ that are consistent with $G$ and $E'$.
\end{proposition}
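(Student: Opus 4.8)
The plan is to compute the posterior probability of a consistent permutation $\sigma$ via Bayes' rule and show it does not depend on $\sigma$. Fix the undirected graph $G$ and the set $E'$ of known directed edges. For a candidate permutation $\sigma$ consistent with $G$ and $E'$, I want to evaluate $\Pr[\pi = \sigma \mid G = g,\ \text{the queries revealed } E']$. The key observation is that the generative process factors cleanly: Alice first picks $\pi$ uniformly among all $n!$ permutations (actually, since the path is undirected we should be a bit careful and think of $\pi$ and its reversal — but the queries in $E'$ break that symmetry, or one simply notes the prior is uniform over the $n!$ orderings), and then, \emph{conditioned on $\pi = \sigma$}, each non-consecutive pair is included in $E$ independently with probability $p$. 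Crucially, the directed-edge information $E'$ that the adversary has learned is consistent with $\sigma$ by hypothesis, so it contributes no additional likelihood factor that distinguishes $\sigma$ from other consistent orderings — the queries only revealed directions of edges already present, and those directions agree with $\sigma$.

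The main step is therefore to argue that $\Pr[G = g \mid \pi = \sigma]$ is the same for every $\sigma$ consistent with $G$. Given $\pi = \sigma$, the event $G = g$ requires: (i) all $n-1$ consecutive pairs $(\sigma_i,\sigma_{i+1})$ are edges — which holds since $\sigma$ is consistent with $g$; (ii) every other pair that is an edge in $g$ got included (probability $p$ each); and (iii) every non-consecutive pair that is \emph{not} an edge in $g$ got excluded (probability $1-p$ each). Let $m = |E(g)|$ be the total number of edges of $g$. Then the number of pairs in category (ii) is exactly $m - (n-1)$, and the number in category (iii) is $\binom{n}{2} - m$; both counts depend only on $g$, not on which consistent ordering $\sigma$ we chose. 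Hence
\[
\Pr[G = g \mid \pi = \sigma] = p^{\,m - (n-1)} (1-p)^{\binom{n}{2} - m},
\]
independent of $\sigma$. Combined with the uniform prior on $\pi$ and the fact that the likelihood of the observed query answers $E'$ given $(\pi=\sigma, G=g)$ is identically $1$ for every consistent $\sigma$ (the answers are determined and agree with $\sigma$), Bayes' rule gives that the posterior is proportional to a quantity independent of $\sigma$, and therefore uniform over all consistent $\sigma$.

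I expect the main obstacle to be bookkeeping the conditioning correctly rather than any real difficulty: one must be careful that "the adversary knows $E'$" is formalized as conditioning on the transcript of query answers, and argue that an adaptive querying strategy does not smuggle in extra information — but since every query answer on an edge present in $g$ is deterministically the $\sigma$-direction for every consistent $\sigma$, the transcript's likelihood factor is $1$ uniformly, so adaptivity is irrelevant. A secondary subtlety is the undirected-vs-directed path issue in the prior; this is handled either by noting $E'$ is nonempty in the regime of interest (any single query fixes the global orientation) or by carrying the factor of $2$ through symmetrically, where it cancels.
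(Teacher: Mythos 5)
Your proposal is correct and follows essentially the same route as the paper: both compute the joint probability $\Pr[\pi = \sigma,\, G = g,\, E' \text{ observed}] = \frac{1}{n!} p^{m-(n-1)} (1-p)^{\binom{n}{2}-m}$, observe it depends only on $g$ (not on the particular consistent $\sigma$), and conclude uniformity of the posterior by Bayes' rule. The remarks you add about adaptive querying not contributing a distinguishing likelihood factor and about the undirected-path/reversal symmetry are sensible refinements of the same argument, not a different approach.
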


\begin{proof}
    For each permutation $\sigma$, define $\mathcal{S}(\sigma)$ to be the event that Alice chose the permutation $\sigma$. Define $\mathcal{G}$ to be the event that the undirected graph constructed by Alice is the graph $G$, and define $\mathcal{E}'$ be the event that every directed edge $E'$ is in the directed graph $(G, \pi)$. Then, the initial probability that Alice created $(G, \sigma)$ (before we are told $G$ or any of the orders of $E'$) is precisely
    \begin{equation}
        \Pr[\mathcal{S}(\sigma), \mathcal{G}, \mathcal{E}'] = \frac{1}{n!} \cdot p^{|E|-(n-1)} \cdot (1-p)^{(n)(n-1)/2 - |E|}.
        \label{eq:all_paths_equal}
    \end{equation}

    This is because she decided $\pi = \sigma$ with probability $\frac{1}{n!},$ each stochastic edge in $E$ (of which there are $|E|-(n-1)$ of them) is included with probability $p$, and each non-edge vertex pair is excluded with probability $(1-p)$. Finally, given $\sigma, G,$ we know that the event $\mathcal{E}'$ is also true. As a result, even if we condition on $\mathcal{G}$ and $\mathcal{E}'$, this means
\[\Pr[\mathcal{S}(\sigma)\mid\mathcal{G}, \mathcal{E}'] = \frac{\Pr[\mathcal{S}(\sigma), \mathcal{G}, \mathcal{E}']}{\Pr[\mathcal{G}, \mathcal{E}']}\]
    which by \eqref{eq:all_paths_equal} is the same for every $\sigma$ that is consistent with the fixed $G, E'$. Hence, knowing $G$ and $E'$ means each permutation $\sigma$ that is consistent with $G$ and $E'$ is equally likely.
\end{proof}

At this point, the remainder of the proof of the lower bound is relatively simple. The idea is that initially, using Theorem \ref{thm:num_hamilton_cycles}, there are $(np)^{\Omega(n)}$ possible choices for $\sigma$. But each query cannot give more than $1$ bit of information about $\sigma$, so one would need $n \log (np)$ queries to determine $\sigma.$ We now prove this formally.

\begin{proof}[Proof of Theorem \ref{thm:lower_bound}]
    Suppose that $G$ is a graph with $K \ge (np)^{c \cdot n}$ Hamiltonian paths for some fixed constant $c$, which is true with probability at least $1-o(1)$. By Proposition \ref{prop:uniform}, unless we have narrowed down the possible permutations to $1$, we have at most a $1/2$ probability of guessing the true permutation. Therefore, to guess the true permutation with probability $2/3$ overall using $T$ queries, we must be able to perform $T$ queries to reduce the number of consistent permutations $\sigma$ to $1$, with probability at least $1/3-o(1) > 1/4$.
    
    Now, after performing $t \le T$ queries, suppose there are $K_t$ potential consistent permutations. At the beginning, $K_0 = K$, which means $\lg K_0 \ge c n \lg(pn)$. Now, suppose we have $K_t$ possible permutations at some point and we query some edge $e = (u, v)$. Then, for some integers $a, b$ such that $a+b = K_t,$ either we will reduce the number of permutations to $a$ or reduce the number of permutations to $b$, depending on the direction of $e$. Due to the uniform distribution of the consistent edges (by Proposition \ref{prop:uniform}), the first event occurs with probability $\frac{a}{K_t}$ and the second event occurs with probability $\frac{b}{K_t}$. Therefore, the expected value of $\lg K_{t+1}$, assuming we query this edge $e$, is
\[\frac{a}{K_t} \lg a + \frac{b}{K_t} \lg b = \lg K_t + \left(\frac{a}{K_t} \lg \frac{a}{K_t} + \frac{b}{K_t} \lg \frac{b}{K_t}\right) = \lg K_t - H_2(a/K_t),\]
    where $H_2(p) := -[p \lg p + (1-p) \lg (1-p)]$ is known to be at most $1$ for $p \in [0, 1]$. (Note that we are using the convention that $0 \lg 0 = 0$). 
    
    Therefore, no matter what edge we query, the expected value of $\lg K_{t+1}$ is at least $(\lg K_t) - 1.$ Hence, for any $t \ge 0,$ $\mathbb{E}[\lg K_t] \ge (\lg K) - t \ge c n \lg (p n) - t$, regardless of the choice of queries that we make. So, after $T := c n \lg(pn)/4$ queries, the random variable $\lg K_T$ is bounded in the range $[0, \lg K]$ and has expectation at least $\lg K - T \ge \frac{3}{4} \lg K$. Therefore, $\Pr[\lg K_T = 0]$ is at most $\frac{1}{4}$ by Markov's inequality on the random variable $\lg K - \lg K_T$. Since $\lg K_T = 0$ is equivalent to there being exactly one choice for the true permutation $\sigma$, we have that $T = c n \lg(pn)/4$ queries is \textbf{not sufficient} to determine $\sigma$ uniquely with success probability at least $1/3 - o(1) > 1/4$. This concludes the proof.
\end{proof}

\section{Generalized Sorting in Arbitrary (Moderately Sparse) Graphs} \label{sec:Worstcase}

In this section, we improve the worst-case generalized sorting bound of \cite{huang2011algorithms} from $O(\min(n^{3/2} \log n, m))$ to $O(\sqrt{mn}\cdot \log n)$. Hence, for moderately sparse graphs, i.e., $n (\log n)^{O(1)} \le m \le n^2/(\log n)^{O(1)}$, we provide a significant improvement over \cite{huang2011algorithms}.

The main ingredients we use come from the papers \cite{huang2011algorithms}, which provides a convex geometric approach that is especially useful when we are relatively clueless about the true order, and \cite{lu2021generalized}, which shows how to efficiently determine the true order given a sufficiently good approximation to ordering.

Suppose that the original (undirected) graph is some $G = (V, E)$ and that at some point, we have queried some subset $E' \subset E$ of (directed) edges and know their orders. We say that a permutation $\sigma$ is \defn{compatible} with $G$ and $E'$ if for every directed edge $(u, v) \in E'$, $u \prec v$ according to $\sigma$. (Note that being compatible is a slightly weaker property than being consistent, as defined in Section \ref{sec:Lowerbound}, because compatibility does not require that $\sigma$ appear as a Hamiltonian path in $G$.) Let $\Sigma(G, E')$ be the set of permutations that are compatible with $G$ and $E'$. Moreover, for any vertex $u$, define $S_{E', G}(u)$ to be the (unweighted) average of the ranks of $u$ among all $\sigma \in \Sigma(G, E')$. Huang et al. \cite{huang2011algorithms} proved the following theorem using an elegant geometric argument:

\begin{theorem}[Lemma 2.1 of \cite{huang2011algorithms}] \label{thm:convex_geometry}
    Suppose that $S_{E', G}(u) \ge S_{E', G}(v).$ Then, $|\Sigma(G, E' \cup (u, v))| \le \left(1 - \frac{1}{e}\right) \cdot |\Sigma(G, E')|,$ where $(u, v)$ represents the directed edge $u \to v.$
\end{theorem}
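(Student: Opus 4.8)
The plan is to translate the statement into a fact about the centroid of a convex polytope, after which Gr\"unbaum's inequality finishes it in one line. Let $P$ be the partial order on $V$ generated by the directed edges of $E'$; then $\Sigma(G, E')$ (the permutations \emph{compatible} with $G$ and $E'$) is precisely the set of linear extensions of $P$, and $\Sigma(G, E' \cup (u,v))$ is the set of those linear extensions $\sigma$ with $\sigma(u) < \sigma(v)$ (write $\sigma(w)$ for the rank of $w$ under $\sigma$). The hypothesis $S_{E',G}(u) \ge S_{E',G}(v)$ already forces $u$ and $v$ to be incomparable in $P$: if $u <_P v$ then $\sigma(u) < \sigma(v)$ in every extension, so $S_{E',G}(u) < S_{E',G}(v)$, a contradiction; and if $v <_P u$ then $\Sigma(G, E' \cup (u,v)) = \emptyset$ and there is nothing to prove. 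So assume $u,v$ incomparable.

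First I would bring in the order polytope $\mathcal{O}(P) = \{x \in [0,1]^V : x_a \le x_b \text{ whenever } a <_P b\}$, which is full-dimensional (it contains the open simplex cut out by any one linear extension). The classical fact (Stanley) is that, up to a set of measure zero, $\mathcal{O}(P)$ is the disjoint union over linear extensions $\sigma$ of the simplices $\Delta_\sigma = \{x : 0 \le x_{w_1} \le x_{w_2} \le \cdots \le x_{w_n} \le 1\}$, where $w_1, \dots, w_n$ is the ordering of $V$ under $\sigma$, and all the $\Delta_\sigma$ have the same volume $\operatorname{vol}(\mathcal{O}(P))/|\Sigma(G,E')|$. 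I would extract two consequences. First, within each $\Delta_\sigma$ the quantity $x_u - x_v$ has constant sign, equal to the sign of $\sigma(u) - \sigma(v)$, so summing volumes of the relevant simplices gives
\[
\frac{\operatorname{vol}\!\big(\mathcal{O}(P) \cap \{x_u \le x_v\}\big)}{\operatorname{vol}(\mathcal{O}(P))} \;=\; \Pr_{\sigma}[\sigma(u) < \sigma(v)] \;=\; \frac{|\Sigma(G, E' \cup (u,v))|}{|\Sigma(G, E')|},
\]
with $\sigma$ a uniformly random linear extension. Second, using the order-statistics identity $\mathbb{E}[x_w \mid x \in \Delta_\sigma] = \sigma(w)/(n+1)$ together with the equal-volume decomposition, the centroid $c$ of $\mathcal{O}(P)$ satisfies $c_w = S_{E',G}(w)/(n+1)$; hence $S_{E',G}(u) \ge S_{E',G}(v)$ is equivalent to $c_u \ge c_v$.

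To finish, consider the hyperplane $H = \{x : x_u - x_v = c_u - c_v\}$, which passes through the centroid $c$. By Gr\"unbaum's inequality, each of the two closed halfspaces bounded by $H$ meets $\mathcal{O}(P)$ in at least a $(n/(n+1))^n \ge 1/e$ fraction of its volume; applied to the side $\{x_u - x_v \ge c_u - c_v\}$ this gives $\operatorname{vol}(\mathcal{O}(P) \cap \{x_u - x_v \le c_u - c_v\}) \le (1 - 1/e)\operatorname{vol}(\mathcal{O}(P))$. Since $c_u - c_v \ge 0$, we have $\{x_u \le x_v\} \subseteq \{x_u - x_v \le c_u - c_v\}$, so $\operatorname{vol}(\mathcal{O}(P) \cap \{x_u \le x_v\}) \le (1 - 1/e)\operatorname{vol}(\mathcal{O}(P))$, and combining with the displayed identity yields $|\Sigma(G, E' \cup (u,v))| \le (1-1/e)|\Sigma(G, E')|$.

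I expect the main obstacle to lie entirely in setting up this dictionary correctly rather than in any inequality: the equal-volume simplicial decomposition of the order polytope, the order-statistics expectation, and Gr\"unbaum's inequality are all standard, but one must be careful that ``average rank of $w$'' is genuinely the centroid coordinate of $\mathcal{O}(P)$ — this relies essentially on every simplex $\Delta_\sigma$ having the \emph{same} volume, so that the centroid of the union is the unweighted average of the per-simplex centroids. Once that bookkeeping is nailed down, the rest is a one-line application of Gr\"unbaum.
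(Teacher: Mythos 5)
The paper cites this as Lemma~2.1 of Huang, Kannan, and Khanna and does not reprove it. Your reconstruction is correct and is essentially the original argument: pass to linear extensions of the partial order generated by $E'$, use Stanley's equal-volume simplicial decomposition of the order polytope to identify both the ratio $|\Sigma(G,E'\cup(u,v))|/|\Sigma(G,E')|$ with a volume fraction and the averaged ranks $S_{E',G}(w)$ with (scaled) centroid coordinates, and finish with Gr\"unbaum's centroid-cut inequality on the hyperplane $x_u - x_v = c_u - c_v$.
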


The above theorem can be thought of as follows. If we see that $u \prec v$ in the true ordering but previously the expectation of $u$'s rank under all compatible permutations was larger than the expectation of $v$'s rank under all compatible permutations, then the number of compatible permutations decreases by a constant factor.

In addition, we have the following theorem from \cite{lu2021generalized}:

\begin{theorem}[Theorem 1.1 of \cite{lu2021generalized}] \label{thm:predictions}
    Let $\vec{E}$ be the true directed graph of $E$ (with respect to the true permutation $\pi$) and let $\tilde{E}$ be some given directed graph of $E$, such that at most $w$ of the edge directions differ from $\vec{E}$. Then, given the graph $G = (V, E)$, $\tilde{E}$ (but not $\vec{E}$), and $w$, there exists a randomized algorithm \Call{PredictionSort}{$G, \tilde{E}, w$} using $O(w + n \log n)$ queries that can determine $\vec{E}$ with high probability.
\end{theorem}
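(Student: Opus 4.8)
The plan is a two-phase ``predict, then repair'' strategy. In the first phase, which ideally uses no queries, I would turn the prediction $\tilde E$ into a tentative linear order $\sigma$ on $V$: contract the strongly connected components of $\tilde E$, sort the condensation topologically, and order each component internally so as to respect as many of its internal $\tilde E$-edges as possible. The point is the structural claim that $\sigma$ is close to the true order $\pi$: since $\vec E$ is acyclic and differs from $\tilde E$ in at most $w$ edges, $\tilde E$ has a feedback-arc set of size $\le w$, and one argues that the set $B$ of edges of $E$ that $\sigma$ orients oppositely to $\vec E$ (the ``inverted edges'') has size $O(w)$. Pinning this down is itself delicate, because the intra-component order minimizing inversions is a feedback-arc-set computation; one must exploit that $\tilde E$ is only a bounded perturbation of a transitive orientation (possibly spending a few queries to peel the perturbation apart) to reach $|B| = O(w)$ via a greedy argument rather than an exact NP-hard optimization.

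In the second phase I would use queries to correct $\sigma$ into $\pi$, leveraging the promised Hamiltonian path: if $B$ contains none of the true consecutive pairs $(x_i,x_{i+1})$ then $\sigma$ already orders $x_1 \prec_\sigma \cdots \prec_\sigma x_n$ and equals $\pi$, so in general $\sigma$ is just $\pi$ with at most $|B| = O(w)$ path-edge defects. I would process the vertices in $\sigma$-order, maintaining a sorted prefix of the true order (linked by already-verified Hamiltonian-path edges), and for each new vertex use the prediction to point at the few candidate comparisons that could refute its tentative placement; crucially, every ``surprising'' query (one contradicting the current expected-rank order $S_{E',G}$) shrinks $|\Sigma(G,E')|$ by a constant factor by Theorem~\ref{thm:convex_geometry}. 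A potential argument on $\log|\Sigma(G,E')|$ (which starts at $\log n! = \Theta(n\log n)$ and must reach $0$), combined with an accounting that charges the $O(n)$ routine verification queries to the vertices and the rest to the $O(w)$ inverted edges, gives $O(w+n\log n)$ queries; randomization is what keeps the per-vertex subroutine from being defeated by an adversarial placement of the defects, and correctness is automatic since we only halt once the directed Hamiltonian path is fully certified.

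The hard part is this repair phase: making its cost genuinely $O(w+n\log n)$ rather than $O(w\log n)$ or $O(w\cdot\deg)$. The obstruction is intrinsic to generalized sorting — only edges of $G$ may be queried, so a misplaced vertex cannot simply be binary-searched into its correct slot, and a single badly mispredicted vertex threatens to trigger a cascade of wasted comparisons. The technical core is therefore the amortization: using the potential $\log|\Sigma(G,E')|$ together with the $O(w)$ bound on $|B|$ to show that no inverted edge is ``paid for'' more than a constant number of times, and that every non-surprising query can be charged either to an inverted edge or to one of the $n$ vertices being placed.
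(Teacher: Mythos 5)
This theorem is not proved in the paper under review: it is imported verbatim as Theorem~1.1 of Lu, Ren, Sun, and Wang \cite{lu2021generalized} and used as a black box in Section~\ref{sec:Worstcase}, so there is no in-paper proof to compare against. Evaluated on its own, your proposal contains two genuine gaps, both of which you flag but neither of which you close. The first is Phase~1: you need a total order $\sigma$ with $|B| = O(w)$ inversions against $\vec E$, but extracting such an order from $\tilde E$ is a minimum feedback-arc-set problem, which is NP-hard. Contracting strongly connected components and then ``respecting as many internal $\tilde E$-edges as possible'' does not give the bound you want, since a single flipped edge can create an SCC of size $\Theta(n)$, inside which a bad internal order produces $\Theta(n^2)$ inversions; ``spending a few queries to peel the perturbation apart'' and ``a greedy argument'' are placeholders, not a proof. (And note you cannot query your way to the min FAS cheaply either, since you are limited to edges of $G$.)

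The second and more serious gap is Phase~2, which you yourself call ``the hard part'' and ``the technical core'' and then leave unspecified. The potential argument on $\log|\Sigma(G,E')|$ via Theorem~\ref{thm:convex_geometry} only bounds the number of \emph{surprising} queries (those contradicting the current expected-rank order) by $O(n\log n)$; it says nothing about the non-surprising queries, which is exactly where an $O(w + n\log n)$ total must come from. Moreover, $S_{E',G}$ changes as $E'$ grows, so ``surprising with respect to $S_{E',G}$'' and ``inverted relative to $\sigma$'' drift apart over the course of the algorithm, and the clean charging you gesture at ($O(n)$ to vertices, the rest to the $O(w)$ inverted edges) is not established. It is also worth noting that Huang et al.'s convex-geometry machinery by itself yields only $\tilde O(n^{1.5})$ queries for worst-case generalized sorting — it is a tool for making progress when predictions are absent or poor, and leaning on it here is somewhat orthogonal to the predictions setting. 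The proof in \cite{lu2021generalized} is a direct combinatorial charging argument that works with $\tilde E$ itself rather than a derived total order and does not pass through $\Sigma(G,E')$ at all; your sketch is a different and, as written, incomplete approach.
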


Our algorithm, which we call \Call{SparseGeneralizedSort}{}, works as follows. Set a parameter $a$ to be $(1 + c)\sqrt{m/n}$ for some sufficiently large positive constant $c$. We repeatedly apply the following procedure until we have found the true ordering.

\begin{enumerate}
    \item Let $E'$ be the current set of queried edges and $\Sigma(G, E')$ be the set of compatible permutations. If $|\Sigma(G, E')| = 1$ then there is only one option for the permutation, which we return.
    \item Else, choose $\sigma$ as the permutation of $[n]$ ordered where $u$ occurs before $v$ if $S_{E', G}(u) \le S_{E', G}(v).$ We break ties arbitrarily.
    \item Let $\tilde{E}$ be our ``approximation'' for the true directed graph $\vec{E}$, where $(u, v) \in \tilde{E}$ if $u$ comes before $v$ in the $\sigma$ order.
    \item Query $a$ random edges from $E$.
    \item If all $a$ edges match the same direction as our approximation $\tilde{E}$, we use the algorithm of Theorem \ref{thm:predictions} with the parameter $w$ set to $\sqrt{m/n} \cdot \log n$. If not, we add the mispredicted edges and their true directions to $E'$, and we return to the first step.
\end{enumerate}

\begin{theorem}
    \Call{SparseGeneralizedSort}{} makes at most $O(\sqrt{mn}\log n)$ queries and successfully identifies the true order of the vertices with high probability.
\end{theorem}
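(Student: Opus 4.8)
The plan is to split the queries into two pools: those spent in \textbf{sampling rounds} (iterations that end in step 5 with at least one sampled edge disagreeing with $\tilde{E}$, so that we add mispredicted edges to $E'$ and loop), and those spent in the single \textbf{prediction round} that ends by invoking \Call{PredictionSort}{}. The backbone of the analysis is a deterministic potential argument: every mispredicted edge we discover multiplies $|\Sigma(G, E')|$ by a constant factor less than $1$.

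\textbf{Bounding the sampling rounds.} First I would observe that, because $\sigma$ orders the vertices by the key $u \mapsto S_{E', G}(u)$, every edge of $\tilde{E}$ points from a smaller key to a larger key; hence an edge whose true direction is $v \to u$ with $(u, v) \in \tilde{E}$ is exactly an edge whose true orientation satisfies $S_{E', G}(v) \ge S_{E', G}(u)$. By Theorem \ref{thm:convex_geometry}, adding any such edge to $E'$ multiplies $|\Sigma(G, E')|$ by at most $1 - 1/e$, and adding the remaining mispredicted edges of the round only shrinks it further. Since $|\Sigma(G, \emptyset)| = n!$ and $|\Sigma(G, E')| \ge 1$ at all times (the true permutation is always compatible), there can be at most $O(\log(n!)) = O(n \log n)$ rounds that discover a mispredicted edge, and these are precisely the rounds that do not call \Call{PredictionSort}{}. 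Each such round spends $a = (1 + c)\sqrt{m/n}$ queries, so, using the identity $n \cdot \sqrt{m/n} = \sqrt{mn}$, all sampling rounds together use $O(n \log n) \cdot O(\sqrt{m/n}) = O(\sqrt{mn}\,\log n)$ queries; this part of the bound is deterministic. (Step 1 can also end the loop once $|\Sigma(G, E')| = 1$, which is trivially correct and only helps.)

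\textbf{Bounding the prediction round.} Here the crux is a concentration estimate: if, at the start of some round, $\tilde{E}$ disagrees with the true directions on more than $w$ of the $m$ edges, then, since the $a$ sampled edges are uniform in $E$, the probability that \emph{all} of them agree with $\tilde{E}$ is at most $(1 - w/m)^a \le e^{-a w/m}$, which is $1/\poly(n)$ once $c$ (and hence the effective value of $w$ handed to \Call{PredictionSort}{}) is chosen appropriately. A union bound over the $O(n \log n)$ rounds then shows that, with high probability, \Call{PredictionSort}{} is invoked only when $\tilde{E}$ has at most $w$ mispredictions; on that event Theorem \ref{thm:predictions} guarantees that \Call{PredictionSort}{$G, \tilde{E}, w$} outputs the true directed graph with high probability using $O(w + n \log n) = O(\sqrt{mn}\,\log n)$ queries (using $m \ge n - 1$), and the algorithm halts with the correct order. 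A final union bound over the concentration-failure event and the \Call{PredictionSort}{}-failure event then yields overall correctness with high probability together with the claimed $O(\sqrt{mn}\,\log n)$ query count. (As in \cite{huang2011algorithms}, only query complexity is counted; the computation of the $S_{E', G}$ values is ignored.)

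\textbf{Main obstacle.} I expect the delicate point to be calibrating $a$, the misprediction threshold, and \Call{PredictionSort}{}'s budget $w$ simultaneously: $a$ must be small enough that $O(n \log n)$ sampling rounds cost only $O(\sqrt{mn}\,\log n)$, yet large enough that ``all $a$ samples agree with $\tilde{E}$'' certifies few enough mispredictions that \Call{PredictionSort}{} with that $w$ both succeeds (Theorem \ref{thm:predictions} applies) and stays within an $O(\sqrt{mn}\,\log n)$ budget. A related subtlety is the rare event that \Call{PredictionSort}{} is invoked on a prediction with more than $w$ mispredictions: one must check that this event can be absorbed into the overall failure probability, or else that such an invocation still makes measurable progress — e.g., that it reveals an edge oriented against the $S_{E', G}$-order and hence forces another $(1-1/e)$ contraction of $|\Sigma(G, E')|$ via Theorem \ref{thm:convex_geometry} — so that re-entering the loop still respects the query budget.
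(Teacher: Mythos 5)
Your proposal follows essentially the same route as the paper: you bound the sampling rounds by $O(n \log n)$ via Theorem~\ref{thm:convex_geometry} and the shrinking of $|\Sigma(G,E')|$ from $n!$ down to $1$, you use the estimate $(1-w/m)^a \le e^{-aw/m}$ to argue that the prediction-sorting subroutine is invoked only when $\tilde{E}$ has at most $w$ mispredictions, and you union bound over the $O(n\log n)$ rounds and over the subroutine's own failure probability. Two small remarks. First, the calibration concern you flag as the ``main obstacle'' is legitimate: with the paper's stated parameters $a = (1+c)\sqrt{m/n}$ and $w = \sqrt{m/n}\cdot\log n$ one gets $aw/m = (1+c)(\log n)/n$, which is far from $\Omega(\log n)$, so the concentration bound as literally stated does not give a $1/\poly(n)$ failure probability; the intended threshold must be $w = \Theta(\sqrt{mn}\log n)$, which still keeps the prediction-sort budget $O(w+n\log n) = O(\sqrt{mn}\log n)$. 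Second, your closing worry about ``re-entering the loop'' after a bad invocation of the prediction-sorting subroutine is moot: once that subroutine is called the algorithm returns its output and halts, so a bad invocation is simply a failure event to be absorbed into the union bound, exactly as you first proposed.
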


\begin{proof}
    To analyze the algorithm, we must bound the number of queries that it performs, and we must also bound the probability that it fails. Note that there are two ways the algorithm can terminate, either by achieving $|\Sigma(G, E')| = 1$ in Step $1$, or by invoking the algorithm of Theorem \ref{thm:predictions} in Step $5$. (In the latter case, our algorithm terminates regardless of whether the algorithm from Theorem \ref{thm:predictions} successfully finds the true order or not.)

    
    First, we bound the number of queries. Note that the only queries we make are at most the $a$ random queries in each loop plus the $O(n \log n + w)$ queries made by a single application of Theorem \ref{thm:predictions}. Hence, if we repeat this procedure at most $k$ times, we make at most $O(k \cdot a + n \log n + w)$ total queries.
    
    Now, each time we repeat this procedure, this means that we found an edge that went in the opposite direction of our approximation $\tilde{E}$, meaning that we found an edge $u \to v$ such that $S_{E', G}(u) \ge S_{E', G}(v).$ Then, by Theorem \ref{thm:convex_geometry}, $|\Sigma(G, E' \cup (u, v))| \le \left(1 - \frac{1}{e}\right) \cdot |\Sigma(G, E')|.$ Thus, each iteration of the procedure decreases the number of compatible edges by at least a factor of $1 - \frac{1}{e}.$ Since we start off with $n!$ compatible permutations, after $k$ iterations, we have at most $n! \cdot \left(1 - \frac{1}{e}\right)^k$ remaining permutations. So after $k = O(n \log n)$ iterations, we are guaranteed to be down to at most $1$ permutation, ensuring that Step 1 terminates the procedure. Therefore, we make at most $O(n \log n \cdot a + w) = O(\sqrt{m n} \cdot \log n)$ total queries.
    
    Finally, we show that that the algorithm succeeds with high probability. Note that there are two ways that the algorithm could fail: the first is that Step $5$ invokes the algorithm from Theorem \ref{thm:predictions} even though there are more than $w$ mispredicted edges in $\tilde{E}$; and the second is that Step $5$ correctly invokes the algorithm from Theorem \ref{thm:predictions}, but that algorithm fails. By Theorem \ref{thm:predictions}, the latter case happens with low probability, and thus our task is to bound the probability of the first case happening. 
    
    Suppose that for some given $\tilde{E}$, the number of incorrect edges is more than $w = \sqrt{m/n} \cdot \log n$. Then, the probability of a random edge being incorrect is at least $\log n/\sqrt{m/n}$. So, if we sample $a$ edges, at least one of the edges will be wrong with probability $$1 - \left(1 - \frac{\log n}{\sqrt{m/n}}\right)^{a} \ge 1 - e^{- \log n \cdot a/\sqrt{m/n}} = 1 - n^{-(1 + c)}.$$ Therefore, at each time we query $a$ random edges from $E$, with failure probability $n^{-(1 + c)}$, we either find an incorrect direction edge in $\tilde{E}$ or the number of incorrect direction edges in $\tilde{E}$ is at most $w$. Since we only repeat this loop at most $n \log n$ times, the probability that we ever incorrectly invoke the algorithm of Theorem \ref{thm:predictions} is at most $(n \log n)/n^{(1 + c)} = 1/\poly(n)$.
\end{proof}

\begin{remark}
    Unlike the stochastic sorting algorithm, it is not immediately clear how to perform \Call{SparseGeneralizedSort}{} in polynomial time, since the average rank $S_{E', G}(v)$ is difficult to compute. We briefly note, however, that $S_{E', G}(v)$ can be approximated with $\pm O(1)$ error in polynomial time by sampling from the polytope in $\mathbb{R}^n$ with facets created by $x_i \le x_j$ if we know that $i \prec j$, for each pair $(i, j)$ (see Lemma 3.1 of \cite{huang2011algorithms} as well as discussion in \cite{dyer1991sampling} for more details). Moreover, these approximations suffice for our purposes, since it is known (see Lemma 2.1 of \cite{huang2011algorithms}) that Theorem \ref{thm:convex_geometry} still holds even if $S_{E', G}(u) \ge S_{E', G}(v) - O(1)$ (although the modified theorem now has a different constant than $\left(1 - \frac{1}{e}\right)$).
\end{remark}

\section*{Acknowledgments}

S. Narayanan is funded by an NSF GRFP Fellowship and a Simons Investigator Award. W. Kuszmaul is funded by a Fannie \& John Hertz Foundation Fellowship; by an NSF GRFP Fellowship; and by the United States Air Force Research Laboratory and the United States Air Force Artificial Intelligence Accelerator and was accomplished under Cooperative Agreement Number FA8750-19-2-1000. The views and conclusions contained in this document are those of the authors and should not be interpreted as representing the official policies, either expressed or implied, of the United States Air Force or the U.S. Government. The U.S. Government is authorized to reproduce and distribute reprints for Government purposes notwithstanding any copyright notation herein

The authors would like to thank Nicole Wein for several helpful conversations during the start of this project.

\bibliography{references}

\appendix

\clearpage
\section{Pseudocode} \label{sec:Pseudocode}

In this Appendix, we provide pseudocode for both of our algorithms. Algorithms \ref{Findx1}, \ref{CreateLevel}, \ref{Increment}, \ref{Find}, and \ref{StochasticSort} comprise the \Call{StochasticSort} algorithm, with Algorithm \ref{StochasticSort} our main routine. Algorithm \ref{GeneralSort} implements the \Call{SparsGeneralizedSort} algorithm.

As a notation, we will use $\Call{Query}{u, v}$ to be the comparison function that takes two vertices $u, v$ connected by an edge, and returns $1$ if $u \prec v$ and $0$ if $v \prec u$.

\begin{algorithm}
\caption{}
\begin{algorithmic}[1]
\Procedure{Findx1}{} \Comment{Find the first vertex $x_1$}
\State $v_0 = $ NULL, $S = \{\}$
\State Pick arbitrary edge $(u, v)$
\If{$\Call{Query}{u, v} = 1$} \Comment{Means that $u$ comes before $v$}
    \State $v_0 \leftarrow u,$ $S \leftarrow \{v\}$
\Else
    \State $v_0 \leftarrow v,$ $S \leftarrow \{u\}$
\EndIf
\While{$\exists u$ such that $(u, v_0) \in E$ and $u \not\in S$}
    \If{$\Call{Query}{u, v_0} = 1$}
        \State $S \leftarrow S \cup \{v_0\}$
        \State $v_0 \leftarrow u$
    \Else
        \State $S \leftarrow S \cup \{u\}$
    \EndIf
\EndWhile
\State \textbf{Return} $v_0$
\EndProcedure
\end{algorithmic}
\label{Findx1}
\end{algorithm}

\begin{algorithm}
\caption{}
\begin{algorithmic}[1]
\Procedure{CreateLevel}{$i$} \Comment{Create level $L_i$ and all lower levels}
\State $L_i \leftarrow L_{i+1}$
\For{each $v$ in $L_{i+1}$}
    \State elim[$v$] $\leftarrow$ NULL \Comment{Recreating levels, so $v$ may become unblocked}
    \For{each $u$ in $L_{i+c}$}
        \If{$(u, v) \in E_i$}
            \If{$\Call{Query}{u, v} = 1$} \Comment{Means that $u$ comes before $v$}
                \State $L_i \leftarrow L_i \backslash \{v\}$ \Comment{Remove $v$ from $L_i$, since it is blocked by $u$}
                \State elim[$v$] $\leftarrow$ $u$ \Comment{keep track of which vertex blocked $v$}
            \EndIf
        \EndIf
    \EndFor
\EndFor
\If{$i > 1$}
    \State \Call{CreateLevel}{$i-1$} \Comment{recursively do lower levels}
\EndIf
\EndProcedure
\end{algorithmic}
\label{CreateLevel}
\end{algorithm}

\begin{algorithm}
\caption{}
\begin{algorithmic}[1]
\Procedure{Increment}{$\ell$} \Comment{Do after current lowest vertex $x_\ell$ found}
\For{$i = 1$ to $q+c$}
    \State $L_i = L_i \backslash \{x_\ell\}$ \Comment{Remove $x_\ell$ from all levels}
\EndFor
\For{each $v$ such that elim[$v$]$= x_\ell$}
    \State elim[$v$] $\leftarrow$ NULL \Comment{$x_{\ell}$ is discovered, so $v$ no longer blocked by $x_{\ell}$}
    \State Let $i$ be the lowest level containing $v$
    \While{$i > 1$}
        \For{each $u$ in $L_{i+c}$}
            \If{$(u, v) \in E_i$}
                \If{$\Call{Query}{u, v} = 1$} \Comment{Means that $u$ comes before $v$}
                    \State elim[$v$] $\leftarrow$ $u$ \Comment{keep track of which vertex eliminated $v$}
                    \State \textbf{break} \Comment{End procedure for $v$}
                \EndIf
            \EndIf
        \EndFor
        \State $i \leftarrow i-1$ \Comment{Decrement level of $v$ if $v$ isn't blocked by any $u$}
        \State $L_{i-1} \leftarrow L_{i-1} \cup \{v\}$ \Comment{Add $v$ to level $L_{i-1}$}
    \EndWhile
\EndFor
\EndProcedure
\end{algorithmic} 
\label{Increment}
\end{algorithm}

\begin{algorithm}
\caption{}
\begin{algorithmic}[1]
\Procedure{Find}{$\ell$} \Comment{Find the next vertex in the order, $x_{\ell}$} 
\State $S = \emptyset$ \Comment{$S$ is the set of candidates that could be $x_{\ell}$}
\For{each $v \not\in \{x_1, \dots, x_{\ell-1}\}$}
    \If{$v \in L_1$ \textbf{and} $(x_{\ell-1}, v) \in E$}
        \State $S \leftarrow S \cup \{v\}$ \Comment{Add $v$ to set of candidates}
    \EndIf
\EndFor

\For{$i = 1$ to $q$}
    \For{each $v$ in $S$}
        \For{each $u$ in $L_i$} \Comment{Check if something in $i$th level has edge to $v$}
            \If{$(u, v) \in E$}
                \If{$\Call{Query}{u, v} = 1$}
                    \State $S = S \backslash \{v\}$ \Comment{$v$ is no longer a candidate}
                \EndIf
            \EndIf
            \If{$|S| = 1$} \Comment{once we've reached a single candidate}
                \State \textbf{Return} $S[0]$ \Comment{Return the only element in $S$, and break}
            \EndIf
        \EndFor
    \EndFor
\EndFor
\EndProcedure
\end{algorithmic}
\label{Find}
\end{algorithm}

\begin{algorithm}
\caption{}
\begin{algorithmic}[1]
\Procedure{StochasticSort}{$G = (V, E)$} \Comment{Recover $x_1, \dots, x_{n/2}.$ $n, p$ are known} 
\State Initialize all levels $L_1, \dots, L_{q+c}$ as $[n]$
\State Initialize elim[$v$] = NULL for all $v \in [n]$
\State \Call{CreateLevel}{$q$}
\For{$\ell = 1$ to $n/2$} \Comment{Recover $x_\ell$ in order}
    \If{$\ell = 1$}
        \State $x_\ell =$ \Call{Findx1}{} \Comment{Use different procedure for finding first vertex}
    \Else
        \State $x_\ell =$ \Call{Find}{$\ell$}
    \EndIf
    \State \Call{Increment}{$\ell$}
    \If{$\ell = \lfloor \ell' \cdot p^{-1}/16 \rfloor$ for some integer $\ell'$}
        \State \Call{CreateLevel}{$1+\nu_2(\ell')$} \Comment{$\nu_2(\ell'):$ largest exponent of $2$ dividing $\ell'$}
    \EndIf
\EndFor
\EndProcedure
\end{algorithmic}
\label{StochasticSort}
\end{algorithm}

\begin{algorithm}
\caption{}
\begin{algorithmic}[1]
\Procedure{SparseGeneralSort}{$G = (V, E)$} \Comment{Recover sorted order for arbitrary graph. $n = |V|, m = |E|$.}
\State $a \leftarrow 2 \sqrt{m/n}$, $w \leftarrow \sqrt{m/n} \cdot \log n$
\State $E' = \emptyset$ \Comment{$E'$ is current set of queried edges, currently empty}
\While{$|\Sigma(G, E')| \ge 2$, where $\Sigma(G, E')$ is the set of permutations that do not violate any of the edge orders in $E'$ } 
    \For{each vertex $v$}
        \State $S_{E', G}(v) :=$ average rank of $v$ among all permutations in $\Sigma(G, E')$.
    \EndFor
    \State $\tilde{E} \leftarrow \emptyset$ \Comment{$\tilde{E}$ will be the predicted edge orders}
    \For{each edge $(u, v) \in E$}
        \If{$S_{E', G}(u) \le S_{E', G}(v)$}
            \State $\tilde{E} \leftarrow \tilde{E} \cup (u, v)$
        \Else
            \State $\tilde{E} \leftarrow \tilde{E} \cup (v, u)$
        \EndIf
    \EndFor
    \State \textbf{boolean} order $\leftarrow$ TRUE \Comment{Checks if the $a$ random edges are predicted correctly by $\tilde{E}$}
    \For{$i$ from $1$ to $a$}
        \State $(u_i, v_i) \sim E$ uniformly at random \Comment{WLOG $(u_i, v_i)$ is ordered such that $(u_i, v_i) \in \tilde{E}$}

        \If{$\Call{Query}{u_i, v_i} = 0$} \Comment{Means that $v_i \prec u_i$ in the true order}
            \State $E' \leftarrow E' \cup (v_i, u_i)$
            \State order $\leftarrow$ FALSE
        \EndIf
    \EndFor
    \If{order $=$ TRUE}
        \State \textbf{Return} \Call{PredictionSort}{$G, \tilde{E}, w$}
    \EndIf
\EndWhile
\State \textbf{Return} the unique permutation in $\Sigma(G, E')$
\EndProcedure
\end{algorithmic} 
\label{GeneralSort}
\end{algorithm}


\end{document}